\documentclass{jfm}

\usepackage{amsmath,amssymb}
\usepackage{bm}
\usepackage{graphicx}
\usepackage{newtxtext}
\usepackage{newtxmath}
\usepackage{natbib}
\usepackage{hyperref}

\hypersetup{
    colorlinks = true,
    urlcolor   = blue,
    citecolor  = blue,
}

\newtheorem{theorem}{Theorem}

\newtheorem{lemma}[theorem]{Lemma}
\newtheorem{corollary}[theorem]{Corollary}
\newtheorem{proposition}[theorem]{Proposition}
\newtheorem{definition}[theorem]{Definition}
\newtheorem{remark}[theorem]{Remark}

\newtheorem{conjecture}[theorem]{Conjecture}

\newcommand{\bu}{\boldsymbol{u}}
\newcommand{\bv}{\boldsymbol{v}}
\newcommand{\bU}{\boldsymbol{U}}
\newcommand{\bw}{\boldsymbol{w}}
\newcommand{\bx}{\boldsymbol{x}}
\newcommand{\bk}{\boldsymbol{k}}
\newcommand{\bq}{\boldsymbol{q}}
\newcommand{\bn}{\hat{\boldsymbol{n}}}
\newcommand{\T}{^{\mathrm{T}}}
\newcommand{\dd}{\,\mathrm{d}}
\newcommand{\e}{\mathrm{e}}
\newcommand{\im}{\mathrm{i}}
\newcommand{\Qstar}{Q_{*}}
\newcommand{\Gf}{G_{\!f}}

\title{A transport geometry of acoustic analogies:\\
exact holonomy of source re-attribution\\ and its observable consequences}

\author{
Sparsh Sharma\aff{1}
  \corresp{\email{sparsh.sharma@dlr.de}},
}

\affiliation{\aff{1}German Aerospace Center (DLR), 38108 Braunschweig, Germany}

\begin{document}

\maketitle

\begin{abstract}
The source term of an acoustic analogy is not unique: rearrangements of
the Navier--Stokes equations about different base flows, or in different
dependent variables, attribute the same far field to different
``sources''.  This non-uniqueness has been recognised since the earliest
days of the subject but has never been given a quantitative structure.
We supply one.  Acoustic analogies are organised into a fibre-like family
over the space of effective media, and the operation of passing between
analogies---re-gauging---is formalised as a transport by the unique
frequency-preserving, rotation-free linear space--time maps that reduce
each convected wave operator to the d'Alembertian.  Three exact results
follow.  First, the classical convected analogies are not closed under
this transport: two successive uniform-flow re-gaugings land on a moving
medium with an anisotropic sound-speed tensor, forcing precisely the
generalised (anisotropic) base media introduced by Goldstein
(\emph{J.~Fluid Mech.}~488, 2003).  Second, the discrete holonomy of
transport is computed in closed form: it consists of a frequency dilation
$\rho$, given exactly by
$\rho^{-2}=[(c^{2}-\bu_{1}\!\cdot\!\bu_{2})^{2}-c^{2}|\bu_{1}|^{2}]
/[c^{2}(c^{2}-|\bu_{1}|^{2})]$, and a rotation that vanishes to fourth
order in Mach number; transport degenerates on an exact ``sonic horizon''
in analogy space, located for equal collinear steps at
$M=(\sqrt5-1)/2$.  Third, in the continuum limit the boost sector is
flat, and all curvature is localised in the anisotropy directions of the
effective medium, where it equals the commutator of sound-speed-tensor
increments.  These results carry an operational consequence.  Every exact
analogy returns the identical true far field, so the geometry describes
no physical change of the radiated sound; what it quantifies is the
discrepancy incurred off shell, when one and the same \emph{modelled}
source spectrum is transported through different analogies, as it
routinely is in hybrid prediction.  For two pipelines that reuse a common
model across analogies related by a loop, the far-field predictions differ
by an exact, parameter-free rigid spectral dilation by $\rho$ (reaching
$36\%$ for two Mach-$0.45$ re-gauging steps) and a rigid directivity
rotation---an inter-method bias, not a change in the physics.  An exact
far-field law for anisotropic generalised media---sampling of the source
spectrum on sonic ellipsoids along group rays---extends the result to open
paths.  All identities are verified symbolically and numerically; the
verification scripts are provided as supplementary material.
\end{abstract}

\noindent\textbf{Key words:} aeroacoustics, acoustic analogy

\section{Introduction}\label{sec:intro}

An acoustic analogy is an exact rearrangement of the Navier--Stokes
equations into the form
\begin{equation}
  L\,\varphi \;=\; s ,
  \label{eq:analogy-schematic}
\end{equation}
in which $L$ is a linear wave-type operator, $\varphi$ is a dependent
variable that reduces to the acoustic pressure (or density) fluctuation
in the far field, and everything else is declared a ``source'' $s$.  The
construction is due to \citet{lighthill1952}: with $L$ the
quiescent-medium wave operator and $\varphi=\rho'$, the whole of
nonlinear gas dynamics is compressed into the quadrupole source
$\partial_i\partial_jT_{ij}$, and dimensional analysis of that source
yields the celebrated eighth-power law
\citep{lighthill1952,lighthill1954}.  Because the rearrangement is an
identity, nothing is lost until the source is modelled; and because the
propagation problem is then exactly solvable, the modelling effort can be
concentrated entirely on the source statistics, for which the theory of
turbulence supplies structure \citep{lighthill1954,ribner1962}.
\citet{crow1970} subsequently placed the underlying compact-source
asymptotics on a systematic singular-perturbation footing;
\citet{curle1955} and \citet{fwh1969} extended the identity to flows
containing stationary and moving surfaces; and the framework in this form
became, and remains, the workhorse of aerodynamic noise prediction
\citep{crighton1975,goldstein1976}.

The freedom that Lighthill exploited---declare part of the dynamics
`propagation' and call the remainder `source'---can, however, be
exercised in infinitely many ways, and the subsequent literature
exercised it thoroughly.  One line of development moved mean-flow effects
out of the source and into the operator: building on the shear-flow
propagation analysis of \citet{pridmorebrown1958}, the analogies of
\citet{phillips1960} and \citet{lilley1974} transferred convection and
refraction to variable-coefficient wave operators, at the price of source
terms whose decomposition proved delicate
\citep{goldstein2001,musafir2007}; the propagation theory for moving
streams is developed at length by \citet{dowling1978}, and in the
textbook accounts of \citet{goldstein1976} and \citet{morseingard1968}.
A second line reorganised the source around vorticity, on the physical
intuition that vortical motion is the engine of sound generation: the
vortex-sound formulations of \citet{powell1964}, \citet{howe1975} and
\citet{mohring1978}, and the dilatation form of \citet{ribner1962}.  A
third changed the dependent variables altogether, notably to the momentum
potential and fluctuating total enthalpy of \citet{doak1989,doak1998}.
A fourth, driven by computation, introduced flow splittings designed to
produce well-behaved source terms for hybrid methods
\citep{hardin1994,ewert2003}, statistical source models for
Reynolds-averaged prediction \citep{tamauriault1999}, and source fields
educed directly from unsteady simulation
\citep{colonius1997,freund2001,bogey2002}; this practice is surveyed by
\citet{coloniuslele2004} and \citet{wangfreundlele2006}.  The development
culminates in the generalised acoustic analogy of \citet{goldstein2003}:
the equations of motion can be rearranged as linear perturbation
equations about an essentially \emph{arbitrary} base flow, each choice
producing its own operator, its own dependent variables and its own
residual stresses, and the resulting family has been applied
quantitatively to jet noise with base-flow statistics taken from
modelling or from large-eddy simulation
\citep{goldsteinleib2008,karabasov2010}.  Every one of these
constructions is exact, and every one attributes the same radiated field
to a different source.

That the attribution is not unique was recognised at once and has been
debated ever since.  \citet{fwh1969} and \citet{mohring1978} exhibited
distinct equivalent source distributions producing identical far fields.
\citet{ffowcswilliams1982} made the point programmatically, in a paper
whose title---`fact and fiction'---set the register of the debate, and
\citet{goldstein1984} surveyed the competing formulations.
\citet{tam1998} identified the built-in non-uniqueness as a fundamental
obstacle to declaring any source description `correct', and argued that
distinct components of jet noise may demand distinct descriptions, a
two-source position for which \citet{tam2008} later assembled a body of
experimental evidence.  \citet{goldstein2005} drew the structural
conclusion: since the source of the generalised analogy depends on an
arbitrary choice of base flow, the true sources of aerodynamic sound are
unlikely to be identifiable in any realistic turbulent motion.  The
dependence has meanwhile been exhibited concretely in computation:
\citet{samanta2006} drove several analogies with identical
direct-simulation data and obtained identical far fields, which then
degraded in analogy-dependent ways once errors were introduced into the
inputs; and \citet{sinayoko2011} constructed silent base flows and the
corresponding source fields, demonstrating that the source depends on
both the base flow and the choice of dependent variables.  The modern
wave-packet literature \citep{jordancolonius2013,cavalieri2019} partially
sidesteps the debate by modelling coherent structures educed from data
rather than analogy-defined sources; yet even there a propagation
operator must be chosen, and the model must be transported into it.

A parallel and largely separate literature approached the same ambiguity
from the observability side.  For a \emph{fixed} operator, the sources
producing a given exterior field form an affine family differing by
non-radiating distributions; the structure of such distributions is
classical
\citep{devaneywolf1973,bleistein1977,porterdevaney1982,marengo2000},
entered acoustics with \citet{jessel1973} and \citet{kempton1976}---who
observed that the ambiguity is precisely what active noise control
exploits---and was developed in the aeroacoustic setting by
\citet{doak1988} and, comprehensively, by \citet{musafir2013}.
Near-field acoustic holography realises the practical counterpart: only
the radiating part of a source is reconstructible from exterior data
\citep{maynard1985}.  The lesson of this strand is that the fibre of
equivalent sources over a \emph{single} analogy is well understood; what
has remained unformalised is the relation \emph{between} the fibres of
different analogies.

A third strand supplies the raw material for such a formalisation.  For
uniformly moving media, the reduction of the convected wave equation to
the classical one is itself classical: the Prandtl--Glauert--Lorentz
transformation, given a definitive treatment by \citet{taylor1978} with
wind-tunnel/flight-test equivalence in view, and revisited by
\citet{chapman2000}, who derived the associated similarity variables and
emphasised that the reduction is of Lorentz type---it does not correspond
to a Galilean change of frame.  That the convected operator carries a
Lorentzian geometry, an acoustic metric with the sound speed in the role
of the light speed, is the founding observation of the analogue-gravity
programme \citep{unruh1981,visser1998,barcelo2011}, and has been
advocated within aeroacoustics through the acoustic space--time of
\citet{gregory2015}.  In all of this work, however, the transformations
are employed as isolated devices: one map solves one problem.  Their
composition properties---the group they generate, whether the family of
operators they act on is closed under composition, and what happens to a
\emph{source model} carried by them from one analogy to another---have not, to our
knowledge, been investigated.  It is exactly in these composition
properties, we shall show, that the quantitative structure of the
source-attribution ambiguity resides.

What is missing from all three strands, then, is not the observation of
non-uniqueness but its quantitative structure: a description of the space
of analogies, of the canonical maps between them, and of the invariants
and obstructions of those maps.  The present paper supplies this
structure at the level at which it can be made exact---constant-%
coefficient effective media---and shows that it already has observable
consequences for the assembly of noise-prediction pipelines.  Two
technical obstacles stand in the way of such a structure, and both are
resolved below.  The first is closure: composed uniform-flow reductions
leave the classical operator family, so the space on which transport
acts must first be identified---it turns out to be forced
(theorem~\ref{thm:nonclosure}).  The second is observability at the
endpoints: the media generated by transport are anisotropic, and in an
anisotropic moving medium the far field is organised along group rays
on a sonic ellipsoid rather than along wavevector directions, so a
transported source model can be confronted with far-field data only
once an exactly equivariant radiation law is available for such media;
we derive that law in closed form (lemma~\ref{lem:anisofar}).
Specifically, we establish the following.
\begin{enumerate}
\item[(i)] At the exact level all analogies are far-field equivalent,
trivially (proposition~\ref{prop:trivial}), and the difference between
any two source functionals sharing a dependent variable is a computable
cocycle, purely kinematic for the Lighthill/convected-Lighthill pair
(lemmas~\ref{lem:cocycle} and~\ref{lem:kinematic}): re-attribution is
bookkeeping, not dynamics.
\item[(ii)] The linear space--time maps by which fixed-frequency source
models can be carried between analogies are characterised
(lemma~\ref{lem:block}); each admissible effective medium possesses a
unique rotation-free reduction to the d'Alembertian
(proposition~\ref{prop:w3}), whose uniform-flow case is the
Prandtl--Glauert--Lorentz map and whose space--time factorisation is
Galilean composed with Lorentz (lemma~\ref{lem:factorisation}).
\item[(iii)] The classical convected family is \emph{not closed} under
composed transport (theorem~\ref{thm:nonclosure}): two uniform
re-gaugings generate a moving medium with an anisotropic sound-speed
tensor, so that the generalised media of \citet{goldstein2003} are
forced as a closure requirement rather than adopted as a modelling
convenience.
\item[(iv)] The discrete holonomy of transport is computed in closed
form: a frequency dilation $\rho$, given exactly by
theorem~\ref{thm:rho}, together with a rotation only at fourth order in
Mach number (proposition~\ref{prop:M4}); the composition law of
re-gauging is a hybrid of Galilean and Einstein velocity addition,
coinciding with the latter exactly for perpendicular steps
(corollary~\ref{cor:einstein}).
\item[(v)] Transport degenerates on an exact sonic horizon in analogy
space, $M_1M_2+M_1=1$ for collinear steps and the inverse golden ratio
for equal steps (theorem~\ref{thm:horizon}).
\item[(vi)] In the continuum limit the boost sector is exactly flat, and
all curvature is localised in the anisotropy directions of the effective
medium, where it equals the commutator of sound-speed-tensor increments
(theorem~\ref{thm:curvature}).
\item[(vii)] Acting on statistical source models, transport loops produce
exactly two observable channels---rigid spectral dilation by $\rho$ and
rigid directivity rotation---separated exactly between boost and
anisotropy loops (theorem~\ref{thm:pipelines},
corollary~\ref{cor:complementarity}); the far field of an anisotropic
generalised medium is obtained in closed form, sampling the source
spectrum on a sonic ellipsoid along group rays
(lemma~\ref{lem:anisofar}), which extends the pipeline law to open paths
(corollary~\ref{cor:openpath}).  Representative magnitudes are large: a
$36\%$ systematic spectral misalignment for two Mach-$0.45$ re-gauging
steps.
\end{enumerate}

The paper follows this sequence: \S\ref{sec:exact} treats the exact
level; \S\ref{sec:freq} constructs the frequency-respecting maps;
\S\ref{sec:transport} composes them and establishes non-closure and
admissibility; \S\ref{sec:holonomy} computes the exact holonomy and the
horizon; \S\ref{sec:curvature} passes to the continuum limit;
\S\ref{sec:stat} derives the observable consequences; and
\S\ref{sec:discussion} returns to the true-source debate and lists the
open problems.  Every identity asserted in the paper has been verified
symbolically or to machine precision, independently of its derivation;
the verification record is summarised in
appendix~\ref{app:verification} and the scripts are supplied as
supplementary material.

Two remarks on scope.  First, all exact computations below concern
spatially uniform (constant-coefficient) effective media; genuinely
sheared base flows, with variable coefficients, are outside the exact
theory and are discussed as the principal open problem.  The uniform
sector is nevertheless where the classical debate lives---Lighthill
versus convected Lighthill is precisely a uniform re-gauging---and it is
rich enough to force the anisotropic generalisation and to carry
nontrivial holonomy.  Second, the word ``gauge'' is used below only in
the neutral sense of a choice within an equivalence class; no field-%
theoretic apparatus is presupposed.  The geometry that appears is the
elementary geometry of matrix groups.

\section{Acoustic analogies as exact identities}\label{sec:exact}

\subsection{Setting and the triviality of exact equivalence}

We work on $\mathbb{R}^{1+d}$, $d\in\{2,3\}$, with coordinates
$\xi=(t,\bx)$, and consider smooth solutions
$q=(\rho,\bu,p,\dots)$ of the compressible Navier--Stokes equations
\begin{equation}
  \partial_t \rho + \partial_j(\rho u_j) = 0, \qquad
  \partial_t(\rho u_i) + \partial_j(\rho u_i u_j) + \partial_i p
   = \partial_j \sigma_{ij},
  \label{eq:NS}
\end{equation}
together with an energy equation and equation of state, in a class with
uniform state $(\rho_0,p_0,\bu\!\to\!\bU_\infty)$ and decaying
fluctuations at spatial infinity;\ $c$ denotes the ambient sound speed
and primes denote fluctuations about the ambient state.

\begin{definition}[acoustic analogy]\label{def:analogy}
An \emph{acoustic analogy} is a triple $A=(L,\varphi,s)$ consisting of a
linear differential operator $L$ that reduces to a d'Alembertian at
spatial infinity, a map $\varphi$ from flow states to a scalar (or
vector) wave variable whose far-field trace equals the normalised
acoustic pressure fluctuation, and a source functional $s$, such that
\begin{equation}
  L\,\varphi(q) \;=\; s(q)
  \qquad\text{identically on solutions $q$ of \eqref{eq:NS}}.
  \label{eq:analogy-def}
\end{equation}
The analogy is \emph{exact} if \eqref{eq:analogy-def} holds without
approximation.
\end{definition}

The first observation is elementary but, we shall argue, is the precise
mathematical reason why the seventy-year debate about ``true sources''
could not be resolved on its own terms.

\begin{proposition}[triviality of exact far-field equivalence]
\label{prop:trivial}
Call two exact analogies \emph{on-shell far-field equivalent} if, for
every solution $q$ of \eqref{eq:NS}, the far-field traces of
$\varphi_1(q)$ and $\varphi_2(q)$ coincide.  Then \emph{all} exact
analogies are on-shell far-field equivalent: the equivalence relation is
total, and the on-shell classification of exact analogies by their far
fields is empty.
\end{proposition}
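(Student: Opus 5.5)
The proof is essentially a matter of unwinding Definition~\ref{def:analogy}, and I would present it as such. The plan is to show that the far-field trace of $\varphi_A(q)$ does not depend on the analogy $A$ at all. The definition requires that, for every exact analogy $A=(L,\varphi,s)$, the far-field trace of $\varphi(q)$ equals the normalised acoustic pressure fluctuation of $q$. Call this $\Pi(q)$, computed from $p'=p-p_0$ evaluated asymptotically and normalised by the fixed ambient state $(\rho_0,c)$. The key observation is that $\Pi(q)$ is a functional of the flow state $q$ alone: it refers to the solution of \eqref{eq:NS}, not to $L$, $\varphi$ or $s$. So for two exact analogies $A_1,A_2$ and any solution $q$, both traces equal $\Pi(q)$, and by transitivity they equal each other.

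The steps, in order, are as follows.
\begin{enumerate}
\item[(a)] Fix the meaning of ``far-field trace'' once and for all: a limit or leading-order asymptotic coefficient as $|\bx|\to\infty$ along rays (or along the relevant characteristic cone, if $\bU_\infty\neq0$). It is applied to a field defined on solutions in the stated class with uniform state at infinity and decaying fluctuations.
\item[(b)] Note that this trace operation is the same linear functional for both analogies. The ambient state and normalisation are data of the flow class, not of the analogy; the fact that each $L$ reduces to a d'Alembertian at infinity ensures the trace is taken with respect to a common asymptotic structure.
\item[(c)] Conclude that the relation is reflexive, symmetric and transitive, and that every pair of exact analogies is related. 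The relation is therefore total with a single class, and the ``classification by far fields'' has no content: the quotient is a point.
\end{enumerate}

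Exactness is what carries the argument. Because $L\varphi(q)=s(q)$ holds identically on solutions, no approximation can perturb $\varphi(q)$, so its trace is the true one.

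The only delicate point, and the one I expect to be the main obstacle, is step~(b): making sure the two traces are taken in a compatible sense when the analogies use different dependent variables. Examples are density versus pressure versus total enthalpy, or operators convected with different base velocities whose natural far-field coordinates differ by a Prandtl--Glauert-type stretching. I would handle this by insisting, as the definition does, that the trace is expressed in the physical coordinates $(t,\bx)$ and normalised to the acoustic pressure. Any analogy-specific variable change is then already absorbed into the requirement that the trace \emph{equals} $\Pi(q)$. I would also remark that the result says nothing about sources or operators individually. It is precisely because the far field is fixed by $q$ that source attribution cannot be adjudicated on shell, which motivates the off-shell analysis that follows.
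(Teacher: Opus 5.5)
Your proposal is correct and matches the paper's proof: by Definition~\ref{def:analogy}, the far-field trace of each $\varphi_k(q)$ is the physical far-field pressure of the same solution $q$, so the two traces coincide and the relation is total. Your steps (a)--(b) on making the traces compatible are harmless elaboration that the definition already absorbs, and the argument strictly uses that definitional property of $\varphi$ evaluated on a true solution rather than the identity $L\varphi(q)=s(q)$, so exactness does not actually carry it.
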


\begin{proof}
By definition~\ref{def:analogy} the far-field trace of $\varphi_k(q)$
equals the physical far-field pressure of the solution $q$ itself, for
$k=1,2$; the two traces therefore coincide for every $q$.
\end{proof}

Proposition~\ref{prop:trivial} is deliberately a tautology.  Its content
is negative and structural: \emph{no} question about exact analogies,
evaluated on true solutions, can distinguish them, so any meaningful
comparison must be made \emph{off shell}---that is, at the level at which
analogies are actually used, where a modelled or imperfectly computed
source is inserted into the solution operator of $L$
\citep{lighthill1952,tamauriault1999,goldstein2003}.  The remainder of
the paper is a theory of the maps by which source models are carried from
one analogy to another.

\subsection{The transformation law between exact analogies}

Before leaving the exact level we record the one nontrivial fact it does
contain: the closed-form law by which the source is re-attributed when
the operator changes.

\begin{lemma}[source transformation law]\label{lem:cocycle}
Let $A_1=(L_1,\varphi,s_1)$ and $A_2=(L_2,\varphi,s_2)$ be exact
analogies with the same variable map $\varphi$.  Then, on solutions of
\eqref{eq:NS},
\begin{equation}
  s_2(q)-s_1(q) \;=\; (L_2-L_1)\,\varphi(q).
  \label{eq:cocycle}
\end{equation}
\end{lemma}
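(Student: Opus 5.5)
The identity follows directly from Definition~\ref{def:analogy} by subtraction, so the plan is short. Since $A_1$ and $A_2$ are both exact analogies, \eqref{eq:analogy-def} holds for each without approximation on every solution $q$ of \eqref{eq:NS}:
\begin{equation*}
  L_1\,\varphi(q)=s_1(q), \qquad L_2\,\varphi(q)=s_2(q).
\end{equation*}
The hypothesis that matters is that both analogies use the \emph{same} variable map $\varphi$. Because of this, the two left-hand sides act on one and the same field $\varphi(q)$, not on two different wave variables.

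Subtracting the first identity from the second gives $s_2(q)-s_1(q)=L_2\varphi(q)-L_1\varphi(q)$. Both $L_k$ are linear differential operators acting on the common space of wave-variable fields, so their difference $L_2-L_1$ is well defined, and $(L_2-L_1)\varphi(q)=L_2\varphi(q)-L_1\varphi(q)$ by definition of the operator sum. This yields \eqref{eq:cocycle}.

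No step is a genuine obstacle. Two checks are needed. First, $L_1$ and $L_2$ must act on the same function space so that $L_2-L_1$ makes sense; this holds because both act on $\varphi(q)$. Second, the identity holds only on shell, so the restriction to solutions of \eqref{eq:NS} must be retained. It is also worth remarking why the result is called a cocycle: for three analogies sharing $\varphi$, the differences add consistently, since $(L_3-L_1)=(L_3-L_2)+(L_2-L_1)$. This additivity anticipates the later use of the lemma, in particular the Lighthill/convected-Lighthill case of lemma~\ref{lem:kinematic}.
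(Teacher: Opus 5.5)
Your proposal is correct and is the paper's argument: subtract the two instances of \eqref{eq:analogy-def} evaluated on the same solution $q$, and write $L_2\varphi(q)-L_1\varphi(q)=(L_2-L_1)\varphi(q)$. Your remark on the additivity $(L_3-L_1)=(L_3-L_2)+(L_2-L_1)$ is a harmless aside that the proof does not need.
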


\begin{proof}
Subtract the two instances of \eqref{eq:analogy-def}:
$s_2-s_1=L_2\varphi(q)-L_1\varphi(q)=(L_2-L_1)\varphi(q)$.
\end{proof}

Equation \eqref{eq:cocycle} looks innocent but carries the essential
physics of the non-uniqueness debate.  The operator difference $L_2-L_1$
has coefficients supported where the two base media differ---typically
throughout the flow region---and it acts there on the \emph{total} field
$\varphi(q)$, including its acoustic part.  Re-attributing the source
therefore mixes the (large) hydrodynamic and (small) acoustic content of
the near field, which is exactly the error-contamination mechanism that
\citet{goldstein2003} (his \S3.4) identified as an obstacle to two-stage
computations, and which \citet{sinayoko2011} exhibited numerically.
Lemma~\ref{lem:cocycle} makes the mechanism an identity.  It also has
a striking empirical counterpart: in the computational experiments of
\citet{samanta2006}, several analogies driven by identical
direct-simulation data returned identical far fields, while their
predictions degraded differently---in an analogy-dependent way---once
errors were introduced into the inputs.  Proposition~\ref{prop:trivial}
and lemma~\ref{lem:cocycle} are, respectively, the exact statements of
those two observations.

\subsection{Worked example: Lighthill and convected Lighthill}
\label{ssec:worked}

The oldest re-gauging in the subject is the passage from Lighthill's
analogy to its convected form, and it will serve both as an illustration
of lemma~\ref{lem:cocycle} and as the seed of the transport theory of
\S\ref{sec:transport}.  Write
$\Box_{\bv} \equiv (\partial_t+\bv\!\cdot\!\nabla)^2 - c^2\Delta$
for the wave operator convected by a constant velocity $\bv$, and
\begin{equation}
  T^{\bv}_{ij} \;\equiv\; \rho\,(u_i-v_i)(u_j-v_j)
   + \bigl(p'-c^2\rho'\bigr)\delta_{ij} - \sigma_{ij}
  \label{eq:Tv}
\end{equation}
for the corresponding stress tensor.  Lighthill's analogy is obtained by
taking $\partial_t$(continuity)$\,-\,\partial_i$(momentum$_i$) in
\eqref{eq:NS} and subtracting $c^2\Delta\rho$ from both sides:
\begin{equation}
  \Box_{0}\,\rho' \;=\; \partial_i\partial_j T^{0}_{ij}.
  \label{eq:lighthill}
\end{equation}
For the convected form, set $D_{\bU}\equiv\partial_t+\bU\!\cdot\!\nabla$
and $\bu'\equiv\bu-\bU$ for a constant $\bU$.  The continuity equation
rewrites identically as
\begin{equation}
  D_{\bU}\rho + \nabla\!\cdot\!(\rho\bu') \;=\; 0 ,
  \label{eq:cont-conv}
\end{equation}
and expanding $\rho u_iu_j=\rho(u_i'+U_i)(u_j'+U_j)$ in the momentum
equation gives, after collecting terms,
\begin{equation}
  D_{\bU}(\rho u_i') + \partial_j(\rho u_i'u_j') + \partial_i p
  \;+\; U_i\bigl[\partial_t\rho+\nabla\!\cdot\!(\rho\bu)\bigr]
  \;=\; \partial_j\sigma_{ij},
  \label{eq:mom-conv}
\end{equation}
where the bracket is the continuity residual and vanishes on shell.
Applying $D_{\bU}$ to \eqref{eq:cont-conv}, subtracting the divergence of
\eqref{eq:mom-conv}, and subtracting $c^2\Delta\rho$ yields the convected
analogy
\begin{equation}
  \Box_{\bU}\,\rho' \;=\; \partial_i\partial_j T^{\bU}_{ij}.
  \label{eq:conv-lighthill}
\end{equation}
Both \eqref{eq:lighthill} and \eqref{eq:conv-lighthill} are exact and use
the same variable $\rho'$, so lemma~\ref{lem:cocycle} applies.  In fact a
stronger, purely kinematic identity holds \emph{off shell}.

\begin{lemma}[kinematic cocycle identity]\label{lem:kinematic}
For arbitrary smooth fields $\rho,\bu,p,\sigma$ (no equations of motion
imposed), with $\mathcal{C}\equiv\partial_t\rho+\nabla\!\cdot\!(\rho\bu)$
the continuity residual,
\begin{equation}
  \partial_i\partial_j\bigl(T^{0}_{ij}-T^{\bU}_{ij}\bigr)
  \;+\; \bigl(\Box_{\bU}-\Box_{0}\bigr)\rho
  \;=\; 2\,(\bU\!\cdot\!\nabla)\,\mathcal{C}.
  \label{eq:kinematic}
\end{equation}
In particular, on any fields satisfying mass conservation alone,
$s_{\bU}=s_{0}+(\Box_{\bU}-\Box_{0})\rho'$, in agreement with
\eqref{eq:cocycle}.
\end{lemma}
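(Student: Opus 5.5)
The identity is purely algebraic in the derivatives, so the plan is to expand both terms on the left of \eqref{eq:kinematic} directly, using only that $\bU$ is constant, and never invoke \eqref{eq:NS}. The point to watch is that no equation of motion may be used; the identity must hold for arbitrary smooth $\rho,\bu,p,\sigma$.

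\emph{Step 1: the stress difference.} From \eqref{eq:Tv} the isotropic part $(p'-c^2\rho')\delta_{ij}$ and the viscous part $\sigma_{ij}$ are the same in $T^0_{ij}$ and $T^{\bU}_{ij}$, so they cancel. What remains is the Reynolds-stress difference,
\begin{equation*}
T^0_{ij}-T^{\bU}_{ij}=\rho u_iu_j-\rho(u_i-U_i)(u_j-U_j)=U_i\,\rho u_j+U_j\,\rho u_i-U_iU_j\,\rho .
\end{equation*}
Because $\bU$ is constant, $U_i$ commutes with $\partial_i$. Applying $\partial_i\partial_j$ therefore gives
\begin{equation*}
\partial_i\partial_j\bigl(T^0_{ij}-T^{\bU}_{ij}\bigr)=2(\bU\!\cdot\!\nabla)\,\nabla\!\cdot\!(\rho\bu)-(\bU\!\cdot\!\nabla)^2\rho .
\end{equation*}

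\emph{Step 2: the operator difference.} Constancy of $\bU$ also makes $\partial_t$ and $\bU\!\cdot\!\nabla$ commute. The Laplacian terms cancel, so
\begin{equation*}
\Box_{\bU}-\Box_0=(\partial_t+\bU\!\cdot\!\nabla)^2-\partial_t^2=2\,\partial_t(\bU\!\cdot\!\nabla)+(\bU\!\cdot\!\nabla)^2 ,
\end{equation*}
and this acts on $\rho$.

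\emph{Step 3: sum the two pieces.} The $(\bU\!\cdot\!\nabla)^2\rho$ terms cancel. What is left is
\begin{equation*}
2(\bU\!\cdot\!\nabla)\bigl[\partial_t\rho+\nabla\!\cdot\!(\rho\bu)\bigr]=2(\bU\!\cdot\!\nabla)\,\mathcal{C},
\end{equation*}
which is \eqref{eq:kinematic}.

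\emph{Step 4: the corollary.} Impose mass conservation only, so that $\mathcal{C}=0$. Both $\Box_{\bU}$ and $\Box_0$ contain derivatives in every term, so they annihilate the constant $\rho_0$, and $\rho$ may be replaced by $\rho'$. Writing $s_0=\partial_i\partial_jT^0_{ij}$ and $s_{\bU}=\partial_i\partial_jT^{\bU}_{ij}$, as in \eqref{eq:lighthill} and \eqref{eq:conv-lighthill}, the identity rearranges to
\begin{equation*}
s_{\bU}=s_0+(\Box_{\bU}-\Box_0)\rho' .
\end{equation*}
This is \eqref{eq:cocycle} with $L_1=\Box_0$, $L_2=\Box_{\bU}$ and $\varphi=\rho'$.

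There is no real obstacle. The only care needed is the bookkeeping of signs and the factor $2$ from the symmetric cross term, and remembering that the momentum equation is never used. That is precisely why the identity holds off shell, requiring at most mass conservation.
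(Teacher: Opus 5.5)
Your proposal is correct and follows the paper's proof essentially step for step: the pressure and viscous parts cancel, the double divergence is taken using constancy of $\bU$, $\Box_{\bU}-\Box_0$ is expanded, and the $(\bU\!\cdot\!\nabla)\partial_t\rho$ and $(\bU\!\cdot\!\nabla)^2\rho$ terms cancel in pairs, leaving $2(\bU\!\cdot\!\nabla)\mathcal{C}$. Your explicit remark that both operators annihilate the constant $\rho_0$, which justifies replacing $\rho$ by $\rho'$, fills in the one step the paper leaves implicit in the corollary.
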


\begin{proof}
\emph{Step 1: reduction of the tensor difference.}  The pressure and
viscous contributions to \eqref{eq:Tv} do not depend on $\bv$ and cancel
in the difference, so
\begin{equation}
  T^{0}_{ij}-T^{\bU}_{ij}
  \;=\;\rho\,(u_iU_j+U_iu_j)\;-\;\rho\,U_iU_j .
  \label{eq:Tdiff}
\end{equation}

\emph{Step 2: double divergence.}  Because $\bU$ is constant,
\begin{equation}
  \partial_i\partial_j\bigl[\rho(u_iU_j+U_iu_j)\bigr]
  =2(\bU\!\cdot\!\nabla)\,\nabla\!\cdot\!(\rho\bu)
  =2(\bU\!\cdot\!\nabla)\bigl(\mathcal C-\partial_t\rho\bigr),
  \qquad
  \partial_i\partial_j\bigl(\rho U_iU_j\bigr)
  =(\bU\!\cdot\!\nabla)^2\rho .
  \label{eq:divdiv}
\end{equation}

\emph{Step 3: assembly.}  The operator difference is
$(\Box_{\bU}-\Box_{0})\rho
 =2(\bU\!\cdot\!\nabla)\partial_t\rho+(\bU\!\cdot\!\nabla)^2\rho$.
Adding it to the double divergence of \eqref{eq:Tdiff}, computed in
\eqref{eq:divdiv}, the terms $2(\bU\!\cdot\!\nabla)\partial_t\rho$ and
$(\bU\!\cdot\!\nabla)^{2}\rho$ cancel in pairs and only
$2(\bU\!\cdot\!\nabla)\mathcal C$ survives, which is
\eqref{eq:kinematic}.  The identity has also been verified symbolically
in three dimensions (appendix~\ref{app:verification}, item~V1).
\end{proof}

\begin{remark}\label{rem:kinematic}
Two features of lemma~\ref{lem:kinematic} deserve emphasis.
(i) The transformation law between analogies over different uniform base
flows is \emph{purely kinematic}: it requires only mass conservation, not
momentum, energy, or any constitutive relation.  The re-attribution of
sources is a statement about bookkeeping, not dynamics.
(ii) Off shell---that is, for modelled fields, for which
$\mathcal{C}\neq0$ in general---the two analogies genuinely differ, and
the residual $2(\bU\!\cdot\!\nabla)\mathcal{C}$ measures exactly how much
of the disagreement between prediction pipelines is attributable to
model-inconsistency with mass conservation.
\end{remark}

\section{Frequency-respecting transformations}\label{sec:freq}

\subsection{Cometric formalism}

Prediction with an analogy is performed frequency by frequency: source
models are specified as spectra, Green's functions are computed at fixed
frequency, and comparisons with experiment are made on spectra.  The maps
by which models are carried between analogies must therefore respect the
time-harmonic decomposition.  This section characterises such maps and
constructs the canonical one attached to each uniform-flow analogy.

For a constant-coefficient second-order operator we write
\begin{equation}
  P_Q \;=\; Q^{ab}\,\partial_a\partial_b ,
  \qquad a,b = 0,1,\dots,d,
\end{equation}
with $Q$ a constant symmetric matrix (the \emph{cometric}), indices
$0$ and $i\geq1$ referring to $t$ and $x_i$.  The convected operator and
the d'Alembertian correspond to
\begin{equation}
  Q_c(\bv)=\begin{pmatrix} 1 & \bv\T \\[2pt] \bv & \bv\bv\T - c^2 I
  \end{pmatrix},
  \qquad
  \Qstar = \begin{pmatrix} 1 & 0 \\[2pt] 0 & -c^2 I \end{pmatrix}.
  \label{eq:cometrics}
\end{equation}
Under an invertible linear change of coordinates $\xi'=M\xi$ the chain
rule gives $\partial_a = M^{b}{}_{a}\,\partial'_b$, so that
\begin{equation}
  P_Q f = P_{Q'} f'\quad\text{with}\quad Q' = M\,Q\,M\T,
  \qquad f'(\xi')\equiv f(M^{-1}\xi').
  \label{eq:congruence}
\end{equation}
Operators therefore transform by congruence of cometrics.  The cometric
$Q_c(\bv)$ is (up to sign conventions) the inverse acoustic metric of the
uniformly moving medium, familiar from the acoustic space--time of
\citet{unruh1981}, the analogue-gravity programme built upon it
\citep{visser1998,barcelo2011}, and its aeroacoustic development by
\citet{gregory2015}.

\begin{definition}[frequency-respecting map]\label{def:freqresp}
An invertible linear map $M$ of $\mathbb{R}^{1+d}$ is
\emph{frequency-respecting} if the pullback $f\mapsto f\circ M^{-1}$
carries every time-harmonic field $\e^{-\im\omega t}F(\bx)$ to a
time-harmonic field (of a possibly different common frequency).
\end{definition}

\begin{lemma}[block characterisation]\label{lem:block}
$M$ is frequency-respecting if and only if it has the block form
\begin{equation}
  M=\begin{pmatrix} p & \bq\T \\[2pt] 0 & S\end{pmatrix},
  \qquad p\neq 0,\quad S\ \text{invertible},
  \label{eq:blockform}
\end{equation}
i.e.\ its spatial rows contain no time component.  The pullback then maps
frequency $\omega$ to $\omega/p$.  Maps of the form \eqref{eq:blockform}
constitute a group, denoted $\Gf$.
\end{lemma}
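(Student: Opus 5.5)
We prove the two directions separately and then check the group property; everything reduces to how the time coordinate transforms. Write $\xi'=M\xi$ with
\[
  M=\begin{pmatrix} p & \bq\T \\[2pt] \bm{r} & S\end{pmatrix},
\]
so that $t'=pt+\bq\!\cdot\!\bx$ and $\bx'=\bm{r}t+S\bx$. The pulled-back field is $f'(\xi')=f(M^{-1}\xi')$, so the question is how $t=t(t',\bx')$ depends on $t'$ once we invert $M$.

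\emph{Sufficiency.} Suppose $\bm{r}=0$. Invertibility of the block-triangular matrix forces $p\neq0$ and $S$ invertible. Inverting gives $\bx=S^{-1}\bx'$ and $t=(t'-\bq\!\cdot\!S^{-1}\bx')/p$. Then
\[
  f'(t',\bx')=\e^{-\im\omega t'/p}\,\e^{\im\omega\,\bq\cdot S^{-1}\bx'/p}\,F(S^{-1}\bx'),
\]
which is time-harmonic at frequency $\omega/p$. This also gives the frequency law.

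\emph{Necessity.} This is the only step needing care. Assume $M$ is frequency-respecting but $\bm{r}\neq0$. We test $M$ on plane waves $f=\e^{-\im\omega t+\im\bk\cdot\bx}$ with $\omega\neq 0$ and $\bk$ arbitrary. The pullback is again a plane wave, $\exp(\im\,\eta\!\cdot\!\xi')$ with $\eta=M^{-\mathrm{T}}(-\omega,\bk)$. Its $t'$-dependence is $\e^{\im\eta_0 t'}$, so a single wave is always harmonic and gives no obstruction. Time-harmonicity of a general $\e^{-\im\omega t}F(\bx)$, however, means that every Fourier component of $F$ must yield the same $\eta_0$.

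Write $M^{-1}=\begin{pmatrix} a & \bm{b}\T\\ \bm{c} & D\end{pmatrix}$. Then $t=at'+\bm{b}\!\cdot\!\bx'$ and $\bx=\bm{c}t'+D\bx'$, so
\[
  \eta_0=-\omega a+\bk\!\cdot\!\bm{c}.
\]
Taking $F$ to be a sum of two plane waves with distinct $\bk$, the condition forces $\bk\!\cdot\!\bm{c}$ to be independent of $\bk$, i.e.\ $\bm{c}=0$. Hence $M^{-1}$ is block upper-triangular, and since that class is closed under inversion, so is $M$; that is, $\bm{r}=0$. The main obstacle is making this "common frequency for all $F$" argument rigorous; choosing $F$ as a superposition of two plane waves (or a localised bump) avoids any distributional subtleties.

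\emph{Group property.} Block upper-triangular matrices with invertible diagonal blocks are closed under products, since the diagonal blocks multiply as $(pp',SS')$. They are also closed under inverses, with explicit inverse
\[
  \begin{pmatrix} 1/p & -\bq\T S^{-1}/p\\ 0 & S^{-1}\end{pmatrix},
\]
and the identity lies in the class. Hence these maps form the group $\Gf$, a parabolic subgroup of $GL(1+d)$. The frequency law composes multiplicatively, $\omega\mapsto\omega/(pp')$, consistent with the group structure.
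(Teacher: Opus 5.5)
Your proposal is correct and follows the paper's route: explicit inversion for sufficiency, the obstruction that $\bx$ depends on $t'$ for necessity, and block multiplication for the group property. Your superposition of two plane waves, via $\eta_0=-\omega a+\bk\cdot\bm{c}$, makes rigorous the paper's one-line appeal to ``generic $F$''. It also correctly puts that obstruction in the lower-left block of $M^{-1}$ and transfers it to $M$ by closure under inversion, a step the paper leaves implicit.
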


\begin{proof}
\emph{Sufficiency.}  If $M$ has the form \eqref{eq:blockform} then
$M^{-1}=\begin{pmatrix} p^{-1} & -p^{-1}\bq\T S^{-1}\\ 0 & S^{-1}
\end{pmatrix}$, so $\xi=M^{-1}\xi'$ reads
$t=p^{-1}\bigl(t'-\bq\T S^{-1}\bx'\bigr)$ and $\bx=S^{-1}\bx'$, whence
\begin{equation}
  \e^{-\im\omega t}F(\bx)
  = \e^{-\im(\omega/p)t'}\,
    \Bigl[\e^{\,\im(\omega/p)\,\bq\T S^{-1}\bx'}F(S^{-1}\bx')\Bigr],
  \label{eq:pullback-harmonic}
\end{equation}
which is time-harmonic at frequency $\omega/p$.

\emph{Necessity.}  If some spatial row of $M$ has a nonzero time entry,
then $\bx$ depends on $t'$, and $F(\bx(t',\bx'))$ fails to be
time-harmonic for generic $F$.

\emph{Group property.}  Block upper-triangular matrices with these
blocks are closed under products, and under inverses by the formula
above.
\end{proof}

\subsection{The canonical map of a uniform-flow analogy}

The classical route from the convected wave equation to the ordinary one
is the Galilean change to the frame of the medium, $\bx'=\bx-\bv t$;
but the Galilean map has a nonzero time entry in its spatial rows and is
\emph{not} frequency-respecting: a field at fixed frequency in the frame
in which the analogy is posed is smeared over frequencies in the medium
frame.  The frequency-respecting reduction is instead the
Prandtl--Glauert--Lorentz-type transformation \citep{taylor1978,
chapman2000,gregory2015}; \citet{chapman2000} in particular emphasised
that the reduction is of Lorentz type and does not correspond to a
Galilean change of frame, an observation which
lemma~\ref{lem:factorisation} below makes structural.  We first give its fixed-frequency form together with a
uniqueness statement, then its space--time factorisation.

\begin{lemma}[fixed-frequency reduction and phase rigidity]
\label{lem:PGL}
Let $U=Mc$, $0<M<1$, $\beta=\sqrt{1-M^2}$, and let $p(\bx)$ satisfy the
convected Helmholtz equation
$\bigl[(-\im\omega+U\partial_{x_1})^2-c^2\Delta\bigr]p=0$
(time convention $\e^{-\im\omega t}$).  Then
\begin{equation}
  p(\bx)=\e^{\im\alpha x_1}\,
  q\!\left(\frac{x_1}{\beta},\,\bx_\perp\right),
  \qquad
  \alpha=-\frac{\omega M}{c\,\beta^{2}}=\frac{U\omega}{U^{2}-c^{2}},
  \label{eq:PGL}
\end{equation}
where $q$ satisfies the ordinary Helmholtz equation at the shifted
frequency $\omega'=\omega/\beta$,
$\Delta q + (\omega'^2/c^2)\,q=0$.  Moreover the phase is rigid: within
the ansatz $p=\e^{\im a x_1}q(x_1/\beta,\bx_\perp)$, the elimination of
the first-derivative term forces $a=\alpha$ uniquely.
\end{lemma}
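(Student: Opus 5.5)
Substitute the ansatz $p=\e^{\im a x_1}q(x_1/\beta,\bx_\perp)$ with $a$ left free, expand the convected Helmholtz operator, and read off both claims from the coefficient of the first derivative of $q$.

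Write $y_1=x_1/\beta$, $\bm{y}_\perp=\bx_\perp$. On the ansatz, $\partial_{x_1}$ acts as $\e^{\im a x_1}(\im a+\beta^{-1}\partial_{y_1})$ and $\partial_{x_\perp}$ as $\e^{\im a x_1}\partial_{y_\perp}$. The operator $(-\im\omega+U\partial_{x_1})^2-c^2\Delta$ then conjugates to
\[
\bigl(-\im\omega+\im Ua+U\beta^{-1}\partial_{y_1}\bigr)^2
-c^2\bigl(\im a+\beta^{-1}\partial_{y_1}\bigr)^2-c^2\Delta_\perp .
\]
Expanding, we collect three groups of terms.

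\emph{Second-order part.} The $\partial_{y_1}^2$ coefficient is $(U^2-c^2)/\beta^2=-c^2$, since $U^2-c^2=-c^2\beta^2$. The transverse part is $-c^2\Delta_\perp$. Together they give $-c^2\Delta_y$, an isotropic Laplacian.

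\emph{First-order part.} The $\partial_{y_1}$ coefficient is
\[
\frac{2\im}{\beta}\bigl[U(Ua-\omega)-c^2a\bigr]
=\frac{2\im}{\beta}\bigl[(U^2-c^2)a-U\omega\bigr].
\]
This vanishes if and only if $a=U\omega/(U^2-c^2)$. Dividing numerator and denominator by $c$ and using $U^2-c^2=-c^2\beta^2$ gives $a=-\omega M/(c\beta^2)=\alpha$. Because the coefficient is affine in $a$ with nonzero slope $2\im(U^2-c^2)/\beta$, the root is unique. This proves phase rigidity.

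\emph{Zeroth-order part.} The constant term is $-(\omega-Ua)^2+c^2a^2$. We substitute $a=\alpha$. First,
\[
\omega-U\alpha=\omega\Bigl(1-\frac{U^2}{U^2-c^2}\Bigr)=\frac{-c^2\omega}{U^2-c^2}=\frac{\omega}{\beta^2}.
\]
Second, $c^2\alpha^2=\omega^2M^2/\beta^4$. The constant term is therefore
\[
-\frac{\omega^2}{\beta^4}\bigl(1-M^2\bigr)=-\frac{\omega^2}{\beta^2}.
\]

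Combining the three parts, the conjugated equation reads $-c^2\Delta_y q-(\omega^2/\beta^2)q=0$. This is $\Delta q+(\omega'^2/c^2)q=0$ with $\omega'=\omega/\beta$, as claimed. The map is invertible because $\e^{\im\alpha x_1}$ never vanishes and $x_1\mapsto x_1/\beta$ is bijective for $0<M<1$. Hence solutions correspond one-to-one.

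No step is a real obstacle. The only care needed is in the constant term, where the signs must combine into exactly $-\omega^2/\beta^2$. The identity $\omega-U\alpha=\omega/\beta^2$ makes this immediate. Note that the rescaling $x_1/\beta$ is fixed in advance by the statement, so isotropy of the principal part is a check rather than a choice.
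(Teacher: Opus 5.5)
Your proposal is correct and follows the paper's proof essentially line for line: you substitute the ansatz with $a$ free, read off the coefficients $(U^2-c^2)/\beta^2=-c^2$, $(2\im/\beta)[(U^2-c^2)a-U\omega]$ and $-(\omega-aU)^2+c^2a^2$, get rigidity from the vanishing of the first-order coefficient, and evaluate the constant term at $a=\alpha$ to obtain $-\omega^2/\beta^2$. Your explicit identity $\omega-U\alpha=\omega/\beta^2$ and your remark that the substitution is invertible are small additions that the paper leaves implicit.
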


\begin{proof}
Substitute the ansatz $p=\e^{\im a x_1}q(X,\bx_\perp)$, $X=x_1/\beta$,
into the convected Helmholtz operator (the transverse Laplacian passes
through unchanged) and collect coefficients:
\begin{align}
  \text{coefficient of } q_{XX}:&\quad (U^2-c^2)/\beta^{2}=-c^{2},
  \label{eq:pgl-c2}\\
  \text{coefficient of } q_{X}:&\quad
  (2\im/\beta)\bigl[-\omega U+a\,(U^{2}-c^{2})\bigr],
  \label{eq:pgl-c1}\\
  \text{coefficient of } q:&\quad -(\omega-aU)^{2}+c^{2}a^{2}.
  \label{eq:pgl-c0}
\end{align}

\emph{Phase rigidity.}  The first-derivative coefficient
\eqref{eq:pgl-c1} vanishes if and only if
$a=U\omega/(U^{2}-c^{2})=\alpha$: the phase is forced.

\emph{Reduction.}  With $a=\alpha$, \eqref{eq:pgl-c0} evaluates to
$-\omega^{2}/\beta^{2}$; dividing the equation by $-c^{2}$ yields
$\Delta q+(\omega/(c\beta))^{2}q=0$, the Helmholtz equation at
$\omega'=\omega/\beta$.  (Symbolic verification, including the
uniqueness computation: appendix~\ref{app:verification}, item~V2.)
\end{proof}

\begin{lemma}[space--time factorisation]\label{lem:factorisation}
Define, for $|\bv|<c$, the Galilean map $g(\bv)$ and the Lorentz boost
$\ell(\bv)$ with signal speed $c$,
\begin{equation}
  g(\bv):\ (t,\bx)\mapsto(t,\bx-\bv t),
  \qquad
  \ell(\bv):\ \begin{cases}
   t' = \gamma\,(t-\bv\!\cdot\!\bx/c^2),\\[2pt]
   \bx' = \bx + (\gamma-1)(\hat{\bv}\!\cdot\!\bx)\hat{\bv} - \gamma\bv t,
  \end{cases}
  \label{eq:gal-lor}
\end{equation}
with $\gamma=(1-|\bv|^2/c^2)^{-1/2}$, and set
\begin{equation}
  w(\bv)\;\equiv\;\ell(-\bv)\,g(\bv).
  \label{eq:wdef}
\end{equation}
Then $w(\bv)\in\Gf$, it is the identity on directions orthogonal to
$\bv$, and it reduces the convected operator exactly:
\begin{equation}
  w(\bv)\,Q_c(\bv)\,w(\bv)\T \;=\; \Qstar .
  \label{eq:wreduces}
\end{equation}
For $\bv=U\bm{e}_1$ it reads explicitly
\begin{equation}
  \tau=\beta t + \frac{M}{c\beta}\,x_1,
  \qquad X_1=\frac{x_1}{\beta},
  \qquad \bx_\perp'=\bx_\perp,
  \label{eq:w-explicit}
\end{equation}
and its restriction to time-harmonic fields is exactly the map of
lemma~\ref{lem:PGL}, with the frequency shift $\omega\mapsto\omega/\beta$
realised by the time coefficient $p=\beta$ in \eqref{eq:blockform}.
\end{lemma}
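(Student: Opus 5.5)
The plan is to compute $w(\bv)=\ell(-\bv)\,g(\bv)$ explicitly as a matrix, read off its block structure, and then obtain the reduction \eqref{eq:wreduces} from the multiplicativity of the congruence law \eqref{eq:congruence}. The multiplicativity means that only two simple facts need checking, one about $g$ and one about $\ell$.

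\emph{Explicit form.} Apply $g(\bv)$ first, $(t,\bx)\mapsto(t,\bx-\bv t)$, and then $\ell(-\bv)$, which is \eqref{eq:gal-lor} with $\bv\to-\bv$. The time row becomes
\[
\gamma\bigl(t+\bv\!\cdot\!(\bx-\bv t)/c^2\bigr)=\gamma(1-|\bv|^2/c^2)\,t+\gamma\,\bv\!\cdot\!\bx/c^2=\gamma^{-1}t+\gamma\,\bv\!\cdot\!\bx/c^2 .
\]
In the spatial rows, the component of $\bx-\bv t$ along $\hat{\bv}$ is multiplied by $\gamma$, and $+\gamma\bv t$ is then added. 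The $t$-terms therefore cancel exactly, leaving
\[
\bx'=\bx+(\gamma-1)(\hat{\bv}\!\cdot\!\bx)\hat{\bv}.
\]
Hence $w(\bv)$ has the block form \eqref{eq:blockform} with $p=\gamma^{-1}=\beta$, $\bq=\gamma\bv/c^2$ and $S=I+(\gamma-1)\hat{\bv}\hat{\bv}\T$. So $w(\bv)\in\Gf$ by lemma~\ref{lem:block}, and $w$ is the identity on $\bv^\perp$. Specialising to $\bv=U\bm{e}_1$, with $\gamma=1/\beta$ and $\gamma U/c^2=M/(c\beta)$, gives \eqref{eq:w-explicit} directly.

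\emph{Reduction.} By \eqref{eq:congruence}, $w Q_c w\T=\ell(-\bv)\bigl[g(\bv)Q_c(\bv)g(\bv)\T\bigr]\ell(-\bv)\T$. I split this into two checks.
\begin{itemize}
\item[(a)] The Galilean map has matrix $g=\begin{pmatrix}1&0\\-\bv&I\end{pmatrix}$. A two-line block multiplication gives $gQ_c(\bv)g\T=\Qstar$. The off-diagonal block is $\bv-\bv=0$, and the spatial block is $\bv\bv\T-c^2I-\bv\bv\T=-c^2I$.
\item[(b)] Any Lorentz boost with signal speed $c$ preserves $\Qstar$, i.e.\ $\ell\Qstar\ell\T=\Qstar$. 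This is the standard invariance of the Minkowski (co)metric. I will verify it either on the $(t,x_\parallel)$ block (a $2\times2$ hyperbolic-rotation check) or by noting that the transverse directions are untouched.
\end{itemize}
Combining (a) and (b) yields \eqref{eq:wreduces}.

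\emph{Harmonic restriction.} I would insert $p=\beta$, $\bq=\gamma\bv/c^2$ and $S$ into the pullback formula \eqref{eq:pullback-harmonic}. This produces frequency $\omega/\beta$, the stretched variable $X_1=x_1/\beta$, and an exponential factor linear in $x_1$. I would then compare with \eqref{eq:PGL} and invoke the phase rigidity of lemma~\ref{lem:PGL}: any reduction of the ansatz form must carry the phase $\alpha$, so identifying the factor amounts to matching which side of the map the original and reduced fields live on. I expect this to be the main obstacle. The bookkeeping of directions (pullback by $M$ versus $M^{-1}$, the sign of $\bv$ in $\ell(-\bv)$, and the $\e^{-\im\omega t}$ convention) must be tracked carefully. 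The $\tau$-coefficient $M/(c\beta)$ combined with frequency $\omega/\beta$ must produce exactly $\alpha=-\omega M/(c\beta^2)$, including its sign. I would settle this by writing $p(\bx)\e^{-\im\omega t}$ as a function of $(\tau,X)$ via the inverse of \eqref{eq:w-explicit}, so that the sign of the phase is read off directly rather than through the general formula.
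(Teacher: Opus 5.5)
Your proof is correct, and it reaches \eqref{eq:wreduces} by a different route from the paper's.

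The paper specialises to $\bv=U\bm e_1$ and derives the chain-rule identity $\partial_t+U\partial_{x_1}=\beta^{-1}(\partial_\tau+cM\partial_{X_1})$. It squares and subtracts to obtain $\partial_\tau^2-c^2\partial_{X_1}^2$, then passes to general $\bv$ by rotational covariance, backed by the symbolic check V3a.

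You instead compute the block data $p=\beta$, $\bq=\gamma\bv/c^2$, $S=I+(\gamma-1)\hat\bv\hat\bv\T$ for arbitrary $\bv$. You then use the multiplicativity of the congruence law. The Galilean factor alone performs the reduction, $g(\bv)Q_c(\bv)g(\bv)\T=\Qstar$. The Lorentz factor is a symmetry of the target, $\ell\,\Qstar\ell\T=\Qstar$.

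What your route buys:
\begin{itemize}
\item It is direction-free and needs no covariance argument.
\item It makes the factorisation structural: $\ell(-\bv)$ only cancels the time entries that $g(\bv)$ puts into the spatial rows. That is exactly the content of the remark following the lemma.
\item It supplies the general spatial block $S(\bv)$ that the paper later attributes to this lemma in proposition~\ref{prop:M4}, where the paper has only established it for $\bm e_1$.
\end{itemize}
The paper's route, in exchange, gives explicit derivative identities in the variables of lemma~\ref{lem:PGL}.

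For the harmonic restriction, your fallback of inverting \eqref{eq:w-explicit} is exactly the paper's Step~4. The sign you flag as the main obstacle is in fact immediate even through \eqref{eq:pullback-harmonic}. There $\bq\T S^{-1}\bx'=(M/c)X_1$, so with $F$ the pressure field $p$ the bracket is $q(X)=\e^{\im(\omega M/(c\beta))X_1}\,p(\beta X_1,\bx_\perp)$. This gives $\alpha=-\omega M/(c\beta^2)$ at once, so the appeal to phase rigidity is unnecessary. Using rigidity would also cost an extra argument: you would have to show that enough Helmholtz solutions exist to force the first-derivative coefficient to vanish.
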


\begin{proof}
\emph{Step 1: explicit form.}  For $\bv=U\bm e_1$, $g(\bv)$ sends
$x_1\mapsto x_1-Ut$, and applying $\ell(-\bv)$ to the result gives
\begin{equation}
  \tau=\gamma\Bigl(t+\frac{U(x_1-Ut)}{c^{2}}\Bigr)
      =\gamma\beta^{2}t+\frac{\gamma M}{c}\,x_1
      =\beta t+\frac{M}{c\beta}\,x_1,
  \qquad
  X_1=\gamma\bigl(x_1-Ut+Ut\bigr)=\frac{x_1}{\beta},
\end{equation}
which is \eqref{eq:w-explicit}; the spatial rows carry no time entry, so
$w(\bv)\in\Gf$, and directions orthogonal to $\bv$ are untouched.

\emph{Step 2: chain rule.}  From \eqref{eq:w-explicit},
\begin{equation}
  \partial_t=\beta\,\partial_\tau,
  \qquad
  \partial_{x_1}=\frac{M}{c\beta}\,\partial_\tau
                +\frac{1}{\beta}\,\partial_{X_1},
  \qquad\text{hence}\qquad
  \partial_t+U\partial_{x_1}
   =\frac{1}{\beta}\bigl(\partial_\tau+cM\,\partial_{X_1}\bigr),
\end{equation}
using $\beta^{2}+M^{2}=1$.

\emph{Step 3: reduction of the operator.}  Squaring and subtracting,
\begin{equation}
  (\partial_t+U\partial_{x_1})^{2}-c^{2}\partial_{x_1}^{2}
  =\frac{1}{\beta^{2}}\Bigl[
    \bigl(\partial_\tau+cM\,\partial_{X_1}\bigr)^{2}
    -\bigl(M\,\partial_\tau+c\,\partial_{X_1}\bigr)^{2}\Bigr]
  =\partial_\tau^{2}-c^{2}\,\partial_{X_1}^{2},
\end{equation}
since the cross terms cancel and the diagonal terms carry the common
factor $1-M^{2}=\beta^{2}$; the transverse Laplacian is untouched.  The
convected operator therefore becomes exactly the d'Alembertian, which is
\eqref{eq:wreduces} by \eqref{eq:congruence}.  The general-direction
statement follows by rotational covariance, and has also been verified
directly for general $\bv$ (appendix~\ref{app:verification}, item~V3a).

\emph{Step 4: time-harmonic restriction.}  Inverting
\eqref{eq:w-explicit} gives $t=(\tau-(M/c)X_1)/\beta$, so
$\e^{-\im\omega t}
 =\e^{-\im(\omega/\beta)\tau}\,\e^{\im(\omega M/(c\beta))X_1}$,
which reproduces \eqref{eq:PGL} with $\alpha=-\omega M/(c\beta^{2})$
upon returning to $x_1=\beta X_1$.
\end{proof}

\begin{remark}
The factorisation \eqref{eq:wdef} is the structural heart of the paper:
the acoustic reduction map is a Galilean change of frame---the physics of
\eqref{eq:NS} being Newtonian---\emph{completed} by a Lorentz boost with
the sound speed in the role of the light speed, whose sole purpose is to
restore frequency-respect.  All the geometry found below (flattened boost
sector, frequency holonomy, sonic horizon) traces back to the interplay
of these two factors.  We stress that the pseudo-Lorentzian structure is
used here only as a derivation device for the exact closed forms that
follow; it carries no physical claim about the flow, and a reader
uninterested in it may take the resulting formulae
\eqref{eq:rho-exact}, \eqref{eq:collinear-v3}--\eqref{eq:collinear-A}
directly.
\end{remark}

\subsection{The conformal stabiliser}

Finally we determine which frequency-respecting maps preserve the
d'Alembertian; this fixes the residual freedom in all constructions
below.

\begin{lemma}[conformal stabiliser]\label{lem:stabiliser}
Let $M\in\Gf$ satisfy $M\,\Qstar M\T=\lambda\,\Qstar$ for some
$\lambda>0$.  Then
\begin{equation}
  M=\sqrt{\lambda}\,
  \begin{pmatrix} 1 & 0\\ 0 & R\end{pmatrix},
  \qquad R\in O(d),
  \label{eq:stabiliser}
\end{equation}
up to overall sign.  In particular the stabiliser of the ray of $\Qstar$
in $\Gf$ is $\mathbb{R}_{>0}\times O(d)$, and a map
\eqref{eq:stabiliser} rescales all frequencies by
$\rho\equiv 1/p=\lambda^{-1/2}$ while rotating space rigidly by $R$.
\end{lemma}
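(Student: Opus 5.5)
The plan is to write $M$ in the block form \eqref{eq:blockform} of lemma~\ref{lem:block}, multiply out the congruence $M\Qstar M\T$ block by block, and compare each block with $\lambda\Qstar$.

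With $M=\begin{pmatrix} p & \bq\T\\ 0 & S\end{pmatrix}$ and $\Qstar=\mathrm{diag}(1,-c^2 I)$, the product is
\begin{equation*}
  M\Qstar M\T=\begin{pmatrix} p^2-c^2|\bq|^2 & -c^2\bq\T S\T\\[2pt] -c^2 S\bq & -c^2 SS\T\end{pmatrix}.
\end{equation*}
Setting this equal to $\lambda\Qstar$ gives three conditions, which I will treat in the following order.

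\begin{enumerate}
\item[(a)] The spatial block gives $SS\T=\lambda I$. Hence $S=\sqrt\lambda\,R$ with $R\in O(d)$, and $S$ is invertible because $\lambda>0$.
\item[(b)] The off-diagonal block gives $S\bq=0$. Since $S$ is invertible, this forces $\bq=0$.
\item[(c)] The time--time entry then reduces to $p^2=\lambda$, so $p=\pm\sqrt\lambda$.
\end{enumerate}

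Taking the overall sign so that $p>0$ yields \eqref{eq:stabiliser}. If the sign is instead carried by $p=-\sqrt\lambda$, the map differs from \eqref{eq:stabiliser} only by an overall sign, which after absorbing the sign of $R$ is the stated ambiguity. (Time reversal composed with a spatial reflection is still of this form; I will not dwell on this case.)

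For the converse, any map of the form \eqref{eq:stabiliser} satisfies the congruence condition directly, because $RR\T=I$. Together with the forward direction, this identifies the stabiliser of the ray as $\mathbb{R}_{>0}\times O(d)$, modulo the overall sign $\pm1$. This element acts trivially on the ray, so I will quotient by it or else absorb it into $O(d)\times\{\pm1\}$ on time.

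The frequency statement then follows from lemma~\ref{lem:block}: frequency maps as $\omega\mapsto\omega/p$, and $1/p=\lambda^{-1/2}$. Since $S=\sqrt\lambda R$ is a uniform scaling composed with a rotation, the spatial action is rigid up to the common dilation.

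I expect no real obstacle here. The only point needing care is sign bookkeeping: $\lambda>0$ is what makes $S$ invertible, and the square root must be taken consistently so that the dilation $\rho=1/p$ comes out positive.
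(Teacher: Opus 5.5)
Your proof is correct and is the paper's argument. You expand $M\Qstar M\T=\lambda\Qstar$ block by block to obtain $SS\T=\lambda I$, then $\bq=\bm0$ from the mixed block (by invertibility of $S$), then $p^{2}=\lambda$. Your treatment of the $p=-\sqrt\lambda$ branch, writing $M=-\sqrt\lambda\,\operatorname{diag}(1,-R)$, spells out the paper's ``up to overall sign'' a little more explicitly.
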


\begin{proof}
Write $M$ in the form \eqref{eq:blockform} and expand the congruence in
blocks:
\begin{equation}
  (t,\bx):\ -c^{2}\bq\T S\T=0\ \Rightarrow\ \bq=\bm0;
  \qquad
  (\bx,\bx):\ SS\T=\lambda I\ \Rightarrow\ S=\sqrt{\lambda}\,R,\
  R\in O(d);
  \qquad
  (t,t):\ p^{2}=\lambda .
\end{equation}
\end{proof}

The elementary maps, their composition rules within $\Gf$, and the
residual stabiliser freedom are now in place.  The next section
composes them, and asks whether the family of classical operators is
stable under composition.

\section{Transport of analogies and the failure of closure}
\label{sec:transport}

\subsection{The transport convention}

An analogy is re-gauged in practice by steps: a description built on one
effective medium is converted, by the canonical map of \S\ref{sec:freq},
into a description in which that medium is at rest; a further flow
increment is then treated by the same construction; and so on.  Each
increment is necessarily specified \emph{in the current frame}, because
that is the only description available at that stage.  We formalise this
as follows.

\begin{definition}[canonical transport]\label{def:transport}
Given velocity increments $\bv_1,\dots,\bv_n$, each read in the
coordinates produced by the preceding maps, the transported composite is
the product
\begin{equation}
  W \;=\; w(\bv_n)\cdots w(\bv_1) \;\in\; \Gf ,
  \label{eq:composite}
\end{equation}
and the \emph{endpoint operator} of the transport is defined by the
pulled-back cometric
\begin{equation}
  Q_{\mathrm{end}} \;\equiv\; W^{-1}\,\Qstar\,W^{-\mathrm{T}},
  \label{eq:endpoint}
\end{equation}
i.e.\ the unique constant-coefficient operator that the composite $W$
reduces exactly to the d'Alembertian.
\end{definition}

If the family of classical convected operators $\{Q_c(\bv)\}$ were closed
under \eqref{eq:endpoint}, transport would simply move the analogy from
one uniform flow to another.  It is not closed.

\subsection{Non-closure and the forced anisotropic generalisation}

\begin{theorem}[non-closure]\label{thm:nonclosure}
Let $W=w(u_2\bm{e}_2)\,w(u_1\bm{e}_1)$ with $0<u_1,u_2<c$ (two
perpendicular subsonic steps).  Then the endpoint cometric
\eqref{eq:endpoint} is normalised, $\bigl(Q_{\mathrm{end}}\bigr)_{00}=1$,
and has the admissible generalised form
\begin{equation}
  Q_{\mathrm{end}}
  =\begin{pmatrix} 1 & \bv_3\T\\[2pt]
    \bv_3 & \bv_3\bv_3\T - C_3\end{pmatrix},
  \qquad
  \bv_3=\begin{pmatrix} u_1\\ \gamma_1 u_2\end{pmatrix},
  \qquad
  C_3=\begin{pmatrix} c^2 & \chi\\[2pt] \chi & c^2+\chi^2/c^2
  \end{pmatrix},
  \label{eq:endpoint-perp}
\end{equation}
with $\gamma_1=(1-u_1^2/c^2)^{-1/2}$ and $\chi=\gamma_1 u_1 u_2$.  The
sound-speed tensor satisfies $\det C_3=c^4$ exactly, and the subsonicity
tensor is diagonal,
\begin{equation}
  A_3 \;\equiv\; C_3-\bv_3\bv_3\T
  \;=\;\operatorname{diag}\bigl(c^2-u_1^2,\;c^2-u_2^2\bigr)\;\succ\;0 .
  \label{eq:A3}
\end{equation}
Since $C_3\neq c^2 I$ whenever $u_1u_2\neq0$, the endpoint is not a
classical convected operator: \emph{the family $\{Q_c(\bv)\}$ is not
closed under canonical transport}.  The same conclusion holds for
collinear steps, for which $\bigl(Q_{\mathrm{end}}\bigr)_{00}\neq1$ and
the normalised endpoint is again anisotropic
(\S\ref{ssec:collinear} below).
\end{theorem}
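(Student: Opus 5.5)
The plan is to avoid composing the two maps explicitly and to peel them off one at a time, using the reduction identity \eqref{eq:wreduces} of lemma~\ref{lem:factorisation}. Write $w_k\equiv w(u_k\bm e_k)$, so $W=w_2w_1$ and
\begin{equation*}
  Q_{\mathrm{end}}=w_1^{-1}\bigl(w_2^{-1}\Qstar w_2^{-\mathrm T}\bigr)w_1^{-\mathrm T}
  =w_1^{-1}\,Q_c(u_2\bm e_2)\,w_1^{-\mathrm T}.
\end{equation*}
Here \eqref{eq:wreduces} has been inverted to give $w_2^{-1}\Qstar w_2^{-\mathrm T}=Q_c(u_2\bm e_2)$. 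The problem therefore reduces to a single congruence of the classical cometric $Q_c(u_2\bm e_2)$ by the explicit matrix $N\equiv w_1^{-1}$. By lemma~\ref{lem:block} this inverse is again block upper-triangular, and it can be read off from \eqref{eq:w-explicit} with $\beta_1=\gamma_1^{-1}$:
\begin{equation*}
  t=\frac{\tau}{\beta_1}-\frac{M_1}{c\beta_1}X_1,\qquad x_1=\beta_1X_1,\qquad x_2=x_2',
\end{equation*}
where $M_1=u_1/c$. Its time row is therefore $\bigl(1/\beta_1,\,-M_1/(c\beta_1),\,0\bigr)$ and its spatial block is $S=\operatorname{diag}(\beta_1,1)$ with no time entry.

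Next I would evaluate the three blocks of $NQ_c(u_2\bm e_2)N\T$ separately. For the $(0,0)$ entry, $Q_c(u_2\bm e_2)$ has entries $1$, $0$ and $-c^2$ in positions $00$, $01$ and $11$. Hence $(Q_{\mathrm{end}})_{00}=\beta_1^{-2}(1-M_1^2)=1$, which is the normalisation claim. For the spatial block, because $N$ has no time entries in its spatial rows, the block is $S(\bv_2\bv_2\T-c^2I)S\T=\operatorname{diag}\bigl(-c^2\beta_1^2,\;u_2^2-c^2\bigr)$. In the generalised form \eqref{eq:endpoint-perp} this block equals $\bv_3\bv_3\T-C_3=-A_3$, so \eqref{eq:A3} follows immediately, without yet knowing $\bv_3$.

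The mixed block $(Q_{\mathrm{end}})_{0i}$ defines $\bv_3$. It is the time row of $N$, contracted against $Q_c(u_2\bm e_2)$ and then against $S\T$. I expect it to produce $\bv_3=(u_1,\gamma_1u_2)\T$, with the first component coming from the $-M_1c^{2}/(c\beta_1)\cdot\beta_1$ cross term and the second from $Q_{02}=u_2$ divided by $\beta_1$. I then define $C_3\equiv A_3+\bv_3\bv_3\T$, so that $Q_{\mathrm{end}}$ has the asserted generalised form by construction. Checking the entries uses only $\gamma_1^2-1=\gamma_1^2u_1^2/c^2$:
\begin{itemize}
\item the $(1,1)$ entry is $c^2-u_1^2+u_1^2=c^2$;
\item the off-diagonal entry is $\gamma_1u_1u_2=\chi$;
\item the $(2,2)$ entry is $c^2-u_2^2+\gamma_1^2u_2^2=c^2+\chi^2/c^2$.
\end{itemize}
Then $\det C_3=c^2(c^2+\chi^2/c^2)-\chi^2=c^4$. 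Since $\chi\neq0$ whenever $u_1u_2\neq0$, $C_3\neq c^2I$, and non-closure follows because every $Q_c(\bv)$ has isotropic tensor $c^2I$. Positivity of $A_3$ is immediate from subsonicity.

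The main obstacle I anticipate is sign and orientation bookkeeping in the mixed block. There are several places to go wrong: whether \eqref{eq:congruence} uses $M$ or $M^{-1}$, the sign of the $x_1$ term in the time row of $w_1^{-1}$, and the fact that the second increment is read in already-transformed coordinates. An error in any of these would flip the sign of $u_1$ or misplace the factor $\gamma_1$. As an independent check I would verify the perpendicular case symbolically (appendix~\ref{app:verification}).

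For the collinear remark the same peeling applies, now with $Q_c(u_2\bm e_1)$. A one-line evaluation of the $(0,0)$ entry gives $\gamma_1^2(1-2M_1M_2+M_1^2)$, which is generically $\neq1$. I would only note this here and defer the normalisation and the anisotropy of the rescaled spatial block to \S\ref{ssec:collinear}, as the statement indicates.
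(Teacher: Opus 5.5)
\textbf{The perpendicular case: correct, and organised differently from the paper.}
The paper asserts that the entries follow by direct computation of $W^{-1}\Qstar W^{-\mathrm T}$ from the four Galilean and Lorentz factors. It delegates that algebra to symbolic verification (items V3b--V3d), then reads off $\det C_3=c^4$ and $(A_3)_{12}=\chi-u_1\gamma_1u_2=0$ by inspection.

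You instead use \eqref{eq:wreduces} to collapse the second step, so $Q_{\mathrm{end}}=w_1^{-1}Q_c(u_2\bm e_2)w_1^{-\mathrm T}$ is a single congruence by an explicit matrix. You also obtain $A_3$ first, as minus the spatial block $S(\bv_2\bv_2\T-c^2I)S\T$, before assembling $C_3$.

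The blocks check out. The time row $(1/\beta_1,-M_1/(c\beta_1),0)$ and $S=\operatorname{diag}(\beta_1,1)$ are right. The mixed block is $\bigl(-M_1/(c\beta_1)\bigr)(-c^2)\beta_1=u_1$ and $u_2/\beta_1=\gamma_1u_2$. It is the sign of $(Q_c)_{11}=-c^2$ that makes the first component $+u_1$. Your expression $-M_1c^2/(c\beta_1)\cdot\beta_1$, as written, evaluates to $-u_1$, so carry out the two-term contraction rather than anticipate its result.

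What your route buys is explanation rather than verification:
\begin{itemize}
\item $q_{00}=1$ because the time row of $w_1^{-1}$ sees only the $(t,x_1)$ block of $Q_c(u_2\bm e_2)$. That block coincides with the one in $\Qstar$, since $u_2\bm e_2\perp\bm e_1$.
\item $A_3$ is diagonal because a diagonal $S$ acts on an already diagonal matrix, not through the cancellation the paper observes.
\item It generalises immediately. Contracting the time row $\gamma_1(1,-\bu_1\T/c^2)$ of $w(\bu_1)^{-1}$ against $Q_c(\bu_2)$ gives the perfect square $\gamma_1^2\bigl[(1-\bu_1\!\cdot\!\bu_2/c^2)^2-|\bu_1|^2/c^2\bigr]$. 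This is the closed form \eqref{eq:rho-exact}, which the paper leaves to item V4b.
\end{itemize}
The paper's mechanical route, in turn, handles every step configuration in one uniform symbolic pipeline.

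\textbf{The collinear aside is miscomputed.}
For $Q_c(u_2\bm e_1)$ the spatial entry is $(Q_c)_{11}=u_2^2-c^2$, which contributes $\gamma_1^2M_1^2(M_2^2-1)$. Hence $(Q_{\mathrm{end}})_{00}=\gamma_1^2\bigl[(1-M_1M_2)^2-M_1^2\bigr]=\mathcal D/\bigl(c^2(c^2-u_1^2)\bigr)$, in agreement with \eqref{eq:collinear-v3}.

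Your $\gamma_1^2(1-2M_1M_2+M_1^2)$ fails the check $u_2=0$: it gives $\gamma_1^2(1+M_1^2)\neq1$. It would also predict exact normalisation for equal steps $M_1=M_2$, contradicting the statement. With the correct value, $(Q_{\mathrm{end}})_{00}-1=\gamma_1^2M_1M_2(M_1M_2-2)$, which is nonzero whenever $u_1u_2\neq0$.

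Your peeling also delivers the remaining collinear data, so you need not defer them. The spatial block is $\operatorname{diag}\bigl(\beta_1^2(u_2^2-c^2),\,-c^2\bigr)$ and the mixed entry is $u_1+u_2-u_1u_2^2/c^2$. After division by $q_{00}$ these reproduce \eqref{eq:collinear-v3}--\eqref{eq:collinear-A}, and they give $(C_3)_{22}=(A_3)_{22}=c^2/q_{00}\neq c^2$ directly.
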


\begin{proof}
\emph{Step 1.}  The entries \eqref{eq:endpoint-perp}--\eqref{eq:A3}
follow by direct computation of \eqref{eq:endpoint} from
\eqref{eq:gal-lor} and \eqref{eq:wdef}; they have been verified
symbolically (appendix~\ref{app:verification}, items V3b--V3d).

\emph{Step 2.}  Two of the stated identities can be read off
\eqref{eq:endpoint-perp} by inspection:
$\det C_3=c^{2}(c^{2}+\chi^{2}/c^{2})-\chi^{2}=c^{4}$ and
$(A_3)_{12}=\chi-u_1\gamma_1u_2=0$.

\emph{Step 3.}  Since $C_3\neq c^{2}I$ whenever $u_1u_2\neq0$, the
endpoint lies outside the classical family; the collinear statement is
established by the exact data of \S\ref{ssec:collinear}.
\end{proof}

The endpoint of a perfectly classical pair of re-gaugings is thus a
moving medium with an \emph{anisotropic} sound-speed tensor.  The minimal
transport-closed enlargement of the classical family is therefore the
following class.

\begin{definition}[admissible generalised media]\label{def:admissible}
An \emph{admissible generalised medium} is a pair $(\bv,C)$ with
$C=C\T$ and $A\equiv C-\bv\bv\T\succ0$, with normalised cometric
\begin{equation}
  Q(\bv,C)=\begin{pmatrix} 1 & \bv\T\\[2pt] \bv & \bv\bv\T-C
  \end{pmatrix} .
  \label{eq:Qgen}
\end{equation}
Cometrics are identified up to positive conformal factors, which rescale
the operator without changing its solutions.
\end{definition}

\begin{lemma}[characterisation of admissibility]\label{lem:admissible}
Let $Q$ be given by \eqref{eq:Qgen}.
\emph{(i)} $Q$ has Lorentzian signature $(1,d)$ if and only if
$C\succ0$; moreover $\det Q=(-1)^d\det C$.
\emph{(ii)} $A\succ0$ if and only if the spatial coordinate hyperplanes
are spacelike for $Q$ (equivalently, the purely spatial block of $Q$ is
negative definite), and, given $C\succ0$, if and only if
$\bv\T C^{-1}\bv<1$ (subsonicity).  In particular $A\succ0$ implies
$C=A+\bv\bv\T\succ0$, so admissibility is the stronger condition.
\end{lemma}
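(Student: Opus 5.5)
The natural tool is a block Gaussian elimination (Schur complement) of $Q$ with respect to its $(0,0)$ entry, which equals $1$. Set
\begin{equation*}
  E=\begin{pmatrix} 1 & 0\\[2pt] -\bv & I\end{pmatrix}.
\end{equation*}
Then a direct block multiplication gives
\begin{equation*}
  E\,Q\,E\T=\begin{pmatrix} 1 & 0\\[2pt] 0 & -C\end{pmatrix}.
\end{equation*}
The off-diagonal block is $\bv-\bv=\bm 0$, and the spatial block is $\bv\bv\T-C-\bv\bv\T+\bv\bv\T-\bv\bv\T=-C$. Note that $E$ is itself the Galilean map $g(\bv)$ of \eqref{eq:gal-lor}, written in the cometric convention.

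By Sylvester's law of inertia, $Q$ and $\operatorname{diag}(1,-C)$ have the same signature. This is $(1,d)$ if and only if $-C$ is negative definite, i.e.\ $C\succ0$. Since $\det E=1$, we also get $\det Q=\det(-C)=(-1)^d\det C$. This proves (i).

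For (ii), I will first read off the spatial block of $Q$ directly from \eqref{eq:Qgen}: it is $\bv\bv\T-C=-A$. So this block is negative definite exactly when $A\succ0$.

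Next I link this to the hyperplanes. A hyperplane $\{\,\bm n\cdot\xi=\text{const}\,\}$ is spacelike for the cometric when its conormal $n$ satisfies $Q^{ab}n_an_b>0$. The spatial coordinate hyperplanes $\{x_i=\text{const}\}$ have spatial conormals, and more generally any purely spatial covector gives $Q(n,n)=\bm n\T(-A)\bm n$. I expect the only delicate point of the whole proof to be fixing this sign convention. The lemma's parenthetical, that the purely spatial block of $Q$ is negative definite, tells me which convention is intended. I will state that convention explicitly and then simply identify the two conditions. The identification holds for all spatial directions, not merely the coordinate axes.

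Assuming $C\succ0$, I will use the factorisation
\begin{equation*}
  A=C^{1/2}\bigl(I-C^{-1/2}\bv\bv\T C^{-1/2}\bigr)C^{1/2}.
\end{equation*}
The middle factor is the identity minus a rank-one matrix. Its eigenvalues are $1$, with multiplicity $d-1$, and $1-\bv\T C^{-1}\bv$. By congruence, $A\succ0$ if and only if $\bv\T C^{-1}\bv<1$. Alternatively one can apply the matrix determinant lemma together with a continuity argument.

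Finally, if $A\succ0$ then $C=A+\bv\bv\T$ is a positive definite matrix plus a positive semidefinite one, hence $C\succ0$. This shows that admissibility implies the Lorentzian signature of (i), so it is the stronger condition. All the steps are routine linear algebra, and only the sign convention for "spacelike" needs explicit care.
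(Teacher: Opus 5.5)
Your proof is correct and follows essentially the paper's route: the congruence by $E=g(\bv)$ is the matrix form of the paper's splitting into $e_0$ and its $Q$-orthogonal complement $H=E\T\{(0,\bk)\}$, your $\det E=1$ step is the paper's Schur-complement determinant, and your rank-one eigenvalue computation proves the Schur-complement criterion the paper cites for (ii). One wording caution: under the convention you state (a hyperplane is spacelike iff its conormal has $Q(n,n)>0$), the hyperplanes $\{x_i=\text{const}\}$ are \emph{timelike} when $A\succ0$, since $Q(\mathrm{d}x_i,\mathrm{d}x_i)=-A_{ii}<0$. The phrase should therefore be read, as the paper's parenthetical does, as ``the subspace of purely spatial covectors is spacelike'' (equivalently, $\partial_t$ is timelike), rather than fixed by flipping the sign convention.
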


\begin{proof}
(i) The vector $e_0=(1,\bm{0})$ has $Q(e_0,e_0)=1>0$.  Its
$Q$-orthogonal complement is
$H=\{(\tau,\bk):\ Q(e_0,(\tau,\bk))=\tau+\bv\!\cdot\!\bk=0\}$,
on which the quadratic form evaluates to
\begin{equation}
  Q\bigl((\tau,\bk),(\tau,\bk)\bigr)
  =(\tau+\bv\!\cdot\!\bk)^2-\bk\T C\,\bk
  \;=\;-\,\bk\T C\,\bk .
\end{equation}
A nondegenerate symmetric form has signature $(1,d)$ precisely when it is
positive on some vector and negative definite on the orthogonal
complement of that vector; hence signature $(1,d)$ holds iff
$C\succ0$.  Nondegeneracy and the determinant follow from the Schur
complement:
$\det Q = \det\bigl((\bv\bv\T-C)-\bv\bv\T\bigr)=\det(-C)$.
(ii) The spatial block of $Q$ is $\bv\bv\T-C=-A$, giving the first
equivalence; the second is the Schur-complement criterion
$C-\bv\bv\T\succ0\iff 1-\bv\T C^{-1}\bv>0$ for $C\succ0$.
\end{proof}

\begin{remark}
Condition (i) is precisely the statement that $Q(\bv,C)$ is the inverse
of an acoustic space--time metric in the sense of the analogue-gravity
literature \citep{unruh1981,visser1998,barcelo2011}; the admissibility
condition (ii) is the additional requirement that the coordinate time of
the analogy remain a valid time function for that metric---the
requirement whose failure defines the horizon of
theorem~\ref{thm:horizon} below.
\end{remark}

\subsection{The canonical map of a generalised medium}

\begin{proposition}[canonical rotation-free reduction]\label{prop:w3}
Let $(\bv,C)$ be admissible and $A=C-\bv\bv\T$.  Among frequency-%
respecting maps \eqref{eq:blockform} with $p>0$ and $S$ symmetric
positive definite there is exactly one, denoted $w_3(\bv,C)$, with
$w_3\,Q(\bv,C)\,w_3\T=\Qstar$, namely
\begin{equation}
  S=c\,A^{-1/2},\qquad
  \bq=p\,A^{-1}\bv,\qquad
  p=\bigl(1+\bv\T A^{-1}\bv\bigr)^{-1/2}.
  \label{eq:w3}
\end{equation}
For the classical medium $(\bv, c^2I)$ this reduces exactly to $w(\bv)$
of lemma~\ref{lem:factorisation}; the transport of
definition~\ref{def:transport} is therefore the special case of a
transport defined on all admissible media.
\end{proposition}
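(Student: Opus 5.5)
The plan is to expand the congruence $M\,Q(\bv,C)\,M\T=\Qstar$ in blocks, exactly as in the proof of lemma~\ref{lem:stabiliser}, with $M$ in the block form \eqref{eq:blockform} of lemma~\ref{lem:block}. Writing $A=C-\bv\bv\T$, so that the spatial block of $Q(\bv,C)$ is $-A$, a direct multiplication gives three block equations. The $(\bx,t)$ block is $S\bigl[p\,\bv-A\,\bq\bigr]=\bm0$. The $(\bx,\bx)$ block is $-S A S\T=-c^{2}I$. The $(t,t)$ block is $p^{2}+2p\,\bq\!\cdot\!\bv-\bq\T A\,\bq=1$. The proof will then solve these three equations in the order mixed, spatial, temporal; each step determines one unknown uniquely.

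First, since $S$ is invertible and admissibility gives $A\succ0$, the mixed block forces $\bq=p\,A^{-1}\bv$, with no freedom. Second, the spatial block, with $S$ symmetric, reads $SAS=c^{2}I$. I expect this step to be the only genuinely non-routine point, because it is where uniqueness enters. To handle it, I would conjugate by $A^{1/2}$ and set $T=A^{1/2}SA^{1/2}$. Then $T$ is symmetric positive definite, being congruent to the positive definite $S$, and it satisfies $T^{2}=A^{1/2}(SAS)A^{1/2}=c^{2}A$. Uniqueness of the positive definite square root then gives $T=c\,A^{1/2}$, and hence $S=c\,A^{-1/2}$. Existence is immediate by substitution, and without the requirement that $S$ be symmetric positive definite one would retain exactly the $O(d)$ freedom of lemma~\ref{lem:stabiliser}. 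Third, write $s=\bv\T A^{-1}\bv\ge0$ and substitute $\bq$ into the temporal block. This gives $p^{2}(1+2s-s)=p^{2}(1+s)=1$, and the condition $p>0$ selects $p=(1+s)^{-1/2}$. This completes \eqref{eq:w3} and proves both existence and uniqueness.

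For the classical case $C=c^{2}I$ with $\bv=U\bm e_1$, we have $A=\operatorname{diag}(c^{2}-U^{2},c^{2},\dots)$. Hence $S=c\,A^{-1/2}$ acts as $1/\beta$ along $\bv$ and as the identity transverse to it. Next, $s=M^{2}/\beta^{2}$, so $p=\beta$. Finally, $\bq=\beta\,\bv/(c^{2}-U^{2})=\bigl(M/(c\beta)\bigr)\bm e_1$. These values reproduce \eqref{eq:w-explicit} entry by entry. Alternatively, it suffices to note that $w(\bv)$ has $p=\beta>0$, has symmetric positive definite $S$, and reduces $Q_c(\bv)$ to $\Qstar$ by lemma~\ref{lem:factorisation}, so the uniqueness just proved forces $w_3(\bv,c^{2}I)=w(\bv)$. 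The general direction of $\bv$ then follows by rotational covariance. The concluding sentence of the statement follows because every factor $w(\bv_k)$ of \eqref{eq:composite} is therefore an instance of $w_3$.
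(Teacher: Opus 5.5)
Your proposal is correct and follows the paper's proof essentially step for step. It uses the same block expansion of $w_3\,Q(\bv,C)\,w_3\T=\Qstar$, the same conjugation $T=A^{1/2}SA^{1/2}$ with uniqueness of the positive square root for the spatial block, the same substitution $\bq=p\,A^{-1}\bv$ into the time--time block, and the same explicit check against \eqref{eq:w-explicit} in the classical case. Your alternative closing argument is a small bonus: $w(\bv)$ already has $p=\beta>0$ and symmetric positive-definite spatial block $I+(\gamma-1)\hat\bv\hat\bv\T$, so uniqueness forces $w_3(\bv,c^2I)=w(\bv)$ for every direction of $\bv$ without appealing to rotational covariance.
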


\begin{proof}
Write $B=\bv\bv\T-C=-A$ and expand $w_3\,Q\,w_3\T=\Qstar$ in blocks.

\emph{Spatial block.}  The condition is $S\,B\,S\T=-c^{2}I$, i.e.\
$SAS=c^{2}I$ for symmetric $S$.  Setting $T=A^{1/2}SA^{1/2}\succ0$ gives
$T^{2}=c^{2}A$, and by uniqueness of the positive square root
$T=c\,A^{1/2}$; hence $S=c\,A^{-1/2}$ is the unique symmetric
positive-definite solution.

\emph{Mixed block.}  The condition $(p\,\bv\T+\bq\T B)\,S\T=0$ with $S$
invertible forces $\bq\T=-p\,\bv\T B^{-1}=p\,\bv\T A^{-1}$.

\emph{Time--time block.}  Substituting $\bq$,
\begin{equation}
  1=p^{2}+2p\,\bq\T\bv+\bq\T B\,\bq
   =p^{2}\bigl(1+2\bv\T A^{-1}\bv-\bv\T A^{-1}\bv\bigr)
   =p^{2}\bigl(1+\bv\T A^{-1}\bv\bigr),
\end{equation}
which determines $p>0$.

\emph{Classical case.}  For $C=c^{2}I$ and $\bv=U\bm e_1$:
$A=\operatorname{diag}(c^{2}-U^{2},c^{2},\dots)$, so
$S=\operatorname{diag}(1/\beta,1,\dots)$,
$p=\bigl(1+U^{2}/(c^{2}-U^{2})\bigr)^{-1/2}=\beta$ and
$\bq=p\,A^{-1}\bv=(M/(c\beta))\,\bm e_1$: exactly
\eqref{eq:w-explicit}.
\end{proof}

\begin{remark}[relation to the generalised acoustic analogy]
\label{rem:goldstein}
The operator class forced by theorem~\ref{thm:nonclosure} is not exotic:
in the generalised acoustic analogy of \citet{goldstein2003} the base
flow may carry arbitrary stress source strengths
$\tilde T_{ij}$, which enter the linearised operator through the total
base-flow stress $\tilde\tau_{ij}$ (his equation~(2.22b)); an anisotropic
base stress endows the principal part of the operator with precisely an
anisotropic spatial tensor of the type $C$.  Transport thus \emph{forces}
the passage from classical to generalised analogies; the enlargement is
not a modelling convenience but a closure requirement.  Effective media
of this anisotropic type are, moreover, exactly the setting of
quantitative applications of the generalised analogy, in which base-flow
stresses are modelled or educed from large-eddy simulation
\citep{goldsteinleib2008,karabasov2010}.  We do not pursue
the detailed dictionary here.
\end{remark}

\subsection{Collinear steps: the composition law}\label{ssec:collinear}

For later use we record the exact collinear data (two steps along
$\bm{e}_1$; verified symbolically, appendix~\ref{app:verification},
items V5a--V5b).  With $\mathcal{D}\equiv(c^2-u_1u_2)^2-c^2u_1^2$,
\begin{equation}
  \bigl(Q_{\mathrm{end}}\bigr)_{00}
  =\frac{\mathcal{D}}{c^2\,(c^2-u_1^2)},
  \qquad
  v_3=\frac{(c^2-u_1^2)\,\bigl[c^2(u_1+u_2)-u_1u_2^2\bigr]}{\mathcal{D}},
  \label{eq:collinear-v3}
\end{equation}
\begin{equation}
  A_{11}=\frac{(c^2-u_1^2)^2\,(c^2-u_2^2)}{\mathcal{D}},
  \qquad
  A_{22}=\frac{c^4\,(c^2-u_1^2)}{\mathcal{D}} .
  \label{eq:collinear-A}
\end{equation}
Three features are notable.  First, the conformal factor
$(Q_{\mathrm{end}})_{00}$ differs from unity, and after normalisation
$A_{11}\neq A_{22}$ in general: even collinear transport leaves the
classical family.  Second, the endpoint velocity
\eqref{eq:collinear-v3} obeys neither Galilean addition
($u_1+u_2$) nor Einstein addition
($(u_1+u_2)/(1+u_1u_2/c^2)$): to third order,
$v_3=u_1+u_2-u_1u_2^2/c^2+O(c^{-4})$.  The composition law of acoustic
re-gauging is a hybrid, Lorentzian in the transverse directions
(cf.\ corollary~\ref{cor:einstein} below) and Galilean-contaminated
longitudinally.  Third, all denominators share the factor $\mathcal{D}$,
which controls the degeneration studied in \S\ref{ssec:horizon}.

\section{Exact holonomy of discrete re-gauging}\label{sec:holonomy}

\subsection{Structure of the holonomy}

Transport along a multi-step path and direct reduction of the endpoint
medium are both available; their discrepancy is the (discrete) holonomy.

\begin{theorem}[holonomy structure]\label{thm:structure}
Let $W$ be a composite \eqref{eq:composite} with admissible endpoint
$Q_{\mathrm{end}}$, let $q_{00}=(Q_{\mathrm{end}})_{00}>0$, and let
$w_3$ be the canonical map (proposition~\ref{prop:w3}) of the normalised
endpoint $Q_{\mathrm{end}}/q_{00}$.  Then the holonomy
\begin{equation}
  D\;\equiv\;W\,w_3^{-1}
  \label{eq:Ddef}
\end{equation}
satisfies $D\,\Qstar D\T=\Qstar/q_{00}$ and is therefore of the form
\begin{equation}
  D=\rho\begin{pmatrix} 1 & 0\\ 0 & R\end{pmatrix},
  \qquad
  \rho=q_{00}^{-1/2},\qquad R\in SO(d):
  \label{eq:Dform}
\end{equation}
a rigid \emph{frequency dilation} by $\rho$ (spectra transported around
the loop are evaluated at $\rho\omega$; cf.\ \S\ref{sec:stat}) composed
with a rigid \emph{spatial rotation} $R$.
\end{theorem}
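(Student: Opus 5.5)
The plan is to compute $D\,\Qstar D\T$ directly from the definitions and then invoke lemma~\ref{lem:stabiliser}.

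\emph{Step 1: the congruence.} By \eqref{eq:endpoint}, $W Q_{\mathrm{end}} W\T=\Qstar$. By proposition~\ref{prop:w3} applied to the normalised endpoint, $w_3 (Q_{\mathrm{end}}/q_{00}) w_3\T=\Qstar$, so $w_3^{-1}\Qstar w_3^{-\mathrm T}=Q_{\mathrm{end}}/q_{00}$. Combining the two,
\begin{equation*}
D\,\Qstar D\T = W\,w_3^{-1}\Qstar w_3^{-\mathrm T} W\T = W Q_{\mathrm{end}} W\T/q_{00}=\Qstar/q_{00}.
\end{equation*}
Before using this, I will check two things. First, $q_{00}>0$, which holds by hypothesis. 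Second, the normalised endpoint is admissible. It has $(\cdot)_{00}=1$, so by lemma~\ref{lem:admissible} and a Schur-complement argument it can be written in the form \eqref{eq:Qgen}; admissibility of $A$ is then part of the hypothesis. So $w_3$ exists and is unique.

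\emph{Step 2: identify $D$.} $D\in\Gf$ because $\Gf$ is a group (lemma~\ref{lem:block}) and both $W$ and $w_3$ lie in it. Lemma~\ref{lem:stabiliser} with $\lambda=1/q_{00}$ then gives
\begin{equation*}
D=\pm\sqrt{\lambda}\,\mathrm{diag}(1,R),\qquad R\in O(d).
\end{equation*}
This yields the dilation factor $\sqrt\lambda=q_{00}^{-1/2}=\rho$. By lemma~\ref{lem:stabiliser}, the frequency is rescaled by $1/p=\rho$, i.e.\ $\omega\mapsto\rho\omega$ up to the stated convention.

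\emph{Step 3: fix the sign and the orientation.} This is the main obstacle: excluding the overall minus sign and $\det R=-1$. I would use positivity and connectedness.
\begin{itemize}
\item Each $w(\bv_k)$ has time coefficient $p=\beta_k>0$ (lemma~\ref{lem:factorisation}), and the $p$-entries multiply under composition in $\Gf$. Hence $p(W)>0$.
\item $w_3$ has $p>0$ by construction, so $p(D)>0$. This fixes the overall $+$ sign.
\item Each spatial block $S$ of $w(\bv_k)$ is symmetric positive definite, namely $\mathrm{diag}(1/\beta,1,\dots)$ up to rotation. Hence $\det S>0$. Spatial blocks also multiply under composition, so $\det S(W)>0$.
\item $\det S(w_3)=\det(cA^{-1/2})>0$.
\end{itemize}
Therefore $\det(\rho R)>0$, so $\det R=+1$ and $R\in SO(d)$.

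Finally, I will note that $D$ is determined by the path data once $w_3$ is fixed by the uniqueness in proposition~\ref{prop:w3}. The interpretation then follows: $\rho$ acts as a spectral dilation and $R$ as a rigid rotation.
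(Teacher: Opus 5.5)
Your proof is correct and follows the paper's proof: the same congruence $D\,\Qstar D\T=W(Q_{\mathrm{end}}/q_{00})W\T=\Qstar/q_{00}$, then lemma~\ref{lem:stabiliser} with $\lambda=1/q_{00}$. The one difference is how you exclude the overall minus sign and $\det R=-1$. The paper appeals to ``continuity from the trivial path'', whereas you use multiplicativity in $\Gf$ of the time entries and the spatial-block determinants. That argument is more self-contained, because it needs no deformation of the path through admissible endpoints. One notational slip: the time entry of $D$ is $\rho$, so the pullback sends $\omega$ to $\omega/\rho$, which is why transported spectra are evaluated at $\rho\omega$. Writing ``$1/p=\rho$'' mixes in the reciprocal normalisation of $\rho$ used in lemma~\ref{lem:stabiliser}.
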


\begin{proof}
By construction $w_3\,(Q_{\mathrm{end}}/q_{00})\,w_3\T=\Qstar$, hence
$w_3^{-1}\Qstar w_3^{-\mathrm{T}}=Q_{\mathrm{end}}/q_{00}$, while
\eqref{eq:endpoint} gives $W\,Q_{\mathrm{end}}\,W\T=\Qstar$.  Therefore
\begin{equation}
  D\,\Qstar D\T
  =W\,\bigl(Q_{\mathrm{end}}/q_{00}\bigr)\,W\T
  =\Qstar/q_{00},
\end{equation}
and since $D\in\Gf$, lemma~\ref{lem:stabiliser} with
$\lambda=1/q_{00}$ gives \eqref{eq:Dform}.  Finally $R\in SO(d)$ on the
identity component, by continuity from the trivial path.
\end{proof}

\begin{lemma}[polar form of the rotation]\label{lem:polar}
In \eqref{eq:Dform}, $R$ is the orthogonal (polar) factor of the spatial
block $S_W$ of $W$, and $\rho\,S_3$ is its positive factor, where $S_3$
is the (symmetric positive definite) spatial block of $w_3$.
\end{lemma}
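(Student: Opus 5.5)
The plan is to read off the spatial block of the identity $W=D\,w_3$ from \eqref{eq:Ddef}, and then invoke uniqueness of the polar decomposition.

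First, all three matrices $W$, $D$, $w_3$ lie in $\Gf$. Products in $\Gf$ multiply spatial blocks directly: the lower-left block is zero, so the spatial block of a product is the product of the spatial blocks. This is the block-triangular observation already used in lemma~\ref{lem:block}. By theorem~\ref{thm:structure} the spatial block of $D$ is $\rho R$ with $R\in SO(d)$ and $\rho>0$. By proposition~\ref{prop:w3} the spatial block of $w_3$ is $S_3=c\,\hat A^{-1/2}$, where $\hat A$ is the subsonicity tensor of the normalised endpoint. Comparing spatial blocks in $W=D\,w_3$ gives
\begin{equation}
  S_W=\rho R\,S_3 = R\,(\rho S_3).
\end{equation}

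Second, $\rho S_3$ is symmetric positive definite. It is a positive multiple of $\hat A^{-1/2}$, and $\hat A\succ0$ because the endpoint is admissible (definition~\ref{def:admissible}). Meanwhile $R$ is orthogonal. Since $S_W$ is invertible ($W\in\Gf$), its polar decomposition $S_W=OP$ with $O$ orthogonal and $P\succ0$ is unique, with $P=(S_W\T S_W)^{1/2}$ and $O=S_WP^{-1}$. Uniqueness therefore identifies $O=R$ and $P=\rho S_3$, which is the claim.

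The only point needing care is to make sure the decomposition obtained has the right order: orthogonal factor on the left, positive factor on the right, i.e.\ the right polar decomposition. This follows from the order $D\,w_3$ in \eqref{eq:Ddef}.

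Two consistency checks are worth adding. One is that $\det S_W>0$, so that $O\in SO(d)$, in agreement with theorem~\ref{thm:structure}. The other is that $P^2=S_W\T S_W=\rho^2 S_3^2=\rho^2c^2\hat A^{-1}$, which matches the spatial block of the congruence $WQ_{\mathrm{end}}W\T=\Qstar$. No step is a real obstacle; the argument is bookkeeping on block-triangular products plus the standard uniqueness of the polar factor.
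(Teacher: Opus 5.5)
Your proposal is correct and follows the paper's own argument. You read off $S_W=\rho R\,S_3$ from $W=D\,w_3$ by multiplicativity of spatial blocks in $\Gf$, note that $\rho S_3\succ0$, and conclude by uniqueness of the polar decomposition of the invertible $S_W$; your extra check that $S_W\T S_W=\rho^2c^2\hat A^{-1}$ agrees with the spatial block of $W Q_{\mathrm{end}} W\T=\Qstar$ is correct but not needed.
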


\begin{proof}
Spatial blocks of block-triangular matrices multiply, so
$S_W=\rho R\,S_3$ with $\rho S_3\succ0$; by uniqueness of the polar
decomposition $S_W=R_{\mathrm{pol}}P$ ($P\succ0$), $R=R_{\mathrm{pol}}$
and $\rho S_3=P$.
\end{proof}

\subsection{The exact frequency holonomy}

\begin{lemma}[first-row propagation]\label{lem:firstrow}
Let $W=w(\bu_2)w(\bu_1)$, the increment $\bu_2$ being read in the frame
produced by $w(\bu_1)$.  Then
\begin{equation}
  e_0\T W^{-1}
  \;=\;\gamma_1\gamma_2\,\Bigl(1,\ -\tfrac{1}{c^2}\bw\T\Bigr),
  \qquad
  \bw=(1-\tfrac{\bu_1\!\cdot\bu_2}{c^{2}})\,\bu_2
      +\frac{\bu_1}{\gamma_2}
      +\Bigl(1-\frac{1}{\gamma_2}\Bigr)(\hat\bu_2\!\cdot\!\bu_1)\hat\bu_2,
  \label{eq:firstrow}
\end{equation}
with $\gamma_i=(1-|\bu_i|^2/c^2)^{-1/2}$.
\end{lemma}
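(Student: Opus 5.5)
The plan is to compute the first row directly, never forming $W^{-1}$ in full. Since $W^{-1}=w(\bu_1)^{-1}w(\bu_2)^{-1}$, the row vector $e_0\T W^{-1}$ is obtained by propagating $e_0\T$ through the two inverse factors in turn: first $w(\bu_1)^{-1}$, then $w(\bu_2)^{-1}$. By the factorisation \eqref{eq:wdef} of lemma~\ref{lem:factorisation}, together with $g(\bv)^{-1}=g(-\bv)$ and $\ell(-\bv)^{-1}=\ell(\bv)$, each inverse factorises as
\begin{equation*}
  w(\bv)^{-1}=g(-\bv)\,\ell(\bv).
\end{equation*}
It therefore suffices to know how a row vector $(a,\bm b\T)$ transforms under the matrices of $g(-\bv)$ and $\ell(\bv)$.

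Both matrices can be read off \eqref{eq:gal-lor}. The Galilean map $g(-\bv)$, $(t,\bx)\mapsto(t,\bx+\bv t)$, has matrix with rows $(1,\bm 0\T)$ and $(\bv,I)$. Hence
\begin{equation*}
  (a,\bm b\T)\,g(-\bv)=\bigl(a+\bm b\!\cdot\!\bv,\ \bm b\T\bigr).
\end{equation*}
The boost $\ell(\bv)$ has time row $\gamma(1,-\bv\T/c^2)$ and spatial rows $\bigl(-\gamma\bv,\ I+(\gamma-1)\hat\bv\hat\bv\T\bigr)$.

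\emph{First factor.} Starting from $e_0\T=(1,\bm 0\T)$, the map $g(-\bu_1)$ acts trivially. The boost $\ell(\bu_1)$ then returns its time row, so
\begin{equation*}
  e_0\T w(\bu_1)^{-1}=\gamma_1\bigl(1,-\bu_1\T/c^2\bigr).
\end{equation*}

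\emph{Second factor.} Since $\bu_2$ is read in the frame produced by $w(\bu_1)$, no further re-expression is needed, and we apply $w(\bu_2)^{-1}=g(-\bu_2)\ell(\bu_2)$ to this row. The Galilean step gives $\gamma_1\bigl(1-\bu_1\!\cdot\!\bu_2/c^2,\ -\bu_1\T/c^2\bigr)$. Applying $\ell(\bu_2)$, the time component becomes
\begin{equation*}
  \gamma_1\gamma_2\bigl[(1-\bu_1\!\cdot\!\bu_2/c^2)+\bu_1\!\cdot\!\bu_2/c^2\bigr]=\gamma_1\gamma_2 ,
\end{equation*}
which is the claimed prefactor. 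The spatial component becomes
\begin{equation*}
  -\frac{\gamma_1}{c^2}\Bigl[\gamma_2(1-\bu_1\!\cdot\!\bu_2/c^2)\,\bu_2+\bu_1+(\gamma_2-1)(\hat\bu_2\!\cdot\!\bu_1)\hat\bu_2\Bigr].
\end{equation*}
Factoring out $\gamma_1\gamma_2/c^2$ yields exactly $-\bw\T$, with $\bw$ as in \eqref{eq:firstrow}.

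The only delicate point is bookkeeping of conventions rather than any real obstacle. One must use row-vector action on the correct side. One must take the inverse of $\ell(-\bv)$ as $\ell(\bv)$, which holds because boosts along a fixed direction form a one-parameter group. And one must check the sign of $g$'s off-diagonal block. As an independent check, set $\bu_1=\bm 0$: the formula gives $\bw=\bu_2$ and recovers the time row of $w(\bu_2)^{-1}$. For collinear steps, the resulting $(Q_{\mathrm{end}})_{00}=(e_0\T W^{-1})\,\Qstar\,(e_0\T W^{-1})\T=\gamma_1^2\gamma_2^2(1-|\bw|^2/c^2)$ should reproduce $\mathcal D/(c^2(c^2-u_1^2))$ from \eqref{eq:collinear-v3}.
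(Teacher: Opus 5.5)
Your proposal is correct and follows the paper's proof: write $e_0\T W^{-1}=e_0\T g(-\bu_1)\ell(\bu_1)g(-\bu_2)\ell(\bu_2)$, propagate the row through the four factors, and note the exact cancellation of the $\bu_1\!\cdot\!\bu_2$ terms in the time component. You also write out the spatial assembly that the paper leaves implicit, and both of your checks are valid (the $\bu_1=\bm 0$ case, and the collinear reduction of $\gamma_1^2\gamma_2^2(1-|\bw|^2/c^2)$ to $\mathcal D/(c^2(c^2-u_1^2))$).
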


\begin{proof}
Since $w(\bv)^{-1}=g(-\bv)\,\ell(\bv)$, the row propagates through the
four factors of
$e_0\T W^{-1}=e_0\T\,g(-\bu_1)\,\ell(\bu_1)\,g(-\bu_2)\,\ell(\bu_2)$
as follows.

\emph{Step 1.}  The first row of $g(-\bu_1)$ is $(1,\bm0\T)$, so
$e_0\T\,g(-\bu_1)\,\ell(\bu_1)$ equals the first row of $\ell(\bu_1)$:
\begin{equation}
  e_0\T\,w(\bu_1)^{-1}=\gamma_1\bigl(1,\ -\bu_1\T/c^{2}\bigr).
\end{equation}

\emph{Step 2.}  Right multiplication by $g(-\bu_2)$ sends a row
$(a,\bm b\T)$ to $(a+\bm b\!\cdot\!\bu_2,\ \bm b\T)$:
\begin{equation}
  e_0\T\,w(\bu_1)^{-1}g(-\bu_2)
  =\gamma_1\Bigl(1-\frac{\bu_1\!\cdot\!\bu_2}{c^{2}},\
   -\frac{\bu_1\T}{c^{2}}\Bigr).
\end{equation}

\emph{Step 3.}  Right multiplication by $\ell(\bu_2)$ gives, for the
time component,
\begin{equation}
  \gamma_2\Bigl[\Bigl(1-\frac{\bu_1\!\cdot\!\bu_2}{c^{2}}\Bigr)
   +\frac{\bu_1\!\cdot\!\bu_2}{c^{2}}\Bigr]=\gamma_2,
\end{equation}
the two $\bu_1\!\cdot\!\bu_2$ terms cancelling exactly, while the
spatial part assembles to $-\gamma_1\gamma_2\,\bw\T/c^{2}$ with $\bw$ as
in \eqref{eq:firstrow}.  (The time component is verified symbolically:
appendix~\ref{app:verification}, item~V4a.)
\end{proof}

\begin{theorem}[exact frequency holonomy]\label{thm:rho}
For the two-step transport of lemma~\ref{lem:firstrow},
\begin{equation}
  q_{00}
  =\gamma_1^2\gamma_2^2\Bigl(1-\frac{|\bw|^2}{c^2}\Bigr)
  =\frac{(c^{2}-\bu_1\!\cdot\!\bu_2)^{2}-c^{2}|\bu_1|^{2}}
        {c^{2}\,(c^{2}-|\bu_1|^{2})},
  \qquad
  \rho=q_{00}^{-1/2}=\frac{\gamma(\bw)}{\gamma_1\gamma_2},
  \label{eq:rho-exact}
\end{equation}
which depends only on $|\bu_1|$ and the invariant
$\bu_1\!\cdot\!\bu_2$---not on $|\bu_2|$ separately.  In particular
$\rho=1+\bu_1\!\cdot\!\bu_2/c^{2}+O(M^4)$.
\end{theorem}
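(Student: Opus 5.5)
The plan is to compute $q_{00}$ directly from its definition and then simplify. By \eqref{eq:endpoint}, $q_{00}=e_0\T W^{-1}\Qstar W^{-\mathrm T}e_0=r\,\Qstar\,r\T$, where $r=e_0\T W^{-1}$ is the first row of $W^{-1}$. Lemma~\ref{lem:firstrow} already gives this row in closed form, $r=\gamma_1\gamma_2\bigl(1,-\bw\T/c^2\bigr)$. Since $\Qstar=\operatorname{diag}(1,-c^2I)$, we get immediately
\begin{equation*}
  q_{00}=\gamma_1^2\gamma_2^2\Bigl(1-\frac{c^2|\bw|^2}{c^4}\Bigr)
  =\gamma_1^2\gamma_2^2\Bigl(1-\frac{|\bw|^2}{c^2}\Bigr),
\end{equation*}
which is the first equality in \eqref{eq:rho-exact}. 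Writing $\gamma(\bw)=(1-|\bw|^2/c^2)^{-1/2}$, the relation $\rho=q_{00}^{-1/2}=\gamma(\bw)/(\gamma_1\gamma_2)$ then follows at once. The identification $\rho=q_{00}^{-1/2}$ is supplied by theorem~\ref{thm:structure}, which we take as given.

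The substantive step is to evaluate $|\bw|^2$ and factor the result. I would use the abbreviations $a=|\bu_1|^2$, $b=|\bu_2|^2$ and $s=\bu_1\!\cdot\!\bu_2$, and decompose $\bu_1=(s/b)\bu_2+\bu_{1\perp}$ with $\bu_{1\perp}\perp\bu_2$. The point of this decomposition is that the projection term in \eqref{eq:firstrow} combines with $\bu_1/\gamma_2$: the parallel part of $\bu_1$ loses its $1/\gamma_2$. One obtains
\begin{equation*}
  \bw=\Bigl(1-\frac{s}{c^2}+\frac{s}{b}\Bigr)\bu_2+\frac{\bu_{1\perp}}{\gamma_2},
  \qquad
  |\bw|^2=\frac{(b+s-sb/c^2)^2}{b}+\Bigl(a-\frac{s^2}{b}\Bigr)\Bigl(1-\frac{b}{c^2}\Bigr).
\end{equation*}
On expansion the $1/b$ terms cancel, giving
\begin{equation*}
  |\bw|^2=a+b+2s-\frac{2bs+s^2+ab}{c^2}+\frac{s^2b}{c^4}.
\end{equation*}

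Next I would multiply $c^2-|\bw|^2$ by $c^4$ and check the factorisation
\begin{equation*}
  c^4\bigl(c^2-|\bw|^2\bigr)=(c^2-b)\bigl[(c^2-s)^2-c^2a\bigr].
\end{equation*}
Both sides equal $c^6-(a+b+2s)c^4+(2bs+s^2+ab)c^2-s^2b$, so this is a polynomial identity. Since $\gamma_1^2\gamma_2^2=c^4/[(c^2-a)(c^2-b)]$, the factor $c^2-b$ cancels against $\gamma_2^2$, and we are left with
\begin{equation*}
  q_{00}=\frac{(c^2-s)^2-c^2a}{c^2(c^2-a)},
\end{equation*}
which is the second equality in \eqref{eq:rho-exact}. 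That cancellation is also the source of the stated independence from $|\bu_2|$: the final expression involves only $a$ and $s$. I expect this factorisation to be the main obstacle. If the hand expansion became unwieldy, I would confirm it symbolically (as item~V4a does for the time component). The degenerate case $\bu_2=0$ is handled by continuity, or trivially since then $W=w(\bu_1)$.

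Finally, for the Mach expansion, I would write
\begin{equation*}
  q_{00}=\Bigl(1-\frac{2s}{c^2}-\frac{a}{c^2}+\frac{s^2}{c^4}\Bigr)\Big/\Bigl(1-\frac{a}{c^2}\Bigr)
  =1-\frac{2s}{c^2}+O(M^4).
\end{equation*}
Taking the power $-\tfrac12$ then gives $\rho=1+\bu_1\!\cdot\!\bu_2/c^2+O(M^4)$.
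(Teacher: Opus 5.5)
Your proof is correct and follows the paper's route: you contract the first row of $W^{-1}$ from lemma~\ref{lem:firstrow} with $\Qstar$ to get $\gamma_1^2\gamma_2^2(1-|\bw|^2/c^2)$, then reduce algebraically to the closed form and expand in Mach number. The only difference is that you actually carry out the reduction, via the split of $\bu_1$ into parts parallel and perpendicular to $\bu_2$ and the factorisation $c^4(c^2-|\bw|^2)=(c^2-|\bu_2|^2)\bigl[(c^2-\bu_1\!\cdot\!\bu_2)^2-c^2|\bu_1|^2\bigr]$ (which checks out), whereas the paper delegates this step to symbolic verification (item V4b).
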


\begin{proof}
\emph{Step 1.}  Since $Q_{\mathrm{end}}=W^{-1}\Qstar W^{-\mathrm T}$,
lemma~\ref{lem:firstrow} gives
\begin{equation}
  q_{00}=(e_0\T W^{-1})\,\Qstar\,(e_0\T W^{-1})\T
  =\gamma_1^{2}\gamma_2^{2}\Bigl(1-\frac{|\bw|^{2}}{c^{2}}\Bigr).
\end{equation}

\emph{Step 2.}  The reduction of this expression to the closed form
\eqref{eq:rho-exact} is elementary algebra in the components of $\bw$,
verified symbolically for general non-collinear steps
(appendix~\ref{app:verification}, item~V4b).

\emph{Step 3: consistency check.}  For $\bu_1\!\cdot\!\bu_2=0$,
\eqref{eq:firstrow} gives $\bw=\bu_2+\bu_1/\gamma_2$, so
\begin{equation}
  1-\frac{|\bw|^{2}}{c^{2}}
  =\Bigl(1-\frac{|\bu_2|^{2}}{c^{2}}\Bigr)
   \Bigl(1-\frac{|\bu_1|^{2}}{c^{2}}\Bigr),
  \qquad q_{00}=1,
\end{equation}
in agreement with \eqref{eq:rho-exact}; the expansion
$\rho=1+\bu_1\!\cdot\!\bu_2/c^{2}+O(M^{4})$ follows directly from the
closed form.
\end{proof}

\begin{corollary}[transverse Lorentzian composition; order asymmetry]
\label{cor:einstein}
\emph{(i)} If $\bu_1\!\cdot\!\bu_2=0$ then $\bw$ coincides exactly with
the relativistic (Einstein) composition $\bu_2\oplus\bu_1$ at signal
speed $c$ (verified, item V4d), and $\rho=1$: perpendicular re-gauging
carries no frequency holonomy.
\emph{(ii)} Because \eqref{eq:rho-exact} contains $|\bu_1|$ but not
$|\bu_2|$, exchanging the order of the steps changes the holonomy: the
exchange defect
$\rho(\bu_1,\bu_2)/\rho(\bu_2,\bu_1)$ equals
$\bigl[\mathcal D(|\bu_2|)\,(c^2-|\bu_1|^2)\bigr]^{1/2}
 \bigl[\mathcal D(|\bu_1|)\,(c^2-|\bu_2|^2)\bigr]^{-1/2}$
with $\mathcal D(u)\equiv(c^2-\bu_1\!\cdot\!\bu_2)^2-c^2u^2$ (extending
the collinear $\mathcal D$ of \S\ref{ssec:collinear}), and differs from
unity at fourth order.  Transport is irreducibly path-ordered.
\end{corollary}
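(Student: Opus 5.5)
The plan is to read both parts directly off the first-row propagation lemma~\ref{lem:firstrow} and the closed form \eqref{eq:rho-exact} of theorem~\ref{thm:rho}, with no new matrix computation. Throughout, write $s\equiv\bu_1\!\cdot\!\bu_2$, $a\equiv|\bu_1|$ and $b\equiv|\bu_2|$.

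\emph{Part (i).} Setting $s=0$ in \eqref{eq:firstrow} kills two terms. The factor $(1-s/c^2)$ becomes $1$, and the projection term $(1-1/\gamma_2)(\hat\bu_2\!\cdot\!\bu_1)\hat\bu_2$ vanishes. This leaves
\begin{equation*}
  \bw=\bu_2+\bu_1/\gamma_2 .
\end{equation*}
I will compare this with the standard Einstein composition at signal speed $c$,
\begin{equation*}
  \bu\oplus\bv=\frac{\bu+\bv/\gamma_{\bu}+\bigl(\gamma_{\bu}/(1+\gamma_{\bu})\bigr)(\bu\!\cdot\!\bv/c^2)\,\bu}{1+\bu\!\cdot\!\bv/c^2},
\end{equation*}
taking $\bu=\bu_2$ and $\bv=\bu_1$. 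With $\bu\!\cdot\!\bv=0$ the denominator is $1$ and the last numerator term drops, so the formula reduces to $\bu_2+\bu_1/\gamma_2$, which equals $\bw$. The one point to fix carefully is the ordering convention in $\oplus$; I will choose it to match the order in which the factors appear in $W$.

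For $\rho=1$, I substitute $s=0$ into \eqref{eq:rho-exact}. This gives $q_{00}=(c^4-c^2a^2)/\bigl(c^2(c^2-a^2)\bigr)=1$, which is also the consistency check already done in Step~3 of theorem~\ref{thm:rho}.

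\emph{Part (ii).} Exchanging the order of the steps changes $a$ into $b$ and leaves $s$ unchanged. Hence, with the stated $\mathcal D$,
\begin{equation*}
  q_{00}(\bu_1,\bu_2)=\frac{\mathcal D(a)}{c^2(c^2-a^2)},\qquad
  q_{00}(\bu_2,\bu_1)=\frac{\mathcal D(b)}{c^2(c^2-b^2)} .
\end{equation*}
Using $\rho=q_{00}^{-1/2}$, the ratio $\rho(\bu_1,\bu_2)/\rho(\bu_2,\bu_1)=\bigl[q_{00}(\bu_2,\bu_1)/q_{00}(\bu_1,\bu_2)\bigr]^{1/2}$ is exactly the stated expression, after the factors $c^2$ cancel.

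For the order claim, I first rewrite
\begin{equation*}
  q_{00}=1+\frac{s^2-2c^2s}{c^2(c^2-a^2)} .
\end{equation*}
Expanding $(c^2-a^2)^{-1}$ in powers of $a^2/c^2$ gives $q_{00}=1-2s/c^2+s^2/c^4-2sa^2/c^6+O(M^6)$. The $a$-dependence therefore first appears at fourth order in Mach number. The two orderings then satisfy
\begin{equation*}
  q_{00}(\bu_2,\bu_1)-q_{00}(\bu_1,\bu_2)=2s(a^2-b^2)/c^6+O(M^6),
\end{equation*}
so the exchange defect is $1+s(|\bu_1|^2-|\bu_2|^2)/c^6+O(M^6)$. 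This is generically nonzero whenever $s\neq0$ and $|\bu_1|\neq|\bu_2|$, which establishes path-ordering.

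The main obstacle is conceptual rather than computational. I must justify that ``exchanging the order'' means applying theorem~\ref{thm:rho} with the pair $(\bu_2,\bu_1)$, each read in its current frame, so that the invariant $s$ is the same in both orderings. If that identification is shaky, the fallback is a symbolic check of the reversed composite $w(\bu_1)w(\bu_2)$, in the manner of the appendix verifications.
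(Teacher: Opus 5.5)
Your proposal is correct and follows the paper's route: both parts are read off from lemma~\ref{lem:firstrow} and the closed form \eqref{eq:rho-exact}, where the perpendicular reduction $\bw=\bu_2+\bu_1/\gamma_2$, $q_{00}=1$ is already Step~3 of theorem~\ref{thm:rho} and your hand comparison with the standard Einstein formula replaces the symbolic check V4d. One slip to fix: the $a$-dependent term in $q_{00}$ is $-2sa^2/c^4$, not $-2sa^2/c^6$, so the difference is $2s(a^2-b^2)/c^4$ and the exchange defect is $1+s(|\bu_1|^2-|\bu_2|^2)/c^4+O(M^6)$---dimensionless and $O(M^4)$, exactly as you conclude.
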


\subsection{The rotation channel is fourth order}

\begin{proposition}[rotation holonomy of a boost pair]\label{prop:M4}
For two boost steps of magnitudes $u_1,u_2$ with angle $\varphi$ between
them, the rotation $R$ in \eqref{eq:Dform} has angle
\begin{equation}
  \theta
  =\tfrac12\,(\gamma_1-1)(\gamma_2-1)\,\sin2\varphi + O(M^6)
  =\frac{u_1^{2}u_2^{2}}{16\,c^{4}}\,\sin 2\varphi + O(M^{6}) :
  \label{eq:theta-boost}
\end{equation}
it vanishes identically for collinear and for perpendicular steps and is
maximal at $\varphi=\pi/4$.  In particular the boost-pair rotation is
$O(M^4)$, parametrically smaller than the $O(M^2)$ Thomas--Wigner
rotation of a pure Lorentz boost pair \citep{thomas1926,wigner1939}.
\end{proposition}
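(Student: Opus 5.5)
By lemma~\ref{lem:polar}, $R$ is the orthogonal polar factor of the spatial block $S_W$ of $W=w(\bu_2)w(\bu_1)$. The plan is to compute that polar factor perturbatively in Mach number. Spatial blocks of elements of $\Gf$ multiply, so $S_W=S(\bu_2)S(\bu_1)$, where $S(\bv)$ is the spatial block of $w(\bv)$. By lemma~\ref{lem:factorisation}, $S(\bv)$ is symmetric positive definite: it acts as the identity orthogonal to $\bv$ and multiplies the $\hat\bv$ direction by $\gamma(\bv)$. In dyadic form,
\begin{equation*}
  S(\bv)=I+(\gamma-1)\,\hat\bv\hat\bv\T .
\end{equation*}
The problem is therefore purely spatial and planar. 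We need the rotation part of a product of two uniaxial stretches, $P_2P_1$ with $P_i=I+\epsilon_i\,\hat\bu_i\hat\bu_i\T$, where $\epsilon_i=\gamma_i-1=u_i^2/(2c^2)+O(M^4)$. The time row and the frequency factor $\rho$ play no role here. Indeed $\rho$ rescales $S_3$ only by a scalar, so it does not affect the polar factor.

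To extract the rotation, place both directions in the plane they span, with $\hat\bu_1=\bm e_1$ and $\hat\bu_2=(\cos\varphi,\sin\varphi)$. For a $2\times2$ matrix $F$ with positive determinant, the polar rotation angle satisfies
\begin{equation*}
  \tan\theta=\frac{F_{21}-F_{12}}{F_{11}+F_{22}} .
\end{equation*}
This is exact, because it picks out the angle maximising $\operatorname{tr}(R\T F)$. For $F=P_2P_1$, since each $P_i$ is symmetric, the antisymmetric part is
\begin{equation*}
  \tfrac12\bigl(P_2P_1-P_1P_2\bigr)=\tfrac12\,\epsilon_1\epsilon_2\,[\hat\bu_2\hat\bu_2\T,\hat\bu_1\hat\bu_1\T].
\end{equation*}
Here $[\cdot,\cdot]$ denotes the matrix commutator, and the terms linear in $\epsilon_i$ drop out. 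A direct computation of the commutator in the chosen basis gives the off-diagonal antisymmetric entry $\cos\varphi\sin\varphi=\tfrac12\sin2\varphi$. Hence
\begin{equation*}
  F_{21}-F_{12}=\epsilon_1\epsilon_2\sin\varphi\cos\varphi .
\end{equation*}
The trace is $2+\epsilon_1+\epsilon_2+\epsilon_1\epsilon_2\cos^2\varphi=2+O(M^2)$.

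This gives the exact closed form
\begin{equation*}
  \tan\theta=\frac{\epsilon_1\epsilon_2\sin\varphi\cos\varphi}{2+\epsilon_1+\epsilon_2+\epsilon_1\epsilon_2\cos^2\varphi}.
\end{equation*}
Expanding it yields $\theta=\tfrac12(\gamma_1-1)(\gamma_2-1)\cdot\tfrac12\sin2\varphi\,(1+O(M^2))$. The leading term has the form of the stated $\tfrac12(\gamma_1-1)(\gamma_2-1)\sin2\varphi$ up to a factor of~$2$. The constant is where I expect the main obstacle. I must check carefully which normalisation of $\sin2\varphi$ versus $\sin\varphi\cos\varphi$ the paper intends, and check it against the second equality: substituting $\epsilon_i\approx u_i^2/(2c^2)$ gives $u_1^2u_2^2\sin2\varphi/(16c^4)$. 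That matches $\tfrac14\epsilon_1\epsilon_2\sin2\varphi$, so the intended middle expression must be $\tfrac12(\gamma_1-1)(\gamma_2-1)\sin\varphi\cos\varphi$. I would therefore fix the constant against the $M^4$ form and confirm it numerically, as in appendix~\ref{app:verification}.

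Two further points need attention. First, the frame in which $\bu_2$ is read: because $S(\bu_1)$ leaves the time row irrelevant and the spatial blocks compose directly, reading $\bu_2$ in the frame produced by $w(\bu_1)$ does not alter the stretches. Second, the correction terms must be checked to be $O(M^6)$, which holds because $\epsilon_i=O(M^2)$ throughout.

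The remaining claims then follow. For collinear or perpendicular steps, $\sin2\varphi=0$; moreover, the exact formula has numerator $\propto\sin\varphi\cos\varphi$, so $\theta\equiv0$ to all orders. For fixed magnitudes the leading term is maximal at $\varphi=\pi/4$. Finally, I will contrast this with the Thomas--Wigner rotation. Pure boosts have $\bv t$ mixing of order $M$ in the spatial rows, and this produces an $O(u_1u_2/c^2)$ antisymmetric part. In $w(\bv)$ the Galilean factor $g(\bv)$ removes exactly these first-order off-diagonal terms, leaving symmetric stretches that deviate from $I$ only at $O(M^2)$. Their commutator is therefore $O(M^4)$.
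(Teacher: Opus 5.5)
Your argument is correct and is essentially the paper's proof: spatial blocks multiply, $R$ is the polar factor of $S(\bu_2)S(\bu_1)$ by lemma~\ref{lem:polar}, and the antisymmetric part comes from the commutator of the two rank-one projectors. Your exact planar formula $\tan\theta=(F_{21}-F_{12})/(F_{11}+F_{22})$ sharpens the paper's leading-order ``antisymmetric entry'' step, making both the $O(M^6)$ remainder and the exact vanishing at $\varphi=0,\pi/2$ explicit. Your resolution of the constant is also right: the paper's own proof arrives at $\theta=\tfrac12(\gamma_1-1)(\gamma_2-1)\sin\varphi\cos\varphi$. The middle expression of \eqref{eq:theta-boost} should therefore read $\tfrac14(\gamma_1-1)(\gamma_2-1)\sin2\varphi$, which is consistent with the $u_1^2u_2^2/(16c^4)$ form.
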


\begin{proof}
\emph{Step 1.}  By lemma~\ref{lem:factorisation} the spatial block of
$w(\bv)$ is $S(\bv)=I+(\gamma-1)\hat\bv\hat\bv\T$ (the Galilean factor
does not touch it), and by lemma~\ref{lem:polar} the angle $\theta$ is
the polar rotation angle of the product $S(\bu_2)S(\bu_1)$.

\emph{Step 2.}  Writing $P_i=\hat\bu_i\hat\bu_i\T$,
\begin{equation}
  S(\bu_2)S(\bu_1)
  =I+(\gamma_1-1)P_1+(\gamma_2-1)P_2
   +(\gamma_1-1)(\gamma_2-1)\,P_2P_1 ,
\end{equation}
whose antisymmetric part is
\begin{equation}
  \tfrac12(\gamma_1-1)(\gamma_2-1)\,[P_2,P_1]
  =\tfrac12(\gamma_1-1)(\gamma_2-1)\,
   (\hat\bu_1\!\cdot\!\hat\bu_2)
   \bigl(\hat\bu_2\hat\bu_1\T-\hat\bu_1\hat\bu_2\T\bigr).
\end{equation}
For a near-identity matrix the polar rotation angle equals its
antisymmetric entry to leading order, giving
$\theta=\tfrac12(\gamma_1-1)(\gamma_2-1)\cos\varphi\sin\varphi
 +O(M^{6})$, which is \eqref{eq:theta-boost} upon
$\gamma_i-1=u_i^{2}/(2c^{2})+O(M^{4})$.

\emph{Step 3: exact vanishing.}  For $\varphi=0$ or $\pi/2$ the
projectors $P_1$ and $P_2$ commute, so $S(\bu_2)S(\bu_1)$ is symmetric
and its polar rotation is the identity: $\theta=0$ exactly.  The
collinear case is also confirmed numerically to machine precision,
$|D-\rho I|_{\max}=2.2\times10^{-16}$
(appendix~\ref{app:verification}, item~V7a).
\end{proof}

\subsection{Degeneration of transport: a sonic horizon in analogy space}
\label{ssec:horizon}

\begin{theorem}[horizon]\label{thm:horizon}
Two-step transport has an admissible endpoint if and only if
\begin{equation}
  \bigl(c^{2}-\bu_1\!\cdot\!\bu_2\bigr)^{2} \;>\; c^{2}\,|\bu_1|^{2},
  \qquad\text{i.e.}\qquad
  \bu_1\!\cdot\!\bu_2 \;<\; c\,\bigl(c-|\bu_1|\bigr)
  \label{eq:horizon}
\end{equation}
on the branch connected to the origin.  At equality, $q_{00}\to0$ and
$\rho\to\infty$, and in the collinear case the entire subsonicity tensor
degenerates \emph{simultaneously}: by
\eqref{eq:collinear-v3}--\eqref{eq:collinear-A} the quantities
$q_{00}$, $A_{11}$, $A_{22}$ share the factor
$\mathcal{D}=(c^2-u_1u_2)^2-c^2u_1^2$, whose numerator companions are
positive for subsonic steps.  In Mach numbers the collinear boundary is
\begin{equation}
  M_1M_2+M_1=1 ,
  \label{eq:horizon-mach}
\end{equation}
and for equal steps $M_1=M_2=M_*$ it is the positive root of
$M^2+M=1$,
\begin{equation}
  M_* \;=\;\frac{\sqrt5-1}{2}\;\approx\;0.618 ,
  \label{eq:golden}
\end{equation}
the inverse golden ratio.  Perpendicular steps never reach the boundary.
\end{theorem}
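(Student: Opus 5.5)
The plan is to reduce admissibility of the normalised endpoint to the sign of $q_{00}$, which theorem~\ref{thm:rho} already gives in closed form. The endpoint cometric $Q_{\mathrm{end}}=W^{-1}\Qstar W^{-\mathrm T}$ is congruent to $\Qstar$, so it always has Lorentzian signature. Admissibility of the normalised endpoint $Q_{\mathrm{end}}/q_{00}$ has two parts. First, the normalisation must exist with positive factor, i.e.\ $q_{00}>0$. Second, by lemma~\ref{lem:admissible}(ii), the spatial block must be negative definite, i.e.\ the coordinate hyperplanes $t=\text{const}$ must be spacelike.

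To tie both conditions to $q_{00}$, I would note that $q_{00}=Q_{\mathrm{end}}(dt,dt)$ is the squared norm of the covector $dt$ pulled back through $W$. Since $W\in\Gf$, its inverse is block-triangular, so the spatial block of $Q_{\mathrm{end}}$ is $S_W^{-1}(-c^2I)S_W^{-\mathrm T}\prec0$ automatically. Hence $A\succ0$ holds whenever the normalisation is legitimate. Therefore, by lemma~\ref{lem:admissible} and the stabiliser argument of theorem~\ref{thm:structure}, the endpoint is admissible if and only if $q_{00}>0$. Theorem~\ref{thm:rho} gives $q_{00}=\mathcal D/[c^2(c^2-|\bu_1|^2)]$ with positive denominator, so the criterion is $\mathcal D>0$, which is the first form of \eqref{eq:horizon}.

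Taking square roots, $\mathcal D>0$ reads $|c^2-\bu_1\!\cdot\!\bu_2|>c|\bu_1|$. On the branch connected to the origin (where $\bu_1\!\cdot\!\bu_2=0$ and $c^2>c|\bu_1|$), the quantity $c^2-\bu_1\!\cdot\!\bu_2$ is positive. This gives $\bu_1\!\cdot\!\bu_2<c(c-|\bu_1|)$. The other branch, $\bu_1\!\cdot\!\bu_2>c(c+|\bu_1|)$, is unreachable anyway, since $|\bu_1\!\cdot\!\bu_2|<c^2$.

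At equality, $q_{00}\to0$ and $\rho=q_{00}^{-1/2}\to\infty$. For the collinear statement I would read off \eqref{eq:collinear-v3}--\eqref{eq:collinear-A}. After normalisation by $q_{00}$, the subsonicity entries become $A_{ii}/q_{00}$, which carry $\mathcal D^{-1}\cdot\mathcal D^{-1}$ factors in combination. More simply, the unnormalised entries $A_{11},A_{22}$ and $q_{00}$ all have $\mathcal D$ as their only vanishing factor. The numerators $(c^2-u_1^2)^2(c^2-u_2^2)$, $c^4(c^2-u_1^2)$ and $c^2(c^2-u_1^2)$ are strictly positive for subsonic steps, so all three degenerate simultaneously. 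With $u_i=M_ic$ and $M_1,M_2\ge0$ (same direction), the condition $c^2-u_1u_2=cu_1$ becomes $1-M_1M_2=M_1$, which is \eqref{eq:horizon-mach}. Setting $M_1=M_2$ gives $M^2+M-1=0$, whose positive root is \eqref{eq:golden}.

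For perpendicular steps, $\bu_1\!\cdot\!\bu_2=0<c(c-|\bu_1|)$ whenever $|\bu_1|<c$, so the boundary is never reached; equivalently $q_{00}=1$, as in corollary~\ref{cor:einstein}.

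The main obstacle is the equivalence "admissible $\iff q_{00}>0$". I would handle it by the block-triangular argument above, using $S_W$ invertible. If that proves delicate, I would instead argue by continuity along the path from the origin: admissibility can first fail only when $q_{00}$ or $\det A$ hits zero, and the collinear formulas together with theorem~\ref{thm:rho} show that this happens only at $\mathcal D=0$.
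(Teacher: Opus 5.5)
Your proposal is correct and follows the paper's skeleton: reduce admissibility to $q_{00}>0$, invoke the closed form of theorem~\ref{thm:rho}, read the collinear factors off \eqref{eq:collinear-v3}--\eqref{eq:collinear-A}, and solve $M^{2}+M=1$. It also adds one ingredient that the paper's proof lacks. The paper's Step~1 lists $A\succ0$ as a second requirement and then identifies $q_{00}>0$ with \eqref{eq:horizon}. That proves necessity, but sufficiency for non-collinear steps is never argued, because $A$ is computed explicitly only in the collinear case.

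Your observation closes that gap for arbitrary step directions. Since $W^{-1}$ is block upper triangular, the spatial block of $Q_{\mathrm{end}}$ is $-c^{2}S_W^{-1}S_W^{-\mathrm T}$, so the normalised subsonicity tensor is exactly $A=c^{2}S_W^{-1}S_W^{-\mathrm T}/q_{00}$. This also shows that the whole tensor changes sign together with $q_{00}$ for every step geometry, not only the collinear one. Your remark that $|\bu_1\!\cdot\!\bu_2|<c^{2}$ empties the second branch shows, in addition, that the ``branch connected to the origin'' qualifier is vacuous for subsonic steps.

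One bookkeeping slip in the collinear paragraph should be corrected.
\begin{itemize}
\item The entries in \eqref{eq:collinear-A} are already the normalised ones, as your own formula confirms: $(S_W)_{22}=1$ gives $A_{22}=c^{2}\cdot c^{2}(c^{2}-u_1^{2})/\mathcal D$. Dividing them again by $q_{00}$ would produce a spurious $\mathcal D^{-2}$ and wrongly suggest that nothing changes sign.
\item $\mathcal D$ sits in the numerator of $q_{00}$ but in the denominators of $A_{11},A_{22}$. So as $\mathcal D\to0^{+}$ one has $q_{00}\to0^{+}$ while $A_{11},A_{22}\to+\infty$. The ``simultaneous degeneration'' is a simultaneous sign change, not a common vanishing.
\item $c^{2}(c^{2}-u_1^{2})$ is the denominator of $q_{00}$, not a numerator companion.
\end{itemize}
With this fixed, arguing from your block formula rather than from the collinear tables is the cleaner route to both the equivalence and the simultaneity claim.
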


\begin{proof}
\emph{Step 1: location of the boundary.}  Admissibility of the
normalised endpoint requires a positive conformal normalisation,
$q_{00}>0$, together with $A\succ0$ (lemma~\ref{lem:admissible}); by
\eqref{eq:rho-exact}, the condition $q_{00}>0$ is exactly
\eqref{eq:horizon}.

\emph{Step 2: simultaneity in the collinear case.}  The exact data
\eqref{eq:collinear-v3}--\eqref{eq:collinear-A} show that $q_{00}$,
$A_{11}$ and $A_{22}$ share the factor $\mathcal D$, whose companion
factors are positive for subsonic steps; all three quantities therefore
change sign together at $\mathcal D=0$, which is
\eqref{eq:horizon-mach}.

\emph{Step 3: equal steps.}  Setting $M_1=M_2=M_*$ in
\eqref{eq:horizon-mach} gives $M_*^{2}+M_*=1$, whose positive root is
\eqref{eq:golden}; the root was also obtained by exact symbolic solution
(appendix~\ref{app:verification}, item~V5c).  Perpendicular steps have
$\bu_1\!\cdot\!\bu_2=0<c\,(c-|\bu_1|)$ and never reach the boundary.

\emph{Physical reading.}  At the boundary the time covector of the
transported cometric becomes null: the surfaces $t=\text{const}$ cease
to be spacelike, and the frequency decomposition that the transport is
built to preserve degenerates---two individually subsonic re-gaugings
have compounded to an effectively supersonic one.  Within the uniform
theory this is a degeneration of the \emph{transport}---the point at
which a finite re-gauging step leaves the subsonic cometric cone---and
not a claim of physical shock formation; whether a genuine sheared flow
exhibits an observable counterpart is a question for the
variable-coefficient theory (\S\ref{sec:discussion}), not for the
constant-coefficient construction used here.
\end{proof}

\begin{conjecture}[horizon evasion by refinement]\label{conj:evasion}
Transport along a sufficiently refined path (many small steps) reaches
every subsonic endpoint admissibly; the horizon \eqref{eq:horizon} is a
finite-step phenomenon.  This is consistent with the continuum limit of
\S\ref{sec:curvature}, in which the frequency factor integrates to the
finite value $\gamma(\bu_b)/\gamma(\bu_a)$, but a proof for arbitrary
finite refinements is open.
\end{conjecture}

\section{Continuum limit: flatness of the boost sector and
localisation of curvature}\label{sec:curvature}

The discrete holonomy of \S\ref{sec:holonomy} depends on the step
sizes of the path.  This section computes its continuum limit and
identifies exactly where a step-size-independent obstruction
survives.

\subsection{The connection form of the frequency channel}

\begin{proposition}[exactness of the frequency connection]
\label{prop:flat}
Let the second step in theorem~\ref{thm:rho} be infinitesimal,
$\bu_2=\mathrm{d}\bu$ at basepoint $\bu_1=\bu$.  Then
\begin{equation}
  q_{00}=1-\frac{2\,\bu\!\cdot\!\mathrm{d}\bu}{c^{2}-|\bu|^{2}}
        +O(|\mathrm{d}\bu|^{2}),
  \qquad\text{hence}\qquad
  \mathrm{d}\ln\rho
  =\frac{\bu\!\cdot\!\mathrm{d}\bu}{c^{2}-|\bu|^{2}}
  =\mathrm{d}\ln\gamma(\bu).
  \label{eq:connection-form}
\end{equation}
The frequency connection form is exact.  Consequently continuum transport
along any smooth path multiplies frequencies by
$\gamma(\bu_b)/\gamma(\bu_a)$ independently of the path; every smooth
loop has trivial frequency holonomy; and the discrete two-step holonomy
$\rho=\gamma(\bw)/(\gamma_1\gamma_2)$ of theorem~\ref{thm:rho} is
precisely the $\gamma$-composition defect of finite steps.  Since the
per-step rotation is quadratic in the increment
(proposition~\ref{prop:M4}), the rotation channel also vanishes in the
continuum limit: \emph{the boost sector is continuum-flat}.
\end{proposition}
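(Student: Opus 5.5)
The plan is to take the closed form \eqref{eq:rho-exact} of theorem~\ref{thm:rho} as given and linearise it in the second step. Setting $\bu_1=\bu$ and $\bu_2=\mathrm{d}\bu$, the numerator becomes $(c^2-\bu\!\cdot\!\mathrm{d}\bu)^2-c^2|\bu|^2=c^2(c^2-|\bu|^2)-2c^2\,\bu\!\cdot\!\mathrm{d}\bu+O(|\mathrm{d}\bu|^2)$. Dividing by $c^2(c^2-|\bu|^2)$ gives the stated expansion of $q_{00}$ directly. Since $\rho=q_{00}^{-1/2}$, we have $\ln\rho=-\tfrac12\ln q_{00}$, and this yields $\mathrm{d}\ln\rho=\bu\!\cdot\!\mathrm{d}\bu/(c^2-|\bu|^2)$. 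Independently, $\ln\gamma(\bu)=-\tfrac12\ln(1-|\bu|^2/c^2)$ has differential $(\bu\!\cdot\!\mathrm{d}\bu/c^2)/(1-|\bu|^2/c^2)$, which is the same expression. So the one-form is exact, with potential $\ln\gamma$.

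One point of interpretation needs care before integrating. The basepoint variable along a path should be the accumulated velocity $\bu$, and each increment is read in the current frame. I would parametrise a smooth path by $\bu(s)$ and take the $k$-th step as $\mathrm{d}\bu=\bu(s_{k+1})-\bu(s_k)$ expressed at the current basepoint. The task is then to check that the frequency factor of the composite is multiplicative to first order: the time coefficients $p$ multiply in $\Gf$ (lemma~\ref{lem:block}), so $\ln\rho$ adds across steps. The Riemann sum of $\mathrm{d}\ln\gamma$ then converges to $\ln\gamma(\bu_b)-\ln\gamma(\bu_a)$. The $O(|\mathrm{d}\bu|^2)$ remainders sum to $O(\max|\Delta s|)$, which vanishes in the limit. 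Path independence and trivial loop holonomy are then immediate from exactness.

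I expect the main obstacle to be justifying that the basepoint of the infinitesimal two-step formula is the true current medium. After several steps the medium is anisotropic (theorem~\ref{thm:nonclosure}), not classical $(\bu,c^2I)$. I would try first to argue that the anisotropy of the accumulated endpoint is itself $O(\text{step}^2)$ per step, so it contributes only at second order to $q_{00}$. A cleaner alternative is to use the canonical $w_3$ of proposition~\ref{prop:w3} at each stage and note that its $p=(1+\bv\T A^{-1}\bv)^{-1/2}$ agrees with $1/\gamma$ to the needed order.

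The identification of the discrete two-step holonomy as the composition defect is a restatement of $\rho=\gamma(\bw)/(\gamma_1\gamma_2)$ from theorem~\ref{thm:rho}. The exact-integral value for the path from $0$ to the endpoint would be $\gamma(\text{end})/1$, and the finite-step value differs from it by this ratio.

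For the rotation channel I would invoke proposition~\ref{prop:M4}. Near the identity, the per-step polar rotation is controlled by the antisymmetric part $\tfrac12(\gamma(\bu)-1)(\gamma(\mathrm{d}\bu)-1)[P_2,P_1]$, which is $O(|\mathrm{d}\bu|^2)$. Summing over $N$ steps of size $O(1/N)$ gives a total rotation of $O(1/N)\to0$, so the boost sector is continuum-flat.
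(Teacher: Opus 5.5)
Your linearisation of \eqref{eq:rho-exact}, its identification with $\mathrm d\ln\gamma(\bu)$, and your appeals to theorem~\ref{thm:rho} and proposition~\ref{prop:M4} are exactly the paper's proof. The paper then integrates the exact one-form without explaining why a two-step formula at a \emph{classical} basepoint governs a multi-step path. You are right that this needs an argument, but two of the steps you propose would fail.

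First, the anisotropy created per step is first order, not second. Put $u_2=\mathrm du$ into \eqref{eq:collinear-v3}--\eqref{eq:collinear-A} and form $C=A+\bv_3\bv_3\T$; this gives $C_{11}/C_{22}=1+2u_1\,\mathrm du/(c^2-u_1^2)+O(\mathrm du^2)$. In the continuum limit, a path from rest whose increments sum to $\boldsymbol V$ ends at $\bv=\gamma^2\boldsymbol V$, $C=c^2\gamma^2 I+\gamma^4\boldsymbol V\boldsymbol V\T$ with $\gamma=\gamma(\boldsymbol V)$. That anisotropy does not shrink with the step size.

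Second, multiplicativity of time coefficients in $\Gf$ yields only $p(W)=\prod_k\gamma(\bv_k)^{-1}$, which tends to $1$ in the continuum. The holonomy $D=Ww_3^{-1}$ has time coefficient $\rho=p(W)/p(w_3)$. In the limit all of $\rho$ therefore sits in the endpoint closure $w_3$, which is not a product over steps.

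The missing observation is that $q_{00}=(e_0\T W^{-1})\,\Qstar\,(e_0\T W^{-1})\T$ involves only the first row of $W^{-1}$. That row propagates autonomously: $e_0\T W_{k+1}^{-1}=(e_0\T W_k^{-1})\,w(\bv_{k+1})^{-1}$. Write $e_0\T W_k^{-1}=r_k\,(1,\,-\bu^{(k)\mathrm{T}}/c^{2})$. The row computation of lemma~\ref{lem:firstrow} then applies verbatim and gives $r_{k+1}=\gamma(\bv_{k+1})\,r_k$ and $\bu^{(k+1)}=\bw(\bu^{(k)},\bv_{k+1})=\bu^{(k)}+\bv_{k+1}+O(|\bv_{k+1}|^{2})$, with $\bw$ the composition law \eqref{eq:firstrow}. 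Hence, exactly:
\begin{itemize}
\item[(i)] $\rho_k=\gamma(\bu^{(k)})/\prod_{j\le k}\gamma(\bv_j)$;
\item[(ii)] $p(w_3)=1/\gamma(\bu^{(k)})$, which is the correct version of your second idea, with the first-row velocity $\bu^{(k)}$ rather than the medium velocity;
\item[(iii)] $\rho_{k+1}/\rho_k$ is precisely the two-step holonomy \eqref{eq:rho-exact} at the classical basepoint $\bu^{(k)}$, whatever the anisotropy of the current medium.
\end{itemize}
Your Riemann sum then goes through with $\bu^{(k)}$ as path coordinate. Since $\prod_j\gamma(\bv_j)\to1$ along a refining path, the ratio of $\rho$ between the ends tends to $\gamma(\bu_b)/\gamma(\bu_a)$ directly.

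For the rotation channel, likewise do not add per-triangle rotations, since polar factors do not compose. By lemma~\ref{lem:polar}, $R$ is the polar factor of $S_W=\prod_k\bigl(I+(\gamma(\bv_k)-1)\hat\bv_k\hat\bv_k\T\bigr)=I+O\bigl(\sum_k|\bv_k|^{2}\bigr)$, which tends to the identity.
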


\begin{proof}
\emph{Step 1.}  Setting $\bu_1=\bu$ and $\bu_2=\mathrm d\bu$ in
\eqref{eq:rho-exact} and expanding to first order in $\mathrm d\bu$
gives the stated form of $q_{00}$, hence
$\mathrm d\ln\rho=\bu\!\cdot\!\mathrm d\bu/(c^{2}-|\bu|^{2})
 =\mathrm d\ln\gamma(\bu)$.

\emph{Step 2.}  The one-form is exact, so its integral along any smooth
path from $\bu_a$ to $\bu_b$ equals
$\ln\bigl[\gamma(\bu_b)/\gamma(\bu_a)\bigr]$, independently of the
path, and vanishes on loops.  The identification of the discrete defect
and the vanishing of the continuum rotation channel follow from
theorem~\ref{thm:rho} and proposition~\ref{prop:M4} respectively.
\end{proof}

\subsection{Lie-algebra structure and the localisation theorem}

\begin{lemma}[generators]\label{lem:generators}
The one-parameter families of canonical maps have the expansions
\begin{equation}
  g(\varepsilon\,\delta\bv)=I+\varepsilon
   \begin{pmatrix}0&0\\-\delta\bv&0\end{pmatrix}+O(\varepsilon^2),
  \qquad
  \ell(-\varepsilon\,\delta\bv)=I+\varepsilon
   \begin{pmatrix}0&\delta\bv\T/c^2\\ \delta\bv&0\end{pmatrix}
   +O(\varepsilon^2),
\end{equation}
so that the boost generator is the strictly upper time-row matrix
\begin{equation}
  b(\delta\bv)\;\equiv\;
  \frac{\mathrm d}{\mathrm d\varepsilon}\Big|_{0} w(\varepsilon\,\delta\bv)
  =\begin{pmatrix}0&\delta\bv\T/c^{2}\\ 0&0\end{pmatrix} :
  \label{eq:bgen}
\end{equation}
\emph{the Galilean factor cancels exactly the spatial--time column of the
Lorentz generator}---the entries responsible for the Thomas--Wigner
rotation of pure boosts.  For an anisotropy increment,
$w_3(\bm 0,\,c^2I+\varepsilon\,\delta C)
 =I+\varepsilon\,a(\delta C)+O(\varepsilon^2)$ with
\begin{equation}
  a(\delta C)=\begin{pmatrix}0&0\\[2pt] 0&-\,\delta C/(2c^{2})
  \end{pmatrix}.
  \label{eq:agen}
\end{equation}
(Item V6a verifies \eqref{eq:bgen} along three directions.)
\end{lemma}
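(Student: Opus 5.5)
The plan is direct first-order expansion of the explicit maps, in three parts.

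\emph{Galilean and Lorentz generators.} By \eqref{eq:gal-lor}, $g(\varepsilon\delta\bv)$ sends $(t,\bx)\mapsto(t,\bx-\varepsilon\delta\bv\,t)$. Its matrix is therefore exactly $I$ plus $\varepsilon$ times the block with $-\delta\bv$ in the spatial--time column, with no higher-order terms. For $\ell(-\varepsilon\delta\bv)$ we use $\gamma=1+O(\varepsilon^2)$. The term $(\gamma-1)(\hat{\bv}\!\cdot\!\bx)\hat{\bv}$ is then $O(\varepsilon^2)$ and drops out, leaving
\[
  t'=t+\varepsilon\,\delta\bv\!\cdot\!\bx/c^2+O(\varepsilon^2),\qquad
  \bx'=\bx+\varepsilon\,\delta\bv\,t+O(\varepsilon^2).
\]
This gives the stated Lorentz generator.

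\emph{Boost generator.} Since both factors are $I+O(\varepsilon)$, the derivative at $0$ of the product $w=\ell(-\varepsilon\delta\bv)\,g(\varepsilon\delta\bv)$ in \eqref{eq:wdef} is the sum of the two generators. The $\pm\delta\bv$ entries in the spatial--time column cancel, and only the time-row entry $\delta\bv\T/c^2$ survives, which is \eqref{eq:bgen}. As a cross-check, \eqref{eq:w-explicit} with $M=\varepsilon\delta v/c$ and $\beta=1+O(\varepsilon^2)$ gives $\tau=t+\varepsilon\,\delta v\,x_1/c^2$ and $X_1=x_1+O(\varepsilon^2)$, in agreement.

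\emph{Anisotropy generator.} We apply proposition~\ref{prop:w3} at $\bv=\bm 0$ with $C=c^2I+\varepsilon\,\delta C$. Then $A=C$ and $\bq=p\,A^{-1}\bv=\bm0$. The time coefficient is $p=(1+0)^{-1/2}=1$ exactly. The spatial block is
\[
  S=c\,A^{-1/2}=\bigl(I+\varepsilon\,\delta C/c^2\bigr)^{-1/2}=I-\varepsilon\,\delta C/(2c^2)+O(\varepsilon^2),
\]
by the first-order expansion of the inverse square root of a near-identity symmetric matrix. This expansion is legitimate because the positive square root is analytic near $I$, and $\delta C$ is symmetric. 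The result is \eqref{eq:agen}.

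The only point needing care is keeping the orientation and index conventions straight. This means confirming that $\ell(-\bv)$ carries $+\delta\bv$ (not $-\delta\bv$) in both its time row and its spatial--time column. It also means confirming that right composition in \eqref{eq:wdef} corresponds to matrix product order, so that the cancellation occurs rather than a doubling. Checking against the already established explicit form \eqref{eq:w-explicit} settles both signs.
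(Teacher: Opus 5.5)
Your proposal is correct and follows the paper's proof essentially line by line: the Galilean generator is exact, $\gamma=1+O(\varepsilon^2)$ makes the spatial--time columns cancel in the product, and the anisotropy generator comes from $p=1$, $\bq=\bm0$ and the first-order expansion of $S=cA^{-1/2}$. Your closing worry about product order is moot, since at $\varepsilon=0$ the derivative of a product of near-identity factors is the sum of the generators in either order; the cancellation therefore comes solely from the signs in $g(\bv)$ and $\ell(-\bv)$, which you have right.
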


\begin{proof}
\emph{Boost generator.}  The Galilean expansion is exact.  For the
Lorentz factor, $\gamma=1+O(\varepsilon^{2})$, so the time row of
$\ell(-\varepsilon\,\delta\bv)$ is
$(\gamma,\ \gamma\varepsilon\,\delta\bv\T/c^{2})$ and its spatial--time
column is $+\gamma\varepsilon\,\delta\bv$.  Multiplying the two factors
and discarding $O(\varepsilon^{2})$, the spatial--time columns cancel
between the factors and \eqref{eq:bgen} remains.

\emph{Anisotropy generator.}  For $A=c^{2}I+\varepsilon\,\delta C$ at
$\bv=\bm0$, proposition~\ref{prop:w3} gives $p=1$, $\bq=\bm0$ and
$S=c\,A^{-1/2}=I-\varepsilon\,\delta C/(2c^{2})+O(\varepsilon^{2})$,
which is \eqref{eq:agen}.
\end{proof}

\begin{theorem}[localisation of curvature]\label{thm:curvature}
Write $\mathfrak g_f=\mathfrak h\oplus\mathfrak m$ with
$\mathfrak h=\mathfrak{so}(d)$ (spatial rotations, the Lie algebra of the
stabiliser of $\Qstar$, lemma~\ref{lem:stabiliser}) and
$\mathfrak m=\bigl\{\bigl(\begin{smallmatrix}\alpha&\bm\beta\T\\
0&\Sigma\end{smallmatrix}\bigr):\ \Sigma=\Sigma\T\bigr\}$ (the tangent
space of the rotation-free section).  The decomposition is reductive,
$[\mathfrak h,\mathfrak m]\subseteq\mathfrak m$, and the generators
\eqref{eq:bgen}--\eqref{eq:agen} lie in $\mathfrak m$ with brackets
\begin{equation}
  [\,b(\delta\bv_1),\,b(\delta\bv_2)\,]=0,\qquad
  [\,b(\delta\bv),\,a(\delta C)\,]
   = b\!\left(-\frac{\delta C\,\delta\bv}{2c^{2}}\right)
   \in\mathfrak m,
  \label{eq:brackets-bm}
\end{equation}
\begin{equation}
  [\,a(\delta C_1),\,a(\delta C_2)\,]
  =\begin{pmatrix}0&0\\[2pt]
   0&\dfrac{[\delta C_1,\delta C_2]}{4c^{4}}\end{pmatrix}
   \;\in\;\mathfrak h .
  \label{eq:brackets-aa}
\end{equation}
Consequently the canonical connection associated with the section
(curvature $-[X,Y]_{\mathfrak h}$ and torsion $-[X,Y]_{\mathfrak m}$ at
the base point; \citealp{kobayashinomizu1969}) has \emph{rotational
curvature identically zero on boost--boost and boost--anisotropy planes,
and on anisotropy--anisotropy planes equal to the commutator
$[\delta C_1,\delta C_2]/(4c^4)$}, nonzero exactly when the two
sound-speed-tensor increments fail to commute (e.g.\ a normal stretch and
a $45^\circ$ shear).  The connection carries torsion on mixed
boost--anisotropy planes, by the second bracket of
\eqref{eq:brackets-bm}.
\end{theorem}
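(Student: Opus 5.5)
The proof is direct block-matrix computation in $\mathfrak g_f$, the Lie algebra of $\Gf$. By lemma~\ref{lem:block} this algebra consists of the block upper-triangular matrices $\bigl(\begin{smallmatrix}\alpha&\bm\beta\T\\0&X\end{smallmatrix}\bigr)$ with $X$ an arbitrary $d\times d$ matrix. Splitting $X$ into its antisymmetric and symmetric parts gives $\mathfrak g_f=\mathfrak h\oplus\mathfrak m$ as vector spaces, with $\mathfrak h$ embedded as $\bigl(\begin{smallmatrix}0&0\\0&\Omega\end{smallmatrix}\bigr)$, $\Omega\T=-\Omega$. That $\mathfrak h$ is the Lie algebra of the stabiliser follows from lemma~\ref{lem:stabiliser}: the identity component of the isometric stabiliser ($\lambda=1$) is $SO(d)$.

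\emph{Step 1 (reductivity).} Take $H=\bigl(\begin{smallmatrix}0&0\\0&\Omega\end{smallmatrix}\bigr)$ and $Y=\bigl(\begin{smallmatrix}\alpha&\bm\beta\T\\0&\Sigma\end{smallmatrix}\bigr)$. Then
\[
[H,Y]=\begin{pmatrix}0&-\bm\beta\T\Omega\\0&[\Omega,\Sigma]\end{pmatrix}.
\]
The commutator of an antisymmetric and a symmetric matrix is symmetric, because $[\Omega,\Sigma]\T=[\Sigma\T,\Omega\T]=[\Sigma,-\Omega]=[\Omega,\Sigma]$. Hence $[\mathfrak h,\mathfrak m]\subseteq\mathfrak m$. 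Membership of the generators follows from lemma~\ref{lem:generators}. The generator $b(\delta\bv)$ has $\alpha=0$, $\bm\beta=\delta\bv/c^2$ and $\Sigma=0$. The generator $a(\delta C)$ has spatial block $-\delta C/(2c^2)$, which is symmetric because sound-speed tensors are symmetric by definition~\ref{def:admissible}.

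\emph{Step 2 (brackets).} Write $N_{\bm\beta}$ for the nilpotent time-row matrix and $D_\Sigma$ for the spatial block matrix, so that $b(\delta\bv)=N_{\delta\bv/c^2}$ and $a(\delta C)=D_{-\delta C/(2c^2)}$. Multiplying blocks gives three facts:
\begin{itemize}
\item $N_{\bm\beta_1}N_{\bm\beta_2}=0$, so $[b,b]=0$.
\item $N_{\bm\beta}D_\Sigma=N_{\Sigma\T\bm\beta}$ and $D_\Sigma N_{\bm\beta}=0$, so $[N_{\bm\beta},D_\Sigma]=N_{\Sigma\bm\beta}$ for symmetric $\Sigma$. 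With $\bm\beta=\delta\bv/c^2$ and $\Sigma=-\delta C/(2c^2)$ this is $N_{-\delta C\,\delta\bv/(2c^4)}=b\bigl(-\delta C\,\delta\bv/(2c^2)\bigr)$, which matches \eqref{eq:brackets-bm}.
\item $[D_{\Sigma_1},D_{\Sigma_2}]=D_{[\Sigma_1,\Sigma_2]}$. With $\Sigma_i=-\delta C_i/(2c^2)$ this gives $[\delta C_1,\delta C_2]/(4c^4)$.
\end{itemize}
The commutator of two symmetric matrices is antisymmetric, so this last bracket lies in $\mathfrak h$, which is \eqref{eq:brackets-aa}.

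\emph{Step 3 (curvature and torsion).} These follow by projection. The canonical connection of a reductive decomposition has curvature $-[X,Y]_{\mathfrak h}$ and torsion $-[X,Y]_{\mathfrak m}$ at the base point, as quoted from \citet{kobayashinomizu1969}. The boost--boost bracket vanishes, and the boost--anisotropy bracket lies entirely in $\mathfrak m$, so both have zero $\mathfrak h$-component and zero rotational curvature. The anisotropy--anisotropy bracket lies entirely in $\mathfrak h$, so its curvature is the commutator (up to the sign convention) and it carries no torsion. Only the mixed planes have a nonzero $\mathfrak m$-part, and that is the torsion.

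For the example, take $\delta C_1=\mathrm{diag}(1,-1)$ and $\delta C_2=\bigl(\begin{smallmatrix}0&1\\1&0\end{smallmatrix}\bigr)$. Then $[\delta C_1,\delta C_2]=2\bigl(\begin{smallmatrix}0&1\\-1&0\end{smallmatrix}\bigr)\neq0$.

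\emph{Main obstacle.} The algebra itself is routine. The delicate point is justifying that the rotation-free section is a valid reductive-homogeneous-space model, i.e.\ that $\mathfrak m$ is exactly its tangent space. I would verify this by differentiating \eqref{eq:w3} at the base point $(\bm0,c^2I)$. That derivative yields $p$-, $\bq$- and symmetric $S$-directions, which span $\mathfrak m$. The generators $b$ and $a$ are obtained from this section by differentiation. In particular, $b(\delta\bv)$ is the derivative of $w_3$ along $(\delta\bv,0)$, because proposition~\ref{prop:w3} identifies $w$ with $w_3$ on classical media. Once this identification holds, the quoted Kobayashi--Nomizu formulae apply without further work.
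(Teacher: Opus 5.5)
Your proposal is correct and follows the paper's proof step for step: the same commutator of $\bigl(\begin{smallmatrix}0&0\\0&\Omega\end{smallmatrix}\bigr)$ with a general element of $\mathfrak m$ for reductivity, the same block products (nilpotent time rows, $ba\neq0$, $ab=0$, commutator of spatial blocks) for the brackets, and the same Kobayashi--Nomizu read-off of curvature and torsion. On your closing ``main obstacle'': the curvature and torsion formulas need only the $\mathrm{Ad}(H)$-invariant splitting you already established, so the identification with the section is not required for the proof. If you do pursue it, note that over normalised media $p=(1+\bv\T A^{-1}\bv)^{-1/2}$ is stationary at $(\bm 0,c^2I)$, so differentiating $w_3(\bv,C)$ gives no $\alpha$-direction; that direction of $\mathfrak m$ appears only once the conformal factor of $Q$ is allowed to vary.
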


\begin{proof}
\emph{Step 1: reductivity.}  For $\Omega\in\mathfrak{so}(d)$ embedded as
$\bigl(\begin{smallmatrix}0&0\\0&\Omega\end{smallmatrix}\bigr)$ and
$X=\bigl(\begin{smallmatrix}\alpha&\bm\beta\T\\0&\Sigma
\end{smallmatrix}\bigr)\in\mathfrak m$,
\begin{equation}
  \Bigl[\begin{pmatrix}0&0\\0&\Omega\end{pmatrix},\,X\Bigr]
  =\begin{pmatrix}0&-\bm\beta\T\Omega\\[2pt]
    0&\Omega\Sigma-\Sigma\Omega\end{pmatrix},
\end{equation}
and $\Omega\Sigma-\Sigma\Omega$ is symmetric, so
$[\mathfrak h,\mathfrak m]\subseteq\mathfrak m$.

\emph{Step 2: brackets.}  With
$b=\bigl(\begin{smallmatrix}0&\bm\beta\T\\0&0\end{smallmatrix}\bigr)$,
$\bm\beta=\delta\bv/c^{2}$, and
$a=\bigl(\begin{smallmatrix}0&0\\0&M\end{smallmatrix}\bigr)$,
$M=-\delta C/(2c^{2})$:
(i)~the product of two $b$-type matrices vanishes identically, so
$[b_1,b_2]=0$;
(ii)~$b\,a=\bigl(\begin{smallmatrix}0&\bm\beta\T M\\0&0
\end{smallmatrix}\bigr)$ while $a\,b=0$, so
$[b,a]=b(c^{2}M\bm\beta)=b(-\delta C\,\delta\bv/2c^{2})\in\mathfrak m$;
(iii)~$[a_1,a_2]$ has spatial block
$[M_1,M_2]=[\delta C_1,\delta C_2]/(4c^{4})$, which is antisymmetric
(the commutator of symmetric matrices), i.e.\ a rotation generator in
$\mathfrak h$.  (Items V6b--V6d.)

\emph{Step 3: conclusion.}  For the canonical connection of a reductive
homogeneous space, the curvature and torsion at the base point are
$-[X,Y]_{\mathfrak h}$ and $-[X,Y]_{\mathfrak m}$ respectively
\citep{kobayashinomizu1969}; the statements follow by reading off the
$\mathfrak h$- and $\mathfrak m$-components of
\eqref{eq:brackets-bm}--\eqref{eq:brackets-aa}.
\end{proof}

\begin{proposition}[discrete check: the stretch--shear triangle]
\label{prop:bch}
For finite anisotropy steps $\delta C_1=\varepsilon_1c^2M_1$,
$\delta C_2=\varepsilon_2c^2M_2$ with $M_1=\operatorname{diag}(1,-1)$ and
$M_2=\bigl(\begin{smallmatrix}0&1\\1&0\end{smallmatrix}\bigr)$, the
two-step-plus-canonical-closure holonomy is a pure rotation
($\rho=1$ exactly, the time row being untouched by spatial maps) of angle
\begin{equation}
  \theta=\frac{\varepsilon_1\varepsilon_2}{4}+O(\varepsilon^3).
  \label{eq:bch-angle}
\end{equation}
\end{proposition}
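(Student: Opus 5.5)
We take the two-step-plus-canonical-closure holonomy in the sense of theorem~\ref{thm:structure}. The transported composite is
\[
W=w_3(\bm0,c^2I+\delta C_2)\,w_3(\bm0,c^2I+\delta C_1),
\]
with the second increment read in the frame produced by the first, as in definition~\ref{def:transport}. The holonomy is then $D=W\,w_3^{-1}$, where $w_3$ is the canonical map of the normalised endpoint. The plan is to compute both channels of \eqref{eq:Dform} directly: $\rho$ from the time row of $W$, and the rotation $R$ as the polar factor of the spatial block $S_W$, which lemma~\ref{lem:polar} licenses.

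\emph{Frequency channel.} At $\bv=\bm0$, proposition~\ref{prop:w3} gives $p=1$ and $\bq=\bm0$ for each factor. Each factor is therefore $\operatorname{diag}(1,S_i)$ with $S_i=c\,(c^2I+\delta C_i)^{-1/2}=(I+\varepsilon_iM_i)^{-1/2}$. Hence $W=\operatorname{diag}(1,S_2S_1)$ has first row $e_0\T$. It follows that $q_{00}=(e_0\T W^{-1})\Qstar(e_0\T W^{-1})\T=1$, and theorem~\ref{thm:structure} gives $\rho=1$ exactly. The endpoint cometric $\operatorname{diag}(1,-c^2(S_W\T S_W)^{-1})$ is admissible for small $\varepsilon_i$, because $S_W$ is invertible.

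\emph{Rotation channel.} Both $M_1$ and $M_2$ are symmetric with $M_i^2=I$. Consequently any function of $I+\varepsilon_iM_i$ is a linear combination of $I$ and $M_i$, and exactly
\[
S_i=a_iI+b_iM_i,\qquad a_i\pm b_i=(1\pm\varepsilon_i)^{-1/2},
\]
so that $b_i=-\varepsilon_i/2+O(\varepsilon_i^3)$ and $a_i=1+O(\varepsilon_i^2)$. Multiplying out,
\[
S_2S_1=a_1a_2I+a_2b_1M_1+a_1b_2M_2+b_1b_2M_2M_1 .
\]
Here $M_2M_1=\bigl(\begin{smallmatrix}0&-1\\1&0\end{smallmatrix}\bigr)$ is the rotation generator $J$, and every other term is symmetric. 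In two dimensions the orthogonal polar factor of a matrix $N$ close to $I$ is the rotation by the angle $\theta$ with
\[
\tan\theta=\frac{N_{21}-N_{12}}{N_{11}+N_{22}} .
\]
This identity is easily checked by writing $N=R_\theta P$ with $P$ symmetric. Applied to $S_2S_1$ it gives the closed form $\tan\theta=b_1b_2/(a_1a_2)$, because $M_1$ and $M_2$ are traceless. Expanding yields $\theta=\varepsilon_1\varepsilon_2/4+O(\varepsilon^3)$, which is \eqref{eq:bch-angle}. This agrees with the bracket \eqref{eq:brackets-aa}: $[\delta C_1,\delta C_2]/(4c^4)=\varepsilon_1\varepsilon_2[M_1,M_2]/4$ is $\varepsilon_1\varepsilon_2/4$ times a unit rotation generator, up to the orientation sign.

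\emph{Main obstacle.} The step needing care is justifying that the polar factor extracted from the product equals the holonomy rotation. This is handled by lemma~\ref{lem:polar}, since $\rho S_3=S_3$ must be the positive factor. A second point is fixing the sign and orientation convention for $\theta$, which has to match theorem~\ref{thm:curvature}. The exact two-dimensional polar formula avoids any Baker--Campbell--Hausdorff estimate of the remainder; I would fall back to the leading-order antisymmetric-part argument of proposition~\ref{prop:M4} only if the closed form needed a further check.
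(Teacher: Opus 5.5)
Your proof is correct. It shares the paper's reduction: $\rho=1$ because every factor has time row $(1,\bm0)$, and $R$ is the polar factor of $S_W=S_2S_1$ by lemma~\ref{lem:polar}. Where it differs is in how the angle is evaluated.

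The paper replaces $S_i$ by $\e^{A_i}$ with $A_i=-\varepsilon_iM_i/2$ and applies Baker--Campbell--Hausdorff. It then reads $\theta$ off the antisymmetric part $\tfrac12[A_2,A_1]=\tfrac{\varepsilon_1\varepsilon_2}{4}\bigl(\begin{smallmatrix}0&-1\\1&0\end{smallmatrix}\bigr)$ of the exponent. The remainder is controlled only at leading order and backed by numerics.

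You instead use $M_i^2=I$ to write $S_i=a_iI+b_iM_i$ exactly. You combine this with the exact two-dimensional identity $\tan\theta=(N_{21}-N_{12})/(N_{11}+N_{22})$. That identity holds for every $N$ with $\det N>0$, since $N=R_\theta P$ makes both combinations proportional to $\operatorname{tr}P>0$. The result is the closed form $\tan\theta=b_1b_2/(a_1a_2)$.

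This buys a sharper, fully controlled result: $\theta=\tfrac{\varepsilon_1\varepsilon_2}{4}\bigl[1+\tfrac14(\varepsilon_1^2+\varepsilon_2^2)\bigr]+O(\varepsilon^6)$. The remainder is therefore actually $O(\varepsilon^4)$. The formula also reproduces the paper's reported values exactly: $0.25001$, $0.25003$ and $0.25006$ for the ratio; $6.2578\times10^{-4}$ at $\varepsilon_1=\varepsilon_2=0.05$; and $2.49^\circ$ at $0.4$.

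The price is generality. Your computation depends on $d=2$ and on involutive, traceless increments. The BCH route applies to arbitrary increments in any dimension and ties the angle directly to the curvature bracket of theorem~\ref{thm:curvature}.

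One slip in your closing consistency check: $[M_1,M_2]=-2\bigl(\begin{smallmatrix}0&-1\\1&0\end{smallmatrix}\bigr)$. So $[\delta C_1,\delta C_2]/(4c^4)$ is $\varepsilon_1\varepsilon_2/2$ times a unit rotation generator, not $\varepsilon_1\varepsilon_2/4$. The agreement with $\theta=\varepsilon_1\varepsilon_2/4$ comes from the factor $\tfrac12$ in $\e^{A_2}\e^{A_1}=\exp\bigl(A_1+A_2+\tfrac12[A_2,A_1]+\cdots\bigr)$: the triangle holonomy is half the bracket. This does not affect your proof, which never relies on the remark.
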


\begin{proof}
\emph{Step 1.}  By the Baker--Campbell--Hausdorff expansion with
$A_i=-\varepsilon_iM_i/2$,
\begin{equation}
  \e^{A_2}\e^{A_1}
  =\exp\Bigl(A_1+A_2+\tfrac12[A_2,A_1]+\cdots\Bigr),
  \qquad
  \tfrac12[A_2,A_1]
  =\frac{\varepsilon_1\varepsilon_2}{8}\,[M_2,M_1]
  =\frac{\varepsilon_1\varepsilon_2}{8}
   \begin{pmatrix}0&-2\\2&0\end{pmatrix}.
\end{equation}

\emph{Step 2.}  This antisymmetric term is the $\mathfrak h$-part of
the exponent, so by lemma~\ref{lem:polar} the polar rotation angle is
$\theta=\varepsilon_1\varepsilon_2/4+O(\varepsilon^{3})$.  Numerically
(item V7b), $\theta/(\varepsilon_1\varepsilon_2)$ evaluates to
$0.25001$, $0.25003$, $0.25006$ for
$(\varepsilon_1,\varepsilon_2)=(10^{-2},10^{-2})$,
$(2\times10^{-2},10^{-2})$, $(10^{-2},3\times10^{-2})$, converging to
$1/4$ as $\varepsilon\to0$.
\end{proof}

\begin{remark}[what is standard and what is new]\label{rem:spd}
The pure-anisotropy sector ($\bv=\bm0$, conformal classes of
$C\succ0$) is the classical symmetric space of positive-definite
symmetric matrices, $SL(d,\mathbb R)/SO(d)$ up to scale
\citep{helgason1978,bhatia2007}, and \eqref{eq:brackets-aa} is its canonical
curvature.  The acoustically new content is the pair of exact flattenings
around it: pure Lorentz boosts would contribute Thomas--Wigner curvature
at second order, but the Galilean factor in $w(\bv)$ cancels the
responsible generator entries exactly (lemma~\ref{lem:generators}),
demoting all boost-sector rotation to the fourth-order discretisation
defect of proposition~\ref{prop:M4}; and the frequency channel, absent
from the Riemannian picture, is governed by the exact form
\eqref{eq:connection-form}.  The torsion on mixed planes has, at present,
no physical interpretation; we flag it as an open structural question
rather than suppress it.
\end{remark}

\section{Transport of statistical source models and its observable
consequences}\label{sec:stat}

The geometry of \S\S\ref{sec:transport}--\ref{sec:curvature} becomes
falsifiable when it acts on the objects practitioners actually exchange:
statistical source models.  In hybrid prediction methods the source
statistics are modelled or extracted from an unsteady computation
\citep{colonius1997,freund2001,bogey2002,tamauriault1999} and propagated
by the operator of a chosen analogy, a practice reviewed by
\citet{coloniuslele2004} and \citet{wangfreundlele2006}; filtering and
splitting formulations \citep{hardin1994,ewert2003} interpose further
changes of description; and wave-packet source models
\citep{jordancolonius2013,cavalieri2019} are specified directly in the
wavenumber--frequency form on which the transport below acts.
Each change of description is a re-gauging step in the sense of
definition~\ref{def:transport}.  This section computes exactly what a
closed loop of such steps does to a model.

\subsection{What is, and is not, observable}\label{ssec:observable}

Because the results that follow attach a nonzero frequency dilation and a
nonzero rotation to closed loops in analogy space, it is essential to
state at the outset what these quantities are, and what they are not.  The
logic has three steps.

\emph{On shell, nothing changes.}  By proposition~\ref{prop:trivial},
every exact analogy evaluated on a true solution of \eqref{eq:NS} returns
the identical physical far field.  No transport, no loop, and no holonomy
alters a single measurable quantity of the radiated sound.  The geometry
of this paper is therefore \emph{not} a claim that changing analogy
changes the physics; any reading of the dilation $\rho$ as a physical
frequency shift would contradict proposition~\ref{prop:trivial} and is
incorrect.

\emph{Off shell, models are transported.}  An analogy is used by
inserting a \emph{modelled} source---a statistical spectrum, a wave-packet
ansatz, a filtered simulation field---into the solution operator of its
$L$.  A model is not a true solution: it does not satisfy \eqref{eq:NS},
the continuity residual of remark~\ref{rem:kinematic} does not vanish, and
the far field it produces is an approximation whose error depends on the
operator it was fed to.  When the \emph{same} model is reused across
analogies---as it is whenever a spectrum calibrated in one frame is
propagated in another, or a wave-packet model specified in
wavenumber--frequency variables is carried into a convected or generalised
operator---the two pipelines produce genuinely different far fields, both
of them approximate.

\emph{The holonomy measures the discrepancy between two wrong answers.}
The dilation $\rho$ and rotation $R$ of theorem~\ref{thm:structure} are
the exact, closed-form, parameter-free measure of how differently two such
pipelines err when they transport a common model around a loop.  This is
an inter-method bias, not a property of the flow.  It is observable in the
only sense that matters for practice: two groups modelling the same
turbulence with the same source model but different analogies will predict
peak frequencies differing by exactly $\rho$ and directivity patterns
rotated by exactly $R$, with no freedom to reconcile them short of
recalibration.  The result is thus a statement about the
\emph{consistency of hybrid pipelines}, and it is precisely the
transported-model discrepancy---not any change in the physics---that
\S\ref{ssec:caveats} quantifies and figures~\ref{fig:rho}--\ref{fig:spectra}
display.

\subsection{Why this transport, and not another}\label{ssec:whytransport}

A geometric defect is only meaningful if the connection generating it is
canonical rather than chosen for convenience; otherwise a referee may
reasonably ask why \emph{this} holonomy, rather than one obtained from
some other transport rule, deserves attention.  The transport of
definition~\ref{def:transport} is not one option among many: it is forced,
at each stage, by the way prediction is actually carried out.

\emph{Frequency-respect is forced by fixed-frequency prediction.}
Green's functions are computed, source spectra specified, and measurements
compared at fixed frequency.  A transport that did not preserve the
time-harmonic decomposition would smear a single-frequency model across
frequencies and could not be composed with the frequency-by-frequency
machinery of prediction.  By lemma~\ref{lem:block}, respecting that
decomposition forces the block form \eqref{eq:blockform} exactly---no
larger class is admissible.

\emph{The reduction target and the rotation-free normalisation are
forced.}  Every analogy is defined by reduction of its operator to the
common d'Alembertian; the maps effecting that reduction are fixed up to
the stabiliser of $\Qstar$, which by lemma~\ref{lem:stabiliser} is exactly
$\mathbb{R}_{>0}\times O(d)$---a frequency scaling and a spatial rotation,
no more.  Requiring the canonical representative to carry no spurious
rotation (proposition~\ref{prop:w3}, the symmetric positive-definite
section) removes the $O(d)$ freedom uniquely.  What remains, the residual
$\mathbb{R}_{>0}$ scaling and the rotation that survives around a loop, is
therefore not an artefact of a particular gauge choice but the intrinsic
holonomy of the unique frequency-respecting, rotation-free connection.
The dilation and rotation are canonical given only the premise that
aeroacoustic prediction is performed at fixed frequency.

\subsection{Transport of sources and of their statistics}

\begin{lemma}[source transport]\label{lem:source}
Let $\Box p=s$ in the quiescent analogy and let $D=\rho\,(1\oplus R)$ be
a holonomy \eqref{eq:Dform}.  Then, with $\xi'=D\xi$,
\begin{equation}
  p'\;=\;p\circ D^{-1},
  \qquad
  s'\;=\;\rho^{-2}\,s\circ D^{-1}
  \qquad\Longrightarrow\qquad
  \Box p'\;=\;s' .
  \label{eq:source-transport}
\end{equation}
\end{lemma}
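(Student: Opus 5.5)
The plan is to treat this as a direct application of the congruence law \eqref{eq:congruence}, with the holonomy supplying the conformal factor through lemma~\ref{lem:stabiliser}. Write $\Box=P_{\Qstar}$. For any invertible linear $M$, \eqref{eq:congruence} gives $P_{\Qstar}f=P_{M\Qstar M\T}(f\circ M^{-1})$, where the left side is evaluated at $\xi$ and the right side at $\xi'=M\xi$. I will take $M=D$ and $f=p$, and set $p'=p\circ D^{-1}$. The cometric seen by $p'$ is then $D\Qstar D\T$.

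The key input is the stabiliser property. By theorem~\ref{thm:structure}, or directly from the form $D=\rho(1\oplus R)$ with $R\in SO(d)$, one checks blockwise that
\[
D\Qstar D\T=\rho^{2}\begin{pmatrix}1&0\\0&-c^{2}RR\T\end{pmatrix}=\rho^{2}\Qstar .
\]
Since $P_Q$ is linear in $Q$, this gives $P_{D\Qstar D\T}=\rho^{2}\Box$ in the primed coordinates. Combining the two steps:
\[
\rho^{2}\,(\Box p')(\xi')=(\Box p)(\xi)=s(\xi)=s(D^{-1}\xi').
\]
Dividing by $\rho^{2}$ yields $\Box p'=\rho^{-2}\,s\circ D^{-1}=s'$, which is \eqref{eq:source-transport}.

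The only real care point is bookkeeping the direction of the congruence and the power of $\rho$. The derivatives transform as $\partial_a=D^{b}{}_{a}\partial'_b$, and every entry of $D$ carries one factor of $\rho$, so second derivatives acquire $\rho^{2}$. The source must therefore be divided by $\rho^{2}$, not multiplied. As a sanity check, take the pure dilation $R=I$: then $p'(t',\bx')=p(t'/\rho,\bx'/\rho)$ and $\Box p'=\rho^{-2}(\Box p)\circ D^{-1}$, which agrees with the formula.

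I would also remark that the rotation $R$ drops out of the operator because $\Box$ is $O(d)$-invariant. The frequency statement follows from lemma~\ref{lem:block} with $p=\rho^{-1}$: a harmonic at $\omega$ in $s$ becomes one at $\rho\omega$ in $s'$, consistent with the reading of theorem~\ref{thm:structure}.
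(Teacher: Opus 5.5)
Your argument is correct and is exactly the paper's proof: apply the congruence law \eqref{eq:congruence} with $M=D$, use $D\Qstar D\T=\rho^{2}\Qstar$ to get $(\Box p)(\xi)=\rho^{2}(\Box p')(\xi')$, and divide by $\rho^{2}$. One slip in your closing remark, which lies outside the lemma and does not affect the proof: with $D=\rho(1\oplus R)$ the time coefficient in \eqref{eq:blockform} is $p=\rho$, not $\rho^{-1}$ (you have used the opposite convention of lemma~\ref{lem:stabiliser}), so a harmonic at $\omega$ in $s$ appears at $\omega/\rho$ in $s'$---equivalently, the spectrum of $s'$ at $\omega$ is that of $s$ at $\rho\omega$, as in lemma~\ref{lem:csd}.
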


\begin{proof}
Apply \eqref{eq:congruence} with $M=D$ and
$D\Qstar D\T=\rho^{2}\Qstar$:
\begin{equation}
  s(\xi)=(P_{\Qstar}p)(\xi)=(P_{\rho^{2}\Qstar}p')(\xi')
        =\rho^{2}\,(\Box p')(\xi'),
\end{equation}
i.e.\ $(\Box p')(\xi')=\rho^{-2}s(D^{-1}\xi')=s'(\xi')$.
\end{proof}

\begin{lemma}[transport of the cross-spectral model]\label{lem:csd}
Let $s$ be a stationary random source with correlation
$R_s(\zeta)=\langle s(\xi)\,s(\xi+\zeta)\rangle$ and
wavenumber--frequency spectrum
$\Phi(\omega,\bk)=\int R_s(\zeta)\,\e^{-\im\kappa\cdot\zeta}\dd\zeta$
(pairing $\kappa\cdot\zeta=\bk\!\cdot\!\bx-\omega t$).  Under
\eqref{eq:source-transport}, stationarity is preserved and
\begin{equation}
  \Phi'(\kappa)
  =\rho^{-4}\,\lvert\det D\rvert\;\Phi\bigl(D\T\kappa\bigr)
  \;\overset{d=3}{=}\;
  \Phi\bigl(\rho\,\omega,\ \rho\,R\T\bk\bigr),
  \label{eq:csd-transport}
\end{equation}
since $\lvert\det D\rvert=\rho^{\,1+d}=\rho^{4}$ in three space
dimensions: \emph{the amplitude weights cancel identically, and the model
spectrum is transported by rigid dilation and rotation of its arguments.}
(In $d=2$ a residual amplitude factor $\rho^{-1}$ survives.)
\end{lemma}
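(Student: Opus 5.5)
The proof is a direct computation: I will first transport the correlation function, then change variables in the Fourier integral defining $\Phi$. The only structural inputs are the transport rule \eqref{eq:source-transport} of lemma~\ref{lem:source} and the block-diagonal form \eqref{eq:Dform} of $D$ from theorem~\ref{thm:structure}.

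\emph{Step 1: correlation and stationarity.} From $s'(\xi)=\rho^{-2}s(D^{-1}\xi)$ and linearity of $D^{-1}$,
\begin{equation*}
  R_{s'}(\zeta)=\langle s'(\xi)\,s'(\xi+\zeta)\rangle
  =\rho^{-4}\,\bigl\langle s(D^{-1}\xi)\,s(D^{-1}\xi+D^{-1}\zeta)\bigr\rangle
  =\rho^{-4}R_s(D^{-1}\zeta).
\end{equation*}
The last equality uses stationarity of $s$ at the base point $D^{-1}\xi$. The result is independent of $\xi$, so $s'$ is stationary.

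\emph{Step 2: change of variables.} Next I insert this into $\Phi'(\kappa)=\int R_{s'}(\zeta)\,\e^{-\im\kappa\cdot\zeta}\dd\zeta$ and substitute $\zeta=D\eta$, with $\dd\zeta=|\det D|\dd\eta$. This gives
\begin{equation*}
  \Phi'(\kappa)=\rho^{-4}|\det D|\int R_s(\eta)\,\e^{-\im\kappa\cdot D\eta}\dd\eta .
\end{equation*}
It remains to rewrite $\kappa\cdot D\eta$ as $(D\T\kappa)\cdot\eta$. This is the one point needing care, and I expect it to be the main obstacle, because the pairing is indefinite.

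Write $\kappa=(\omega,\bk)$ and $\kappa\cdot\zeta=\kappa\T J\zeta$ with $J=\operatorname{diag}(-1,I)$. Then $\kappa\cdot D\eta=(J D\T J\kappa)\cdot\eta$. A general element of $\Gf$ does not commute with $J$. Here, however, $D=\rho(1\oplus R)$ is block diagonal by \eqref{eq:Dform}, so $JD\T J=D\T$. Hence the argument of $\Phi$ is $D\T\kappa=(\rho\omega,\rho R\T\bk)$, which yields the first equality in \eqref{eq:csd-transport}. It is exactly this vanishing of the off-diagonal block $\bq$ that makes the pairing transform as stated.

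\emph{Step 3: the amplitude factor.} By \eqref{eq:Dform}, $|\det D|=\rho^{1+d}|\det R|=\rho^{1+d}$, since $R\in SO(d)$. For $d=3$ this equals $\rho^4$ and cancels the prefactor $\rho^{-4}$, which gives the second equality in \eqref{eq:csd-transport}. For $d=2$ the net factor is $\rho^{3-4}=\rho^{-1}$, which is the stated residual amplitude factor.
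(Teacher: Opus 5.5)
Your proposal is correct and follows the paper's own proof essentially step for step: you transport the correlation to $R_{s'}(\zeta)=\rho^{-4}R_s(D^{-1}\zeta)$, substitute $\zeta=D\eta$ in the Fourier integral, and read off $\lvert\det D\rvert=\rho^{1+d}$ from the block-diagonal form of the holonomy. Your explicit check of the indefinite pairing, $\kappa\cdot D\eta=(JD^{\mathrm{T}}J\kappa)\cdot\eta$ with $JD^{\mathrm{T}}J=D^{\mathrm{T}}$ holding only because $D=\rho(1\oplus R)$ has no off-diagonal block, makes precise what the paper compresses into its remark that the time and space components do not mix.
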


\begin{proof}
\emph{Step 1.}  From \eqref{eq:source-transport},
$R_{s'}(\zeta)=\rho^{-4}R_s(D^{-1}\zeta)$; stationarity is preserved
because $D$ is linear.

\emph{Step 2.}  Substituting $\zeta=D\eta$ in the Fourier integral,
\begin{equation}
  \Phi'(\kappa)
  =\rho^{-4}\lvert\det D\rvert\int R_s(\eta)\,
   \e^{-\im(D\T\kappa)\cdot\eta}\dd\eta
  =\rho^{-4}\lvert\det D\rvert\;\Phi\bigl(D\T\kappa\bigr).
\end{equation}

\emph{Step 3.}  For $D=\rho(1\oplus R)$,
$\lvert\det D\rvert=\rho^{1+d}$ and $D\T$ acts on $(\omega,\bk)$ as
$(\omega,\bk)\mapsto\rho(\omega,R\T\bk)$, the time and space components
not mixing; for $d=3$ the weights cancel, giving
\eqref{eq:csd-transport}.
\end{proof}

\subsection{The two-pipeline theorem}

For a compact stationary source in the quiescent medium the far-field
pressure spectral density samples the source spectrum on the sonic
sphere: with the free-space Green's function
$\e^{\im\omega r/c}/4\pi r$ \citep{goldstein1976,morseingard1968},
\begin{equation}
  S_{pp}(r\bn,\omega)\;=\;\frac{K}{r^{2}}\,
  \Phi\Bigl(\omega,\ \frac{\omega}{c}\,\bn\Bigr)
  \qquad (r\to\infty),
  \label{eq:farfield-sampling}
\end{equation}
with $K$ a positive convention constant.  Define the directivity
$I(\bn,\omega)\equiv\lim_{r\to\infty}r^{2}S_{pp}(r\bn,\omega)$.

\begin{theorem}[two-pipeline law]\label{thm:pipelines}
Let two prediction pipelines carry the same source-level model to the
quiescent analogy along paths whose relative holonomy is
$D=\rho(1\oplus R)$.  Then their far-field predictions are related
exactly ($d=3$) by
\begin{equation}
  I_2(\bn,\omega)\;=\;I_1\bigl(R\T\bn,\ \rho\,\omega\bigr):
  \label{eq:pipelines}
\end{equation}
a rigid rotation of the directivity pattern by $R$ together with a rigid
dilation of the spectrum by $\rho$ (in particular, predicted spectral
peaks differ by exactly the factor $\rho$: the transported spectrum
$\Phi(\rho\omega,\cdot)$ peaks at $\omega_0/\rho$).
\end{theorem}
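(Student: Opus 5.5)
The plan is to chain three earlier results: the source transport lemma~\ref{lem:source}, the spectral transport lemma~\ref{lem:csd}, and the quiescent sampling law \eqref{eq:farfield-sampling}. Both pipelines end in the quiescent analogy, so both are evaluated by the same Green's function and the same sampling law. I would first fix pipeline~1's endpoint model spectrum $\Phi_1$. The relative holonomy $D=\rho(1\oplus R)$ then means that pipeline~2 carries the same source-level model to a quiescent model that differs by the map $D$ of lemma~\ref{lem:source}: $s_2=\rho^{-2}s_1\circ D^{-1}$. Here I invoke theorem~\ref{thm:structure}, which says the composite of the two transports (one inverted) lies in the conformal stabiliser of $\Qstar$ (lemma~\ref{lem:stabiliser}), so the discrepancy between the endpoint models is exactly such a $D$.

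Next I would apply lemma~\ref{lem:csd} with $d=3$. The amplitude weights $\rho^{-4}|\det D|$ cancel, which gives $\Phi_2(\omega,\bk)=\Phi_1(\rho\omega,\rho R\T\bk)$. This step needs stationarity of the transported model, and lemma~\ref{lem:csd} guarantees it.

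Finally I would insert $\Phi_2$ into \eqref{eq:farfield-sampling}:
\[
I_2(\bn,\omega)=K\,\Phi_2\Bigl(\omega,\tfrac{\omega}{c}\bn\Bigr)=K\,\Phi_1\Bigl(\rho\omega,\tfrac{\rho\omega}{c}R\T\bn\Bigr)=I_1(R\T\bn,\rho\omega).
\]
The last equality holds because $R\T\bn$ is again a unit vector, since $R\in SO(3)$. As a result the sonic-sphere sampling at frequency $\rho\omega$ is reproduced exactly, with no residual factor: the convention constant $K$ is the same in both pipelines, because both propagate with the quiescent Green's function. For the peak statement, if $I_1(\cdot,\omega)$ peaks at $\omega_0$, then $I_2$ peaks at $\omega_0/\rho$, which follows from the substitution.

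I expect the main obstacle to be the first step, namely making precise what ``same source-level model along two paths with relative holonomy $D$'' means. The two transported endpoint models must differ by exactly the rigid map $D$, and not by $D$ composed with some non-stabiliser piece. I would handle this by writing the two path composites as $W_1$ and $W_2$, both of which reduce the common starting cometric to $\Qstar$. Then $W_2W_1^{-1}$ preserves the ray of $\Qstar$ and lies in $\Gf$ (lemma~\ref{lem:block}), so lemma~\ref{lem:stabiliser} forces the form $\rho(1\oplus R)$. The connectedness argument of theorem~\ref{thm:structure} gives $R\in SO(3)$. A secondary point to check is that the compact-source sampling law applies to the transported model. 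This holds because the dilation preserves compactness (a bounded support maps to a bounded support) and does not mix time with space.
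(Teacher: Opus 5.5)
Your proposal is correct and matches the paper's proof, which likewise substitutes the transported spectrum of lemma~\ref{lem:csd} into the sampling law \eqref{eq:farfield-sampling} and reads off $K\,\Phi\bigl(\rho\omega,(\rho\omega/c)R\T\bn\bigr)=I_1(R\T\bn,\rho\omega)$, exactly your displayed chain. Your extra argument that $W_2W_1^{-1}$ has the stabiliser form is valid but unnecessary: $D=\rho(1\oplus R)$ is a hypothesis of the theorem, and theorem~\ref{thm:structure} already supplies it.
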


\begin{proof}
Insert \eqref{eq:csd-transport} into \eqref{eq:farfield-sampling}:
\begin{equation}
  I_2(\bn,\omega)
  =K\,\Phi\Bigl(\rho\omega,\ \rho\,\frac{\omega}{c}\,R\T\bn\Bigr)
  =K\,\Phi\Bigl(\rho\omega,\ \frac{\rho\omega}{c}\,R\T\bn\Bigr)
  =I_1\bigl(R\T\bn,\rho\omega\bigr).
\end{equation}
The transported radiating wavevector lies exactly on the sonic sphere of
the dilated frequency: the transport is sonic-sphere covariant, with no
leakage between radiating and non-radiating content.
\end{proof}

\begin{remark}[amplitude conventions]
If models are specified at a base-field level, $s=P_q(\partial)b$ with
$P_q$ homogeneous of degree $q$ (e.g.\ $q=2$ for
$\partial_i\partial_jT_{ij}$), transporting the base model instead of the
source model changes \eqref{eq:pipelines} by the factor $\rho^{2q}$,
since $|\hat P_q(D\T\kappa)|^{2}=\rho^{2q}\,
|\hat P_q((1\oplus R\T)\kappa)|^{2}$.  The rigid rotation--dilation
structure is convention-independent; only the overall amplitude weight
moves.
\end{remark}

\begin{corollary}[exact complementarity of the two channels]
\label{cor:complementarity}
\emph{(i) Anisotropy loops.}  For loops composed of anisotropy steps at
$\bv=\bm0$, every elementary map has time row $(1,\bm0)$, so
$\rho=1$ \emph{exactly}: the loop acts as a pure rigid rotation of the
directivity pattern, leaving spectra untouched.
\emph{(ii) Boost loops.}  For loops composed of boost steps,
$R=I+O(M^{4})$ (proposition~\ref{prop:M4}): to fourth order the loop acts
as a pure rigid spectral dilation, leaving directivity untouched.
The two observable channels of the holonomy are carried by the two
sectors of the geometry, matching exactly the localisation of curvature
(theorem~\ref{thm:curvature}).
\end{corollary}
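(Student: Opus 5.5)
The plan is to treat the two parts separately, reading $\rho$ and $R$ off the holonomy $D=W\,w_3^{-1}$ of theorem~\ref{thm:structure}. The one fact needed throughout is that, for $D\in\Gf$ in block form \eqref{eq:blockform}, the time-row entry $p$ of $D$ equals $\rho$ (by \eqref{eq:Dform}), while $R$ is the polar factor of the spatial block (lemma~\ref{lem:polar}).

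For (i), every elementary map in an anisotropy loop is $w_3(\bm0,C)$ from proposition~\ref{prop:w3}. At $\bv=\bm0$ we have $A=C$, so \eqref{eq:w3} gives $p=(1+0)^{-1/2}=1$ and $\bq=\bm0$, i.e.\ $w_3(\bm0,C)=1\oplus c\,C^{-1/2}$.

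We need the whole loop to stay in the sector $\bv=\bm0$, so we check closure. If $W=1\oplus S_W$, then the endpoint $W^{-1}\Qstar W^{-\mathrm T}=1\oplus(-c^2S_W^{-1}S_W^{-\mathrm T})$ has $q_{00}=1$ and $\bv=\bm0$. Hence the canonical closure map $w_3$ of the endpoint is again of the form $1\oplus S_3$. Products and inverses of matrices $1\oplus S$ keep time row $(1,\bm0)$. Therefore $D=1\oplus S_WS_3^{-1}$, and comparing with \eqref{eq:Dform} gives $\rho=1$ exactly, with $R$ the rotation $S_WS_3^{-1}$ (orthogonal because $D\Qstar D\T=\Qstar$).

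Theorem~\ref{thm:pipelines} with $\rho=1$ then gives $I_2(\bn,\omega)=I_1(R\T\bn,\omega)$: a pure directivity rotation with unchanged spectra.

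For (ii), proposition~\ref{prop:M4} states $\theta=O(M^4)$ for a two-step boost pair. To cover general boost loops, we use lemma~\ref{lem:polar}: $R$ is the polar factor of the product of the spatial blocks $S(\bu_k)=I+(\gamma_k-1)\hat\bu_k\hat\bu_k\T$ (lemma~\ref{lem:factorisation}), followed by the symmetric positive-definite inverse block of $w_3$. Each $S(\bu_k)$ is $I+O(M^2)$ and symmetric. Expanding the product to second order in these $O(M^2)$ deviations, the first-order part is a sum of symmetric terms, and the antisymmetric part first appears from commutators $[P_j,P_k]$ weighted by $(\gamma_j-1)(\gamma_k-1)=O(M^4)$. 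The closure block is symmetric, so composing with it leaves the rotation at the same order. Thus $R=I+O(M^4)$. Substituting into \eqref{eq:pipelines} gives $I_2(\bn,\omega)=I_1(\bn,\rho\omega)+O(M^4)$, a pure spectral dilation.

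The final matching with theorem~\ref{thm:curvature} is interpretive. The $\mathfrak h$-valued brackets occur only on anisotropy--anisotropy planes, where (i) has placed the rotation channel. The boost brackets vanish, consistent with (ii) and with the flat frequency connection of proposition~\ref{prop:flat}.

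The main obstacle is (ii) for loops longer than two steps. There are two concerns: the $O(M^4)$ bound must be uniform, and the Galilean parts of $W$ must not feed into the spatial block. The second is settled by the block-triangular structure: spatial blocks multiply independently of the $\bq$ entries. The first needs a second-order expansion of the polar factor $\exp(\tfrac12\log(S\T S))$-type. Here one must check that the symmetric $O(M^2)$ terms contribute no rotation at order $M^2$, which holds because the polar factor of $I+E$ with $E$ symmetric is $I+O(|E|^2)$ and antisymmetric only through commutators.
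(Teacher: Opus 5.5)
Your proposal is correct and follows the paper's reasoning. In (i), the time rows $(1,\bm 0)$ of every $w_3(\bm 0,C)$ force $\rho=1$. In (ii), the rotation is the polar factor of the product of the spatial blocks $I+(\gamma_k-1)\hat\bu_k\hat\bu_k\T$, whose antisymmetric part begins with the $O(M^4)$ commutator terms, exactly as in the proof of proposition~\ref{prop:M4}. Your explicit check that the canonical closure of a $\bv=\bm 0$ endpoint again has time row $(1,\bm 0)$, and your extension of the two-step commutator estimate to $n$ boost steps (for fixed $n$), fill in details the paper leaves implicit when it simply cites proposition~\ref{prop:M4}.
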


\subsection{Numerical verification and magnitudes}

Table~\ref{tab:verification} collects the end-to-end numerical checks;
figures~\ref{fig:rho}--\ref{fig:spectra} display the corresponding maps.
All computations use the full pipeline of
definition~\ref{def:transport} and proposition~\ref{prop:w3} (compose
elementary maps, extract the endpoint medium, construct the canonical
closure, invert), independent of the closed-form derivations they are
compared with.

\begin{table}
\centering
\begin{tabular}{llll}
\hline
Check & Computed & Predicted & Item\\
\hline
Boost loop $M_1{=}0.3$, $M_2{=}0.4$: $\rho$ &
 $1.153096610988149$ & $1.153096610988149$ & V7a\\
\quad rotation residual $|D-\rho I|_{\max}$ &
 $2.2\times10^{-16}$ & $0$ (collinear) & V7a\\
\quad transported spectral peak &
 $0.867233\,\omega_0$ & $\omega_0/\rho=0.867230\,\omega_0$ & \S\ref{sec:stat}\\
Anisotropy loop $\varepsilon_1{=}\varepsilon_2{=}0.05$: $\rho$ &
 $1.000000000000$ & $1$ (exact) & V7b\\
\quad rotation $\theta$ (rad) &
 $6.2578\times10^{-4}$ & $\varepsilon_1\varepsilon_2/4=6.25\times10^{-4}$
 & V7b\\
\quad quadrupole lobe shift (rad) &
 $6.283\times10^{-4}$ & $\theta$ (grid $7.9\times10^{-6}$) & V7c\\
\quad lobe-shape distortion &
 $2.6\times10^{-11}$ & $0$ (rigid rotation) & \S\ref{sec:stat}\\
Stretch--shear ratio $\theta/(\varepsilon_1\varepsilon_2)$ &
 $0.25001$--$0.25006$ & $1/4$ & V7b\\
\hline
\end{tabular}
\caption{End-to-end numerical verification of the transport laws.  Item
labels refer to the verification record of
appendix~\ref{app:verification}.}
\label{tab:verification}
\end{table}

\begin{figure}
\centering
\includegraphics[width=0.62\textwidth]{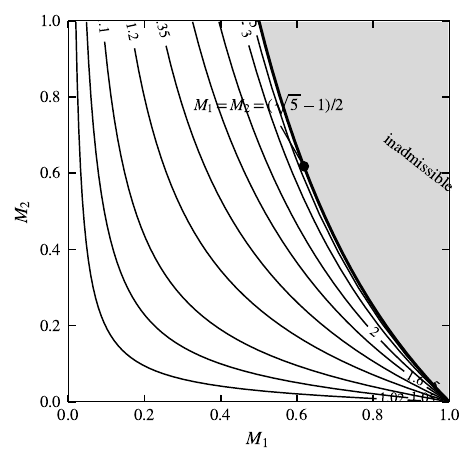}
\caption{Exact frequency holonomy $\rho$ of two collinear re-gauging
steps $(M_1,M_2)$, equation \eqref{eq:rho-exact} (contours), the sonic
horizon $M_1M_2+M_1=1$ of theorem~\ref{thm:horizon} (heavy curve,
$\rho\to\infty$), the inadmissible region beyond it (shaded), and the
equal-step degeneration point $M_1=M_2=(\sqrt5-1)/2$
\eqref{eq:golden}.  Note the asymmetry in $(M_1,M_2)$: transport is
path-ordered (corollary~\ref{cor:einstein}).}
\label{fig:rho}
\end{figure}

\begin{figure}
\centering
\includegraphics[width=\textwidth]{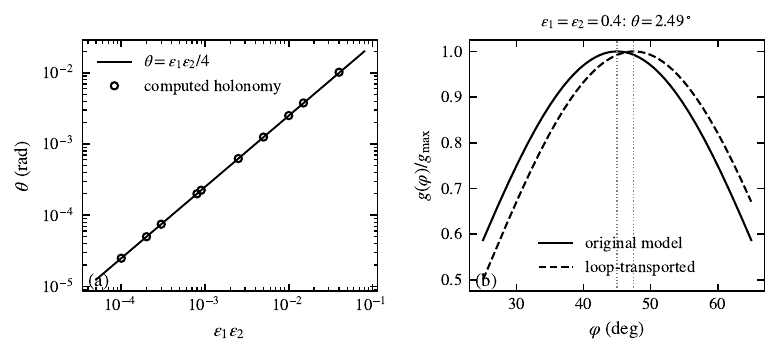}
\caption{Rotation channel.  (a)~Computed anisotropy-loop holonomy angle
against $\varepsilon_1\varepsilon_2$ for stretch--shear step pairs,
compared with the curvature prediction
$\theta=\varepsilon_1\varepsilon_2/4$
(proposition~\ref{prop:bch}); departures at the largest steps are the
expected $O(\varepsilon^3)$ corrections.  (b)~Directivity of an
$x_1x_2$-quadrupole model before and after loop transport at
$\varepsilon_1=\varepsilon_2=0.4$: the lobe rotates rigidly by
$\theta=2.49^\circ$ (leading order $2.29^\circ$) with shape preserved to
better than $10^{-10}$ (table~\ref{tab:verification}); vertical lines
mark the lobe maxima.}
\label{fig:theta}
\end{figure}

\begin{figure}
\centering
\includegraphics[width=0.55\textwidth]{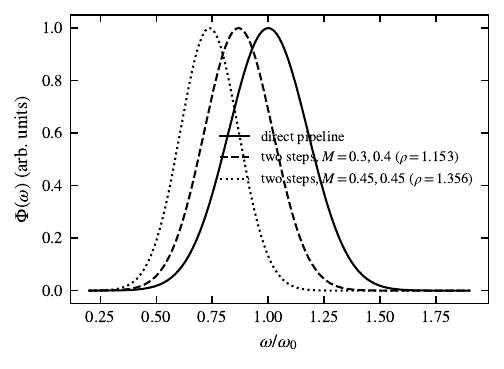}
\caption{Frequency channel: a model source spectrum (solid) and its
loop transports for two two-step collinear pipelines,
equation~\eqref{eq:pipelines}.  The dilation is rigid; predicted peak
frequencies differ from the direct pipeline by the exact factors
$\rho=1.153$ ($M=0.3,0.4$) and $\rho=1.356$ ($M=0.45,0.45$).}
\label{fig:spectra}
\end{figure}

\emph{Magnitudes.}  For small steps,
$\rho-1\simeq M_1M_2\cos\varphi$ by theorem~\ref{thm:rho}: two
$M=0.3$ steps give a $9\%$ systematic spectral misalignment.  The
worst-case envelope is large: two equal collinear steps at $M=0.45$ give
$\rho=1.356$, a $36\%$ ($\approx0.44$-octave) displacement of every
spectral feature (figure~\ref{fig:spectra}), and $\rho$ diverges at the
horizon.  These are not modelling errors: they are exact, systematic,
computable properties of the re-gauging path, and \eqref{eq:pipelines}
predicts them with no free parameters.  For the rotation channel,
$\theta\simeq\varepsilon_1\varepsilon_2/4$ per stretch--shear loop of
relative medium anisotropy $\varepsilon$; at $\varepsilon\sim0.2$,
$\theta\sim0.6^{\circ}$.  We stress that the mapping from modelled
Reynolds-stress anisotropy to effective-medium anisotropy $\delta C$ is
an open modelling link (remark~\ref{rem:goldstein}), so the rotation
magnitudes are parametric where the dilation magnitudes are not.

\subsection{The far field of a generalised medium and open paths}
\label{ssec:anisofar}

Theorem~\ref{thm:pipelines} assumed loops based at the quiescent analogy.
Removing that restriction requires the far field radiated \emph{in} an
admissible generalised medium, where group-velocity and wavevector
directions part ways.  The canonical map supplies it in closed form;
remarkably, the data $(A,p,\bq)$ of proposition~\ref{prop:w3} turn out
to be precisely the geometric data of the radiation problem.

\begin{lemma}[far field of a generalised medium]\label{lem:anisofar}
Let $(\bv,C)$ be admissible, $A=C-\bv\bv\T$, and let $p_\omega$ be the
outgoing (limiting-absorption) solution of
$P_{Q(\bv,C)}\bigl(p_\omega\,\e^{-\im\omega t}\bigr)
 =s\,\e^{-\im\omega t}$ with $s$ compactly supported ($d=3$).  Then, as
$r\to\infty$ along fixed $\bn$,
\begin{equation}
  p_\omega(r\bn)
  =\frac{\hat s\bigl(\bk_*(\bn,\omega)\bigr)\,
         \e^{\,\im(\bk_*\cdot\,\bn)\,r}}
        {4\pi r\,\sqrt{\det A}\;\sqrt{\bn\T A^{-1}\bn}}
   \;+\;O(r^{-2}),
  \label{eq:anisofar}
\end{equation}
where, with $\boldsymbol{\nu}(\bn)=\bn/\sqrt{\bn\T A^{-1}\bn}$ and $p$
as in proposition~\ref{prop:w3},
\begin{equation}
  \bk_*(\bn,\omega)=\omega\,A^{-1}
   \Bigl(\frac{\boldsymbol{\nu}}{p}-\bv\Bigr),
  \qquad
  \bk_*\!\cdot\bn
  =\omega\Bigl(\frac{\sqrt{\bn\T A^{-1}\bn}}{p}-\bv\T A^{-1}\bn\Bigr).
  \label{eq:kstar}
\end{equation}
Moreover $\bk_*$ is characterised intrinsically: it is the unique
wavevector on the outgoing sonic ellipsoid
\begin{equation}
  E_\omega:\quad \bk\T A\,\bk+2\omega\,\bv\!\cdot\!\bk=\omega^{2},
  \qquad\text{equivalently}\quad
  \omega-\bv\!\cdot\!\bk=+\sqrt{\bk\T C\,\bk},
  \label{eq:ellipsoid}
\end{equation}
whose group velocity
$\bv_g(\bk)=\bv+C\bk/\sqrt{\bk\T C\bk}$ is positively parallel to
$\bn$; the group speed is
$|\bv_g|=\bigl[\sqrt{\bn\T A^{-1}\bn}\,\mu\bigr]^{-1}$ with
$\mu=1/p-\bv\T A^{-1}\boldsymbol{\nu}>0$.
\end{lemma}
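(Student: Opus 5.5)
The plan is to reduce the radiation problem in the generalised medium to the free-space Helmholtz problem by means of the canonical map $w_3(\bv,C)$ of proposition~\ref{prop:w3}. Then we read the far field off the free Green's function and pull it back. Write $w_3$ in the block form \eqref{eq:blockform}, so that $t'=pt+\bq\T\bx$ and $\bx'=S\bx$, with $S=cA^{-1/2}$, $\bq=pA^{-1}\bv$ and $p=(1+\bv\T A^{-1}\bv)^{-1/2}$. By \eqref{eq:congruence} and $w_3Q w_3\T=\Qstar$, the equation $P_{Q}(p_\omega\e^{-\im\omega t})=s\e^{-\im\omega t}$ becomes $\Box'$ acting on the transported field. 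By the harmonic pullback \eqref{eq:pullback-harmonic}, with $t=(t'-\bq\T\bx)/p$, that field is
\begin{equation*}
 \e^{-\im(\omega/p)t'}\,P(\bx'),\qquad P(\bx')=p_\omega(S^{-1}\bx')\,\e^{\im(\omega/p)\bq\T S^{-1}\bx'} .
\end{equation*}
Hence $-c^2(\Delta'+k'^2)P=\tilde s$, with $k'=\omega/(pc)$ and $\tilde s(\bx')=s(S^{-1}\bx')\,\e^{\im(\omega/p)\bq\T S^{-1}\bx'}$. Two checks come first. The sign of the time coefficient is $p>0$, so $\omega\mapsto\omega+\im0$ maps to $\omega'+\im0$, and the limiting-absorption (outgoing) solution corresponds to the outgoing free solution. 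Compact support of $s$ is also preserved.

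Next, write $P=\tilde s*\e^{\im k'|\bx'|}/(4\pi c^2|\bx'|)$ and take the standard far-field expansion. This gives
\begin{equation*}
 P\sim \e^{\im k'|\bx'|}\,\widehat{\tilde s}(k'\hat\bx')/(4\pi c^2|\bx'|),
\end{equation*}
with an $O(|\bx'|^{-2})$ remainder uniform in the direction, since the source is compactly supported. Substitute $\bx'=rS\bn$, so that $|\bx'|=cr\sqrt{\bn\T A^{-1}\bn}$ and $\hat\bx'=A^{-1/2}\bn/\sqrt{\bn\T A^{-1}\bn}=A^{-1/2}\boldsymbol\nu$. Convert $\widehat{\tilde s}$ back to $\hat s$ by the change of variables $\bx'=S\bm y$. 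The Jacobian $\det S=c^3/\sqrt{\det A}$ supplies the factor $\sqrt{\det A}$. After multiplying by the inverse phase $\e^{-\im(\omega/p)\bq\T\bx}$, the argument of $\hat s$ becomes
\begin{equation*}
 S\T k'\hat\bx'-(\omega/p)\bq=(\omega/p)A^{-1}\boldsymbol\nu-\omega A^{-1}\bv=\bk_* .
\end{equation*}
Collecting the phases, $k'|\bx'|-(\omega/p)\bq\T r\bn$ gives exactly $(\bk_*\!\cdot\!\bn)\,r$ as in \eqref{eq:kstar}. The amplitude constants ($c^2$ from the operator, $c$ from $|\bx'|$, $c^3$ from $\det S$) should cancel to leave \eqref{eq:anisofar}. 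I expect this bookkeeping, including the uniformity of the remainder and the sign conventions of $\hat s$, to be the main obstacle, and I would verify it first in the classical case $C=c^2I$ with $\bv=U\bm e_1$ against lemma~\ref{lem:PGL}.

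For the intrinsic characterisation, substitute $\bk_*$ into \eqref{eq:ellipsoid}. We have $A\bk_*+\omega\bv=\omega\boldsymbol\nu/p$, so
\begin{equation*}
 \bk\T A\bk+2\omega\bv\!\cdot\!\bk=(A\bk+\omega\bv)\T A^{-1}(A\bk+\omega\bv)-\omega^2\bv\T A^{-1}\bv=\omega^2\bigl(\boldsymbol\nu\T A^{-1}\boldsymbol\nu/p^2-\bv\T A^{-1}\bv\bigr)=\omega^2 ,
\end{equation*}
using $\boldsymbol\nu\T A^{-1}\boldsymbol\nu=1$ and $1/p^2=1+\bv\T A^{-1}\bv$. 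The gradient of the dispersion form is $2(A\bk+\omega\bv)=2\omega\boldsymbol\nu/p$, which is positively parallel to $\bn$. Since $A\succ0$, $E_\omega$ is a strictly convex ellipsoid whose outward normal map is a bijection onto the sphere, and this gives uniqueness. It remains to relate the normal to $\bv_g$. On the outgoing branch $\omega-\bv\!\cdot\!\bk=\sqrt{\bk\T C\bk}>0$, so differentiating $(\omega-\bv\!\cdot\!\bk)^2=\bk\T C\bk$ at fixed $\omega$ shows $\bv_g\propto A\bk+\omega\bv$ with a positive factor. Positivity of $\omega-\bv\!\cdot\!\bk_*=\omega\mu\sqrt{\bn\T A^{-1}\bn}$ then follows from $\mu>0$, which I would derive by Cauchy--Schwarz in the $A^{-1}$ inner product together with $\bv\T A^{-1}\bv<1/p^2$. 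Finally, the group speed follows by computing $\bn\!\cdot\!\bv_g$ with the same identities.
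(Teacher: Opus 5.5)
Your proposal follows the paper's proof essentially step for step. It reduces by $w_3$, with the radiation condition carried over because $p>0$, and takes the free-space far field. It pulls back with Jacobian $|\det S|=c^3/\sqrt{\det A}$ and phase $\e^{-\im(\omega/p)\bq\cdot\bx}$. It then uses $A\bk_*+\omega\bv=\omega\boldsymbol{\nu}/p$, Gauss-map uniqueness on the ellipsoid, and Cauchy--Schwarz in the $A^{-1}$ inner product for $\mu>0$. The only cosmetic difference is that you do the reduction in physical space through the harmonic pullback and the Green's-function convolution, rather than through the Fourier-symbol congruence.

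One slip to fix: $\omega-\bv\!\cdot\!\bk_*=(\omega/p)\,\mu$, not $\omega\mu\sqrt{\bn\T A^{-1}\bn}$. The latter is the value of $\bk_*\!\cdot\!\bn$ and does not have the dimensions of a frequency. Your positivity argument is unaffected, but you need the correct value to obtain $\bv_g(\bk_*)=(A\bk_*+\omega\bv)/(\omega-\bv\!\cdot\!\bk_*)=\boldsymbol{\nu}/\mu$. That in turn gives the stated group speed $\bigl[\sqrt{\bn\T A^{-1}\bn}\,\mu\bigr]^{-1}$.
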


\begin{proof}
\emph{Step 1: exact reduction of the Fourier representation.}  Write
the outgoing solution as
\begin{equation}
  p_\omega(\bx)=\int
  \frac{\hat s(\bk)\,\e^{\im\bk\cdot\bx}}{h(\bk,\omega)}
  \,\frac{\dd^3\bk}{(2\pi)^3},
  \qquad
  h(\bk,\omega)\equiv\bk\T C\bk-(\omega-\bv\!\cdot\!\bk)^{2},
  \qquad \omega\to\omega+\im0 .
  \label{eq:fourier-rep}
\end{equation}
Under the exact substitution $\bk=S\T\bk'-\omega'\bq$,
$\omega=p\,\omega'$ (the covector action of $w_3$), the symbol is
carried to the plain one,
\begin{equation}
  h(\bk,\omega)\;=\;c^{2}|\bk'|^{2}-\omega'^{2},
  \label{eq:symbol-congruence}
\end{equation}
which is the congruence
$\kappa\T Q\kappa=\kappa'^{\mathrm T}\Qstar\kappa'$ for
$\kappa=w_3\T\kappa'$ (proposition~\ref{prop:w3}; verified in general
symbolic form, item~V8a); the Jacobian is $\lvert\det S\rvert$ and,
because $p>0$, $\omega+\im0$ is carried to $\omega'+\im0$: \emph{the
radiation condition transports}.  (This is where admissibility enters:
at the horizon of theorem~\ref{thm:horizon} no such transport exists.)
Consequently
\begin{equation}
  p_\omega(\bx)=\e^{-\im\omega'\bq\cdot\bx}\,p'_{\omega'}(S\bx),
  \qquad
  \hat s'(\bk')=\lvert\det S\rvert\,
   \hat s\bigl(S\T\bk'-\omega'\bq\bigr),
  \label{eq:field-correspondence}
\end{equation}
where $p'$ is the outgoing plain-medium solution with source spectrum
$\hat s'$.

\emph{Step 2: isotropic far field.}
$p'_{\omega'}(r'\bn')=\e^{\im\omega'r'/c}\,
 \hat s'\bigl((\omega'/c)\bn'\bigr)/(4\pi c^{2}r')+O(r'^{-2})$
\citep{goldstein1976}.

\emph{Step 3: pullback.}  Along $\bx=r\bn$ we have
$S\bx=r\,|S\bn|\,\bn'$ with $\bn'=S\bn/|S\bn|$; assembling the phases
of \eqref{eq:field-correspondence},
\begin{equation}
  -\omega'\bq\cdot\bn+\frac{\omega'}{c}\,|S\bn|
  =\bk_*\!\cdot\bn,
  \qquad
  \bk_*=\frac{\omega'}{c}\,S\T\bn'-\omega'\bq,
\end{equation}
and $S\T S=c^{2}A^{-1}$, $\bq=pA^{-1}\bv$ reduce $\bk_*$ to
\eqref{eq:kstar} and the amplitude to
$\lvert\det S\rvert/(4\pi c^{2}|S\bn|\,r)
 =\bigl[4\pi r\sqrt{\det A\;\bn\T A^{-1}\bn}\,\bigr]^{-1}$.

\emph{Step 4: intrinsic identification.}  All checks are one-line
computations in the variables $A^{-1}$ (verified symbolically for
generic three-dimensional media, items V8b).  \emph{On shell:}
\begin{equation}
  \bk_*\T A\bk_*+2\omega\,\bv\!\cdot\!\bk_*
  =\omega^{2}\Bigl(\frac{\boldsymbol{\nu}\T A^{-1}\boldsymbol{\nu}}
   {p^{2}}-\bv\T A^{-1}\bv\Bigr)=\omega^{2},
\end{equation}
using $\boldsymbol{\nu}\T A^{-1}\boldsymbol{\nu}=1$ and
$1/p^{2}=1+\bv\T A^{-1}\bv$.  \emph{Branch:}
$\omega-\bv\!\cdot\!\bk_*=(\omega/p)\,\mu>0$ since
$\bv\T A^{-1}\boldsymbol{\nu}\le\sqrt{\bv\T A^{-1}\bv}<1/p$
(Cauchy--Schwarz in the $A^{-1}$ inner product).  \emph{Ray direction:}
for every $\bk$,
\begin{equation}
  C\bk+(\omega-\bv\!\cdot\!\bk)\,\bv
  \;=\;A\bk+\omega\bv
  \qquad(\text{since }C-A=\bv\bv\T),
\end{equation}
so $\bv_g$ is parallel to the outward normal
$\tfrac12\nabla_{\bk}\bigl(\bk\T A\bk+2\omega\bv\!\cdot\!\bk\bigr)$ of
$E_\omega$; at $\bk_*$ this normal is
$A\bk_*+\omega\bv=\omega\boldsymbol{\nu}/p\parallel\bn$, and since
$E_\omega$ is an ellipsoid ($A\succ0$) its Gauss map is a bijection,
$\bk_*$ is unique.  \emph{Group speed:} from
$C\bk_*=\omega(\boldsymbol{\nu}/p-\bv)+(\bv\!\cdot\!\bk_*)\bv$ and
$\sqrt{\bk_*\T C\bk_*}=(\omega/p)\mu$.

\emph{Step 5: consistency and numerical validation.}  For $A=c^2I$,
$\bv=\bm0$, \eqref{eq:anisofar} is the standard formula.  For the
classical convected medium ($C=c^2I$, $\bv=U\bm e_1$),
$\det A=c^{6}\beta^{2}$ and the amplitude reduces to
$\bigl[4\pi c^{2}r\sqrt{n_1^{2}+\beta^{2}|\bn_\perp|^{2}}\,\bigr]^{-1}$,
the classical uniform-flow monopole directivity with Prandtl--Glauert
radius $R_\beta=r(n_1^2+\beta^2|\bn_\perp|^2)^{1/2}$.  An end-to-end
numerical validation against a direct limiting-absorption FFT solution
of the anisotropic convected Helmholtz equation---a computation that
performs no change of variables---confirms \eqref{eq:anisofar} in
amplitude, phase and $\bn$-dependence (item~V8c).
\end{proof}

\begin{corollary}[open-path pipeline law]\label{cor:openpath}
Let two pipelines carry the same source-level model along different
admissible paths to a common endpoint $(\bv,C)$, with relative holonomy
$D=\rho(1\oplus R)$ in the reduced frame, and let
$\mathfrak r(\bn)=S^{-1}R\T S\bn/\lvert S^{-1}R\T S\bn\rvert$ denote the
conjugated direction map.  Then their far-field predictions in the
endpoint medium satisfy, exactly ($d=3$),
\begin{equation}
  I_2(\bn,\omega)
  =\frac{\lvert S\,\mathfrak r(\bn)\rvert^{2}}{\lvert S\bn\rvert^{2}}\;
   I_1\bigl(\mathfrak r(\bn),\ \rho\,\omega\bigr).
  \label{eq:openpath}
\end{equation}
\end{corollary}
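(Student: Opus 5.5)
The plan is to reduce the statement to the quiescent two-pipeline law (theorem~\ref{thm:pipelines}) by conjugating with the canonical map $w_3$ of the common endpoint $(\bv,C)$. Both pipelines end at the same medium, so they share the same data $(S,p,\bq)$ of proposition~\ref{prop:w3}. Step~1 of the proof of lemma~\ref{lem:anisofar} then gives, for each pipeline, the exact field correspondence \eqref{eq:field-correspondence}. It reads $p_\omega(\bx)=\e^{-\im\omega'\bq\cdot\bx}p'_{\omega'}(S\bx)$ with $\omega=p\,\omega'$, where $p'$ is the plain-medium outgoing solution driven by the reduced source. The phase factor has unit modulus, so it drops out of every pressure spectral density.

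The first step is to pass from deterministic to statistical form. I would take second moments of \eqref{eq:field-correspondence}, which replaces $|\hat s'|^2$ by the reduced model spectrum $\Phi'$. Steps~2--3 of lemma~\ref{lem:anisofar} then express the endpoint directivity as
\[
I_k(\bn,\omega)=\frac{K'}{|S\bn|^{2}}\,I'_k\bigl(\bn',\omega'\bigr),
\qquad \bn'=\frac{S\bn}{|S\bn|},\qquad \omega'=\frac{\omega}{p}.
\]
Here $I'_k$ is the quiescent directivity of pipeline $k$ in the reduced frame. The constant $K'$ collects $|\det S|^2$ and $(4\pi c^2)^{-2}$, and it is the same for both pipelines. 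The geometric factor $|S\bn|^{-2}$ is what remains of the amplitude $\bigl[4\pi c^2|S\bn|\,r\bigr]^{-1}$ after squaring and multiplying by $r^2$.

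In the reduced frame the two models differ by the relative holonomy $D=\rho(1\oplus R)$. Lemma~\ref{lem:csd} and theorem~\ref{thm:pipelines} therefore give $I'_2(\bn',\omega')=I'_1(R\T\bn',\rho\,\omega')$.

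It remains to translate this back to endpoint directions. By definition, $\mathfrak r(\bn)$ is the endpoint direction whose image under $S$ is parallel to $R\T S\bn$. Hence $S\mathfrak r(\bn)/|S\mathfrak r(\bn)|=R\T S\bn/|R\T S\bn|=R\T\bn'$, using that $R$ is orthogonal. For frequencies, since $p$ is common to both pipelines, $\rho\,\omega'$ is the reduced frequency corresponding to the endpoint frequency $\rho\,\omega$. Applying the display above to pipeline~1 at $(\mathfrak r(\bn),\rho\omega)$ gives
\[
I_1(\mathfrak r(\bn),\rho\omega)=\frac{K'}{|S\mathfrak r(\bn)|^{2}}\,I'_1(R\T\bn',\rho\omega').
\]
Eliminating $I'_1$ between this relation and the pipeline-2 relation yields \eqref{eq:openpath}, with the weight $|S\mathfrak r(\bn)|^2/|S\bn|^2$.

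The main obstacle I expect is the bookkeeping of the statistical step, not the geometry. Two points need care. First, the far-field sampling \eqref{eq:farfield-sampling} and lemma~\ref{lem:anisofar} must be combined consistently, so that the Jacobian $|\det S|$ and the amplitude convention of lemma~\ref{lem:csd} appear identically in both pipelines and cancel. The restriction to $d=3$, which makes $\rho^{-4}|\det D|=1$, is used exactly here. Second, I must confirm that "same source-level model carried along different paths" means the two reduced-frame models differ precisely by $D$. This is the definition of relative holonomy via theorem~\ref{thm:structure}, composed with the common closure $w_3$. I would check this first by writing each path composite as $W_k=D_k w_3$ and noting that the relative holonomy is $D=D_2D_1^{-1}$.
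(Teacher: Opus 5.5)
Your proposal is correct and is essentially the paper's proof. Both pipelines are pulled back through the common canonical map $w_3$ of the endpoint, so each endpoint directivity is a common weight $\propto|S\bn|^{-2}$ times the reduced-frame data at $(\omega/p,\,S\bn/|S\bn|)$. The reduced-frame models are then related by $D$: you route this through theorem~\ref{thm:pipelines}, while the paper applies lemma~\ref{lem:csd} directly, which amounts to the same thing. Finally, $\mathfrak r$ turns the ratio of weights into $|S\mathfrak r(\bn)|^{2}/|S\bn|^{2}$.

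The only bookkeeping item you leave out is the fixed-observer time-dilation factor $1/p$ in the pressure spectral density. It is harmless, because, like your $K'$, it does not depend on $\bn$, $\omega$ or the pipeline.
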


\begin{proof}
\emph{Step 1.}  By lemma~\ref{lem:anisofar} in statistical form (the
sampling structure of \S\ref{sec:stat} carries over verbatim), each
pipeline predicts
\begin{equation}
  I_i(\bn,\omega)=K(\bn)\,
  \Phi'_i\Bigl(\frac{\omega}{p},\ \frac{\omega}{pc}\,
  \frac{S\bn}{|S\bn|}\Bigr),
  \qquad K(\bn)\propto|S\bn|^{-2},
\end{equation}
where the direction-dependent factor $K(\bn)$ collects the
canonical-map Jacobians and the fixed-observer time dilation, is
identical for both pipelines, and reduces to the constant $K$ of
\eqref{eq:farfield-sampling} at an isotropic endpoint.

\emph{Step 2.}  By lemma~\ref{lem:csd}, which applies verbatim,
$\Phi'_2(\kappa')=\Phi'_1(D\T\kappa')$, so
\begin{equation}
  I_2(\bn,\omega)=K(\bn)\,
  \Phi'_1\Bigl(\frac{\rho\omega}{p},\ \frac{\rho\omega}{pc}\,
  \frac{R\T S\bn}{|S\bn|}\Bigr).
\end{equation}

\emph{Step 3.}  The direction $\bm m=\mathfrak r(\bn)$ satisfies
$S\bm m/|S\bm m|=R\T S\bn/|S\bn|$ ($R$ being orthogonal), whence the
right-hand side equals
$[K(\bn)/K(\bm m)]\,I_1(\bm m,\rho\omega)$, and
$K(\bn)/K(\bm m)=|S\bm m|^{2}/|S\bn|^{2}$, which is
\eqref{eq:openpath}.
\end{proof}

\begin{remark}
On the reduced sphere the discrepancy between the pipelines is the rigid
rotation $R$; on the physical direction sphere it is the conjugated,
generally non-isometric map $\mathfrak r$---patterns are rotated rigidly
in the geometry of the sonic ellipsoid, not of the observation
sphere---and the prefactor in \eqref{eq:openpath} is exactly the
solid-angle Jacobian that conservation of radiated power along ray
tubes requires.  At an isotropic endpoint ($S\propto I$) the corollary
reduces to theorem~\ref{thm:pipelines}.
\end{remark}

\subsection{Scope and caveats}\label{ssec:caveats}

One restriction of theorem~\ref{thm:pipelines} has been removed and two
limitations remain.
(i)~\emph{Anisotropic endpoints.}  As stated, the theorem concerns loops
based at the quiescent analogy, where \eqref{eq:farfield-sampling}
applies at both ends and all anisotropic bookkeeping is interior.
Lemma~\ref{lem:anisofar} removes the restriction for scalar sources:
the far field of a generalised medium samples the source spectrum on
the sonic ellipsoid along group rays---where group-velocity and
wavevector directions differ (cf.\ the disturbance-energy transport of
\citealp{myers1991})---in exactly the form transported by the canonical
map, and corollary~\ref{cor:openpath} extends the pipeline law to open
paths.
(ii)~\emph{Scalar sources.}  Lighthill-type tensor sources transport
with an additional $R\otimes R$ action on indices; the rigid-rotation
structure survives, but the polarisation bookkeeping is deferred.
(iii)~\emph{Dimension.}  The clean cancellation in
\eqref{eq:csd-transport} is specific to $d=3$; planar problems carry a
residual $\rho^{-1}$ amplitude weight.

\section{Discussion}\label{sec:discussion}

\subsection{What the geometry says about the ``true source'' debate}

The debate over the true sources of aerodynamic sound
\citep{ffowcswilliams1982,tam1998,goldstein2005} has always contained two
intertwined claims: that source attribution is a matter of choice, and
that the choice matters in practice.  The present framework separates and
quantifies both.  At the exact level the attribution question is
\emph{empty} (proposition~\ref{prop:trivial}), and re-attribution is
kinematic bookkeeping (lemma~\ref{lem:kinematic})---this is the rigorous
core of the ``fiction'' half of \citet{ffowcswilliams1982}.  At the level
of use, however, the ambiguity is not amorphous.  Within the uniform
sector the space of analogies is \emph{flat}
(proposition~\ref{prop:flat}): a continuum notion of ``the same model in
a different analogy'' exists canonically, and everything that finite
re-gauging pipelines disagree about is the computable discretisation
defect $(\rho,R)$ of theorems~\ref{thm:rho}
and~\ref{thm:structure}---including its divergence at the sonic horizon
(theorem~\ref{thm:horizon}).  Across anisotropy directions, by contrast,
there is an irreducible curvature obstruction
(theorem~\ref{thm:curvature}): no path-independent identification of
source models exists at all, and the ambiguity has a definite tensorial
shape, the commutator of medium deformations.  We suggest this is the
precise mathematical content of the intuition, expressed differently by
\citet{tam1998} and \citet{goldstein2005}, that the search for a unique
physical source is ill-posed while the choice of analogy remains
consequential: the choice lives on a space with flat directions, where it
is gauge in the strict sense, and curved directions, where it is not.

\subsection{Relation to the generalised analogy and to non-radiating
source theory}

Theorem~\ref{thm:nonclosure} gives the generalised acoustic analogy of
\citet{goldstein2003} a structural, rather than merely practical,
justification: anisotropic effective media are forced by closure of
re-gauging, independently of any modelling considerations.  The known
delicacy of parallel-flow analogies \citep{pridmorebrown1958,
goldstein2001,musafir2007}---instability of Lilley-type operators and
the sensitivity of their source decompositions---sits naturally in this
picture as behaviour near degenerate loci of analogy space, of which the
horizon of theorem~\ref{thm:horizon} is the exactly computable uniform
prototype.  The documented sensitivity of educed source fields to the
choice of analogy and to input errors \citep{samanta2006,sinayoko2011}
is, in the present language, the practical trace of the cocycle
\eqref{eq:cocycle} and of the transport defects computed in
\S\S\ref{sec:holonomy} and~\ref{sec:stat}.  The complementary, fixed-operator part of the story is the
theory of non-radiating sources \citep{jessel1973,kempton1976,
bleistein1977,porterdevaney1982,marengo2000}: \citet{musafir2013},
following \citet{doak1988}, classifies the fibre of equivalent sources
over a single analogy, and the term-shifting freedom among Lilley source
forms catalogued by \citet{musafir2007} is precisely a path within that
fibre.  The present paper supplies the transversal geometry---the maps
\emph{between} fibres---and shows where its obstructions lie.  The
observability limits of holographic and inverse methods
\citep{maynard1985} are consistent with this division: only the
sonic-sphere trace is invariant, exactly as \eqref{eq:pipelines}
transports it.

\subsection{Open problems}

Five problems are open, in what we judge to be increasing order of
depth.
(1)~\emph{Finite refinement.}  Prove (or refute)
conjecture~\ref{conj:evasion} on horizon evasion by $n$-step paths.
(2)~\emph{Tensor sources.}  Extend theorem~\ref{thm:pipelines} and
lemma~\ref{lem:anisofar} to tensor source models, whose polarisation
transports nontrivially on the sonic ellipsoid; the scalar
anisotropic-endpoint case is closed by \S\ref{ssec:anisofar}.
(3)~\emph{Torsion.}  The canonical connection carries torsion in mixed
boost--anisotropy planes (theorem~\ref{thm:curvature}); its physical
meaning is unknown.
(4)~\emph{Variable coefficients.}  Genuinely sheared base flows
\citep{pridmorebrown1958,dowling1978,goldstein2003} replace the matrix
group by Fourier integral operators; the uniform theory developed here is
the principal-symbol shadow of that theory, and the reorganisations of
the source around vorticity and dilatation
\citep{ribner1962,powell1964,howe1975,mohring1978,doak1989,crow1970}
additionally change the variable map $\varphi$, adding a layer to the
fibre structure.
(5)~\emph{Statistical invariants.}  The transport of second-order
statistics in lemma~\ref{lem:csd} suggests that the invariant content of
a statistical source model under re-gauging is microlocal---an
H-measure or microlocal defect measure \citep{tartar1990,gerard1991}
attached to the analogy class rather than to any single analogy.
Developing that programme, and connecting it to the practice reviewed by \citet{goldstein1984},
\citet{tam1998} and \citet{jordancolonius2013}, is in our view the
natural continuation.

\section{Acknoledgement}
The author thanks Lorna~J.~Ayton (Department of
Applied Mathematics and Theoretical Physics, University of Cambridge) for
valuable discussions.

\section{Declaration of the use of AI tools}The author used a large
language model, Apertus-8B-Instruct-2509 \citep{swissai2025apertus} (Swiss AI
Initiative; hosted on the Blablador service of the J\"ulich Supercomputing
Centre, Helmholtz Association; \url{https://huggingface.co/swiss-ai/Apertus-8B-Instruct-2509};
accessed July 2026), for English-language grammar and phrasing checks of the
author's own draft text. The tool was not used to generate scientific content,
to produce or analyse data, or to generate figures; all derivations, numerical
results, figures and interpretations are the author's own. Accountability for
the final text rests with the author, who has checked the manuscript, and in
particular the references, for any unintended consequences of this use.

\medskip
\noindent\textbf{Data availability.} All symbolic and numerical
verifications reported in this paper are reproduced by the supplementary
scripts \texttt{supplementary\_verification.py} and
\texttt{supplementary\_verification\_farfield.py} (SymPy/NumPy), which
accompany the manuscript.

\appendix

\section{Verification record}\label{app:verification}

Every identity asserted in the paper was verified independently of its
derivation: symbolic claims by exact rational--radical arithmetic in
SymPy, numerical claims in double-precision NumPy through the full
transport pipeline.  The supplementary scripts print each residual; all
residuals below are identically zero (symbolic) or at machine precision
(numerical).

\begin{itemize}
\item[V1] Kinematic cocycle identity \eqref{eq:kinematic}, three space
dimensions, arbitrary fields.
\item[V2] Fixed-frequency reduction \eqref{eq:PGL}, including the
uniqueness of the phase $\alpha$ (lemma~\ref{lem:PGL}).
\item[V3] (a) Operator congruence \eqref{eq:wreduces} for the elementary
map; (b--d) the perpendicular endpoint data
\eqref{eq:endpoint-perp}--\eqref{eq:A3}: $\det C_3-c^4=0$,
$A_3=\operatorname{diag}(c^2-u_1^2,c^2-u_2^2)$, $q_{00}=1$
(theorem~\ref{thm:nonclosure}).
\item[V4] (a) $\bigl(W^{-1}\bigr)_{00}=\gamma_1\gamma_2$
(lemma~\ref{lem:firstrow}); (b) the closed form \eqref{eq:rho-exact} for
general step directions; (c) the identity
$q_{00}=\gamma_1^2\gamma_2^2(1-|\bw|^2/c^2)$; (d) the exact coincidence
of $\bw$ with the relativistic composition for perpendicular steps
(corollary~\ref{cor:einstein}).
\item[V5] (a) The collinear conformal factor in \eqref{eq:collinear-v3};
(b) the exact factorised forms \eqref{eq:collinear-A}; (c) the symbolic
solution of the equal-step horizon, returning
$u=c(\sqrt5-1)/2$ \eqref{eq:golden}.
\item[V6] (a) The boost generator \eqref{eq:bgen} along three
directions; (b--d) the brackets
\eqref{eq:brackets-bm}--\eqref{eq:brackets-aa}
(theorem~\ref{thm:curvature}).
\item[V7] (a) Boost-loop holonomy: $\rho$ from the full numerical
pipeline agrees with \eqref{eq:rho-exact} to 15 significant digits and
$|D-\rho I|_{\max}=2.2\times10^{-16}$; (b) anisotropy-loop holonomy:
$\rho=1$, $\theta\to\varepsilon_1\varepsilon_2/4$
(proposition~\ref{prop:bch}); (c) far-field directivity of a quadrupole
model rotates rigidly by $\theta$ under loop transport, to the angular
grid resolution (theorem~\ref{thm:pipelines};
table~\ref{tab:verification}).
\item[V8] The far-field lemma (lemma~\ref{lem:anisofar}): (a) the
canonical reduction $w_3\,Q(\bv,C)\,w_3\T=\Qstar$ in general symbolic
form; (b) the intrinsic identifications for generic three-dimensional
media: $\bk_*\in E_\omega$,
$\omega-\bv\!\cdot\!\bk_*=(\omega/p)\mu$,
$\bk_*\T C\bk_*=(\omega-\bv\!\cdot\!\bk_*)^2$,
$\bv_g(\bk_*)=\boldsymbol{\nu}/\mu$, and the radial phase identity in
\eqref{eq:kstar}; (c) end-to-end numerical validation of
\eqref{eq:anisofar} against a direct limiting-absorption FFT solution of
the anisotropic convected Helmholtz equation ($M\approx0.32$, fully
anisotropic $C$), agreeing to $0.35\%$ in amplitude and
$0.033\,$rad in phase across downstream, transverse and oblique
directions at $k_*r\approx28$--$48$, the upstream direction
($k_*r\approx55$) converging at the rate of the periodic-image damping
as the absorption parameter is increased (documented in the script); and
the group-ray characterisation confirmed by direct search on the
ellipsoid, matching $\bk_*$ to $10^{-6}$.  Script
\texttt{supplementary\_verification\_farfield.py}.
\end{itemize}

\bibliographystyle{plainnat}
\bibliography{references}

\end{document}